\begin{document} 
\theoremstyle{plain}
\newtheorem{thm}{Theorem}
\newtheorem{lem}[thm]{Lemma}
\newtheorem{cor}[thm]{Corollary}
\newtheorem{prop}[thm]{Proposition}
\newtheorem{remark}[thm]{Remark}
\newtheorem{rmk}[thm]{Remark}
\newtheorem{defn}[thm]{Definition}
\newtheorem{ex}[thm]{Example}
\newtheorem{conj}[thm]{Conjecture}
\newcommand{\eq}[2]{\begin{equation}\label{#1}#2 \end{equation}}
\newcommand{\ml}[2]{\begin{multline}\label{#1}#2 \end{multline}}
\newcommand{\ga}[2]{\begin{gather}\label{#1}#2 \end{gather}}
\newcommand{\mc}{\mathcal}
\newcommand{\mb}{\mathbb}
\newcommand{\surj}{\twoheadrightarrow}
\newcommand{\inj}{\hookrightarrow}
\newcommand{\red}{{\rm red}}
\newcommand{\codim}{{\rm codim}}
\newcommand{\rank}{{\rm rank}}
\newcommand{\Pic}{{\rm Pic}}
\newcommand{\Div}{{\rm Div}}
\newcommand{\Hom}{{\rm Hom}}
\newcommand{\im}{{\rm im}}
\newcommand{\Spec}{{\rm Spec \,}}
\newcommand{\Sing}{{\rm Sing}}
\newcommand{\Char}{{\rm char}}
\newcommand{\Tr}{{\rm Tr}}
\newcommand{\Gal}{{\rm Gal}}
\newcommand{\Min}{{\rm Min \ }}
\newcommand{\Max}{{\rm Max \ }}
\newcommand{\ti}{\times }
\newcommand{\sA}{{\mathcal A}}
\newcommand{\sB}{{\mathcal B}}
\newcommand{\sC}{{\mathcal C}}
\newcommand{\sD}{{\mathcal D}}
\newcommand{\sE}{{\mathcal E}}
\newcommand{\sF}{{\mathcal F}}
\newcommand{\sG}{{\mathcal G}}
\newcommand{\sH}{{\mathcal H}}
\newcommand{\sI}{{\mathcal I}}
\newcommand{\sJ}{{\mathcal J}}
\newcommand{\sK}{{\mathcal K}}
\newcommand{\sL}{{\mathcal L}}
\newcommand{\sM}{{\mathcal M}}
\newcommand{\sN}{{\mathcal N}}
\newcommand{\sO}{{\mathcal O}}
\newcommand{\sP}{{\mathcal P}}
\newcommand{\sQ}{{\mathcal Q}}
\newcommand{\sR}{{\mathcal R}}
\newcommand{\sS}{{\mathcal S}}
\newcommand{\sT}{{\mathcal T}}
\newcommand{\sU}{{\mathcal U}}
\newcommand{\sV}{{\mathcal V}}
\newcommand{\sW}{{\mathcal W}}
\newcommand{\sX}{{\mathcal X}}
\newcommand{\sY}{{\mathcal Y}}
\newcommand{\sZ}{{\mathcal Z}}
\newcommand{\A}{{\Bbb A}}
\newcommand{\B}{{\Bbb B}}
\newcommand{\C}{{\Bbb C}}
\newcommand{\D}{{\Bbb D}}
\newcommand{\E}{{\Bbb E}}
\newcommand{\F}{{\Bbb F}}
\newcommand{\G}{{\Bbb G}}
\renewcommand{\H}{{\Bbb H}}
\newcommand{\I}{{\Bbb I}}
\newcommand{\J}{{\Bbb J}}
\newcommand{\M}{{\Bbb M}}
\newcommand{\N}{{\Bbb N}}
\renewcommand{\P}{{\Bbb P}}
\newcommand{\Q}{{\Bbb Q}}
\newcommand{\R}{{\Bbb R}}
\newcommand{\T}{{\Bbb T}}
\newcommand{\U}{{\Bbb U}}
\newcommand{\V}{{\Bbb V}}
\newcommand{\W}{{\Bbb W}}
\newcommand{\X}{{\Bbb X}}
\newcommand{\Y}{{\Bbb Y}}
\newcommand{\Z}{{\Bbb Z}}
\newcommand{\pic}{{\text{Pic}(C,\sD)[E,\nabla]}}
\newcommand{\ocd}{{\Omega^1_C\{\sD\}}}
\newcommand{\oc}{{\Omega^1_C}}
\newcommand{\al}{{\alpha}}
\newcommand{\ta}{{\theta}}
\newcommand{\ve}{{\varepsilon}}
\newcommand{\lr}[2]{\langle #1,#2 \rangle}
\newcommand{\nnn}{\newline\newline\noindent}
\newcommand{\nn}{\newline\noindent}

\newcommand{\be}{\begin{equation}}
\newcommand{\ee}{\end{equation}}
\newcommand{\bea}{\begin{eqnarray}}
\newcommand{\eea}{\end{eqnarray}}
\newcommand{\beas}{\begin{eqnarray*}}
\newcommand{\eeas}{\end{eqnarray*}}

\def\One{\mathbb{I}}

\newcommand{\bl}{{\bf SB }}

\def\partdunce{{\;\raisebox{-40mm}{\epsfysize=80mm\epsfbox{partdunce.eps}}\;}}
\def\bintree{{\;\raisebox{-30mm}{\epsfysize=60mm\epsfbox{bintree.eps}}\;}}
\def\dunce{{\;\raisebox{-20mm}{\epsfysize=60mm\epsfbox{DunceParCut.eps}}\;}}
\def\OutT{{\;\raisebox{-20mm}{\epsfysize=60mm\epsfbox{OuterTriangle.eps}}\;}}
\def\Cell{{\;\raisebox{-30mm}{\epsfysize=75mm\epsfbox{Cell.eps}}\;}}
\def\VertT{{\;\raisebox{-20mm}{\epsfysize=60mm\epsfbox{VertTriangle.eps}}\;}}
\def\OutTwo{{\;\raisebox{-40mm}{\epsfysize=100mm\epsfbox{Out2.eps}}\;}}
\def\OutFeyn{{\;\raisebox{-20mm}{\epsfysize=50mm\epsfbox{OutFeyn.eps}}\;}}
\def\Spine{{\;\raisebox{-40mm}{\epsfysize=80mm\epsfbox{Spine.eps}}\;}}
\def\DunceVar{{\;\raisebox{-30mm}{\epsfysize=60mm\epsfbox{DunceVar.eps}}\;}}
\def\wthree{{\;\raisebox{-20mm}{\epsfysize=50mm\epsfbox{wthreemark.eps}}\;}}
\def\owt{{\;\raisebox{-30mm}{\epsfysize=70mm\epsfbox{outerweelthree.eps}}\;}}
\def\Gg{{\;\raisebox{-20mm}{\epsfysize=48mm\epsfbox{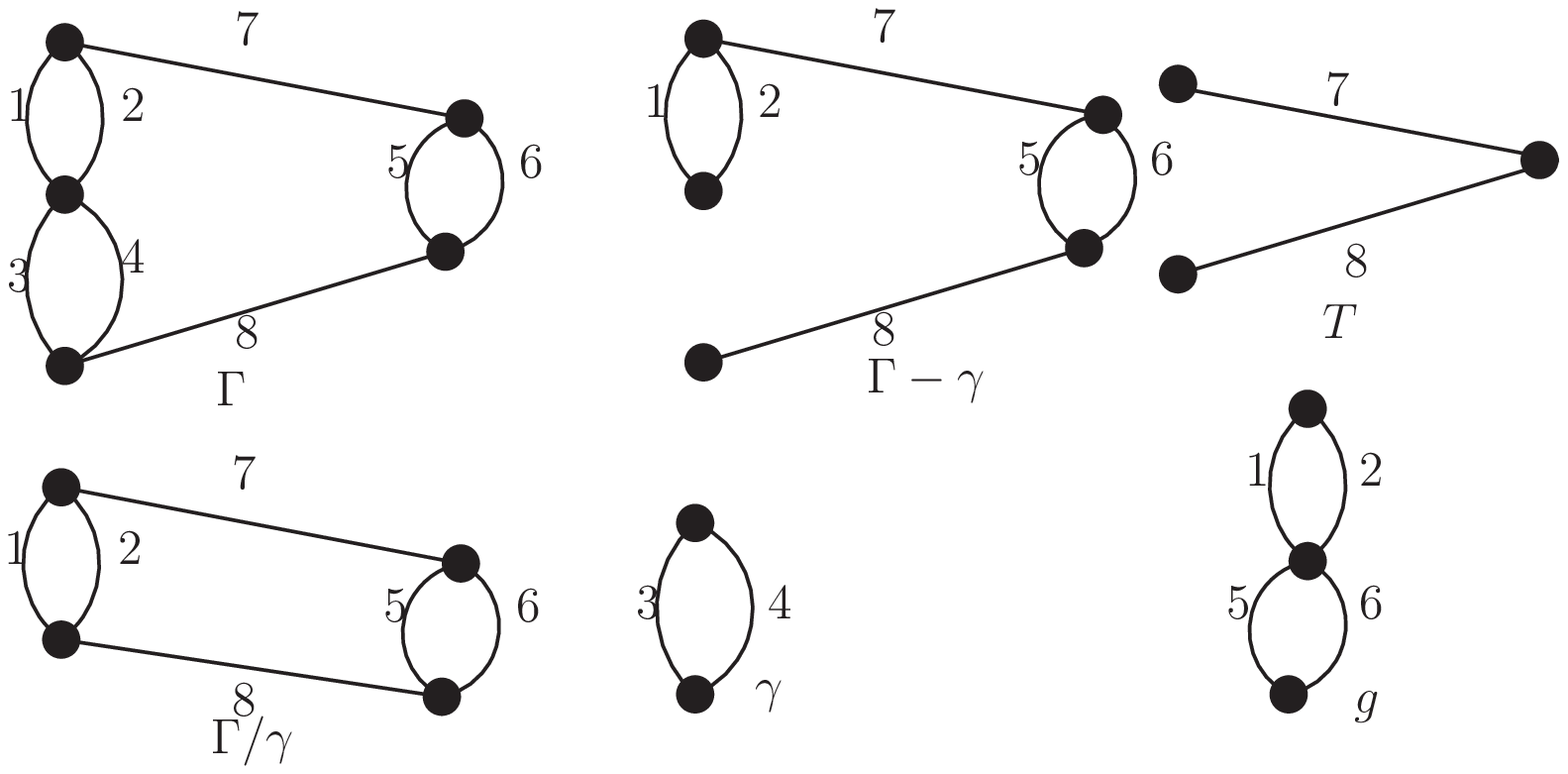}}\;}}
\def\ExCop{{\;\raisebox{-30mm}{\epsfysize=72mm\epsfbox{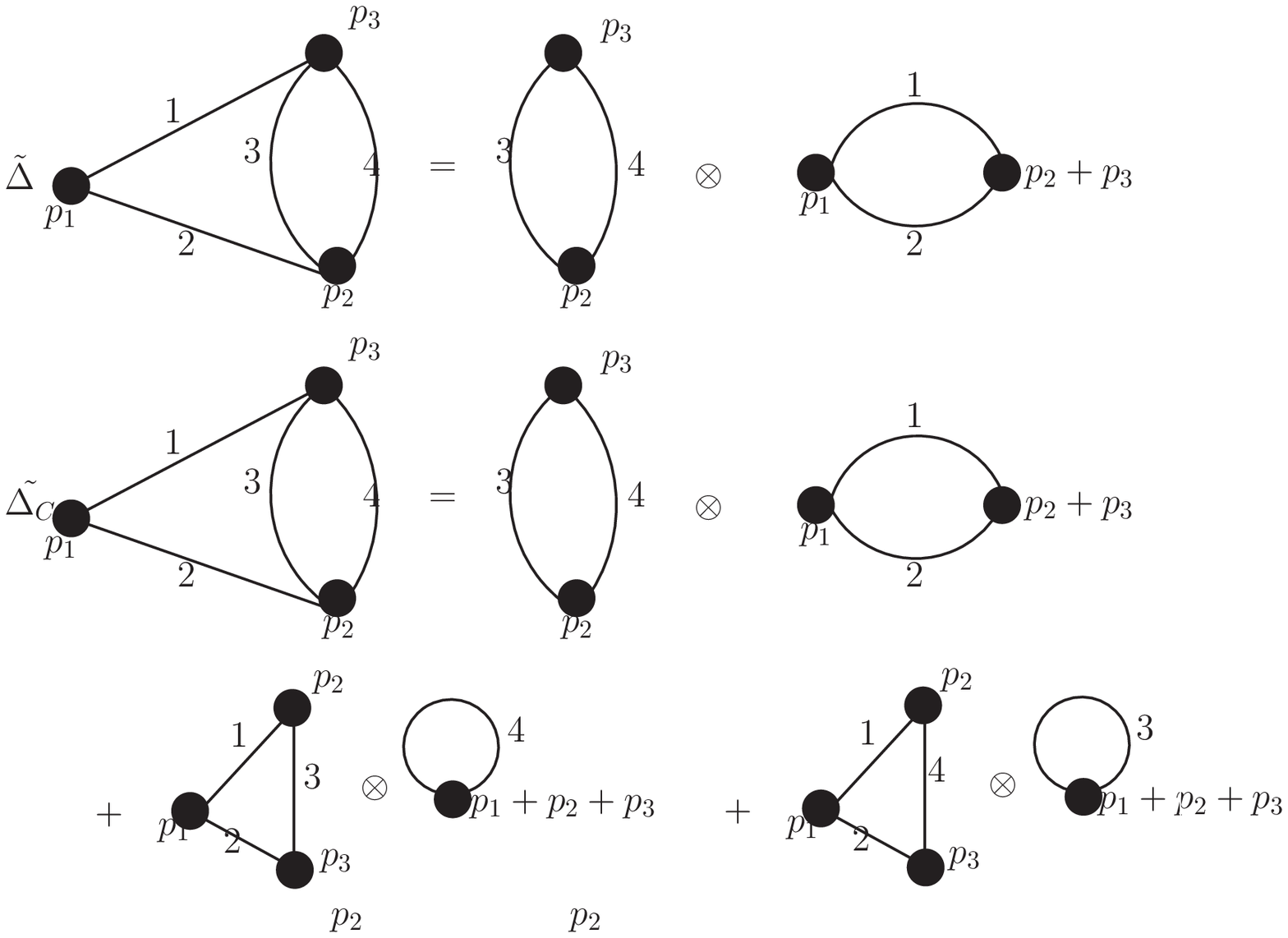}}\;}}
\def\OutSpBubble{{\;\raisebox{-40mm}{\epsfysize=100mm\epsfbox{OutSpBubble.eps}}\;}}
\def\OutSpTriangle{{\;\raisebox{-30mm}{\epsfysize=72mm\epsfbox{OutSpTriangle.eps}}\;}}
\def\OutSpDunce{{\;\raisebox{-30mm}{\epsfysize=72mm\epsfbox{OutSpDunce.eps}}\;}}
\def\triangle{{\;\raisebox{-30mm}{\epsfysize=80mm\epsfbox{triangle.eps}}\;}}
\def\ftwo{{\;\raisebox{30mm}{\epsfysize=70mm\epsfbox{ftwo.eps}}\;}}
\def\ftwolinear{{\;\raisebox{10mm}{\epsfysize=20mm\epsfbox{ftwolinear.eps}}\;}}
\def\fthree{{\;\raisebox{30mm}{\epsfysize=100mm\epsfbox{fthree.eps}}\;}}
\def\fthreetile{{\;\raisebox{30mm}{\epsfysize=90mm\epsfbox{FthreeTile.eps}}\;}}
\def\nondegg{{\;\raisebox{10mm}{\epsfysize=30mm\epsfbox{nondegg.eps}}\;}}
\def\Duncedetail{{\;\raisebox{35mm}{\epsfysize=80mm\epsfbox{Duncedetail.eps}}\;}}
\def\DunceA{{\;\raisebox{20mm}{\epsfysize=60mm\epsfbox{DunceA.eps}}\;}}
\def\DunceB{{\;\raisebox{20mm}{\epsfysize=60mm\epsfbox{DunceB.eps}}\;}}
\def\dcepstwo{{\;\raisebox{20mm}{\epsfysize=60mm\epsfbox{dcepstwo.eps}}\;}}
\def\CrossGeneral{{\;\raisebox{20mm}{\epsfysize=60mm\epsfbox{CrossingGeneral.eps}}\;}}
\def\CrossTriangle{{\;\raisebox{20mm}{\epsfysize=40mm\epsfbox{CrossingTriangle.eps}}\;}}
\def\graphcutshrink{{\;\raisebox{20mm}{\epsfysize=40mm\epsfbox{graphcutshrink.pdf}}\;}}
\def\cubical{{\;\raisebox{40mm}{\epsfysize=80mm\epsfbox{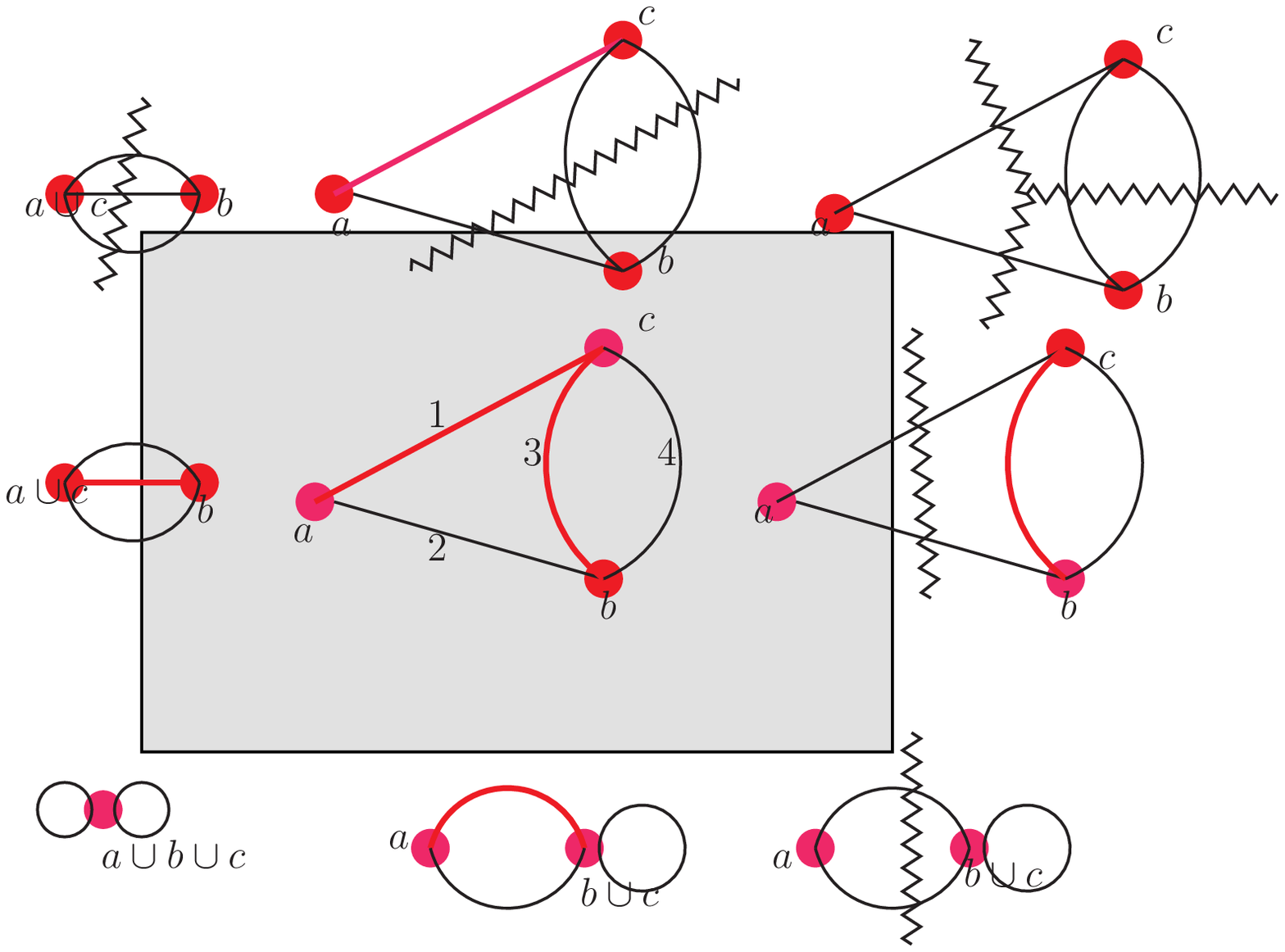}}\;}}
\def\cubicalM{{\;\raisebox{40mm}{\epsfxsize=140mm\epsfbox{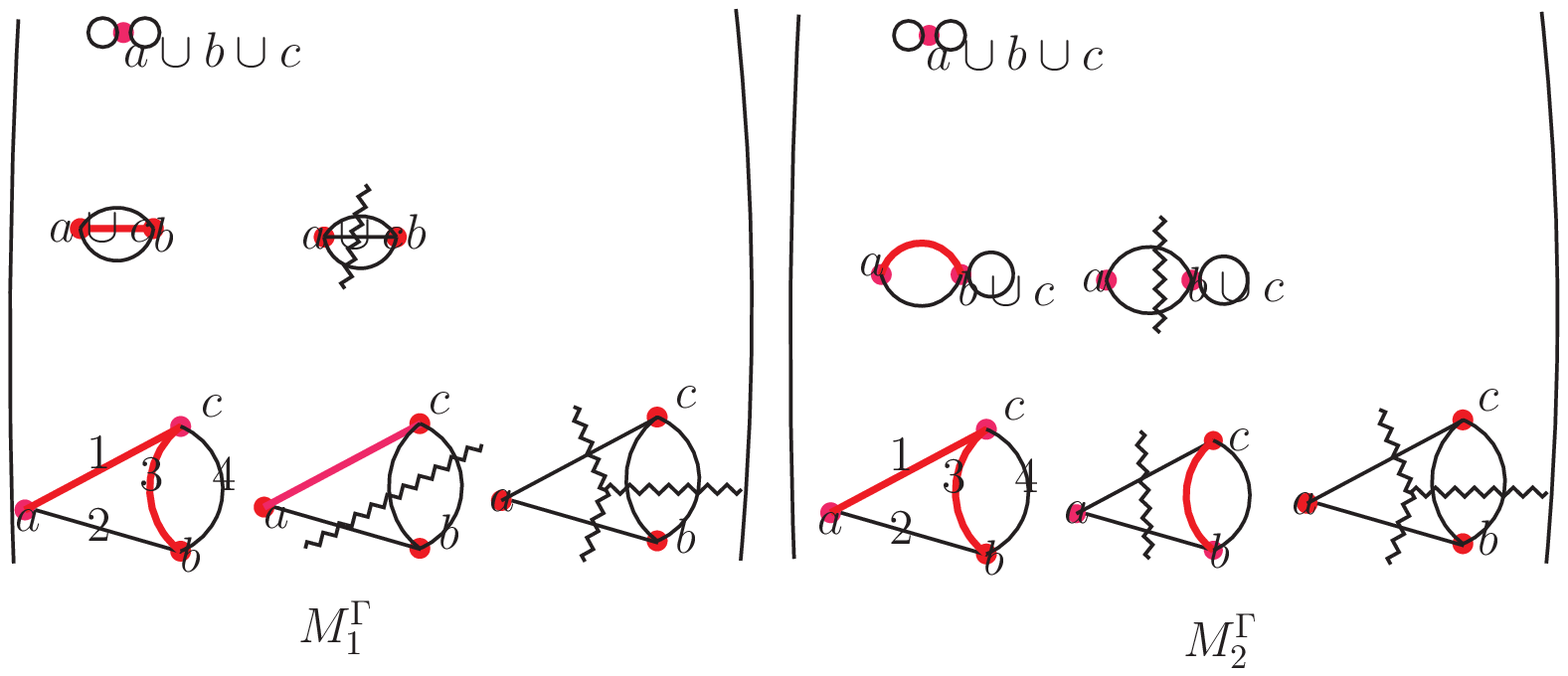}}\;}}
\def\bubblematrix{{\;\raisebox{-30mm}{\epsfxsize=60mm\epsfbox{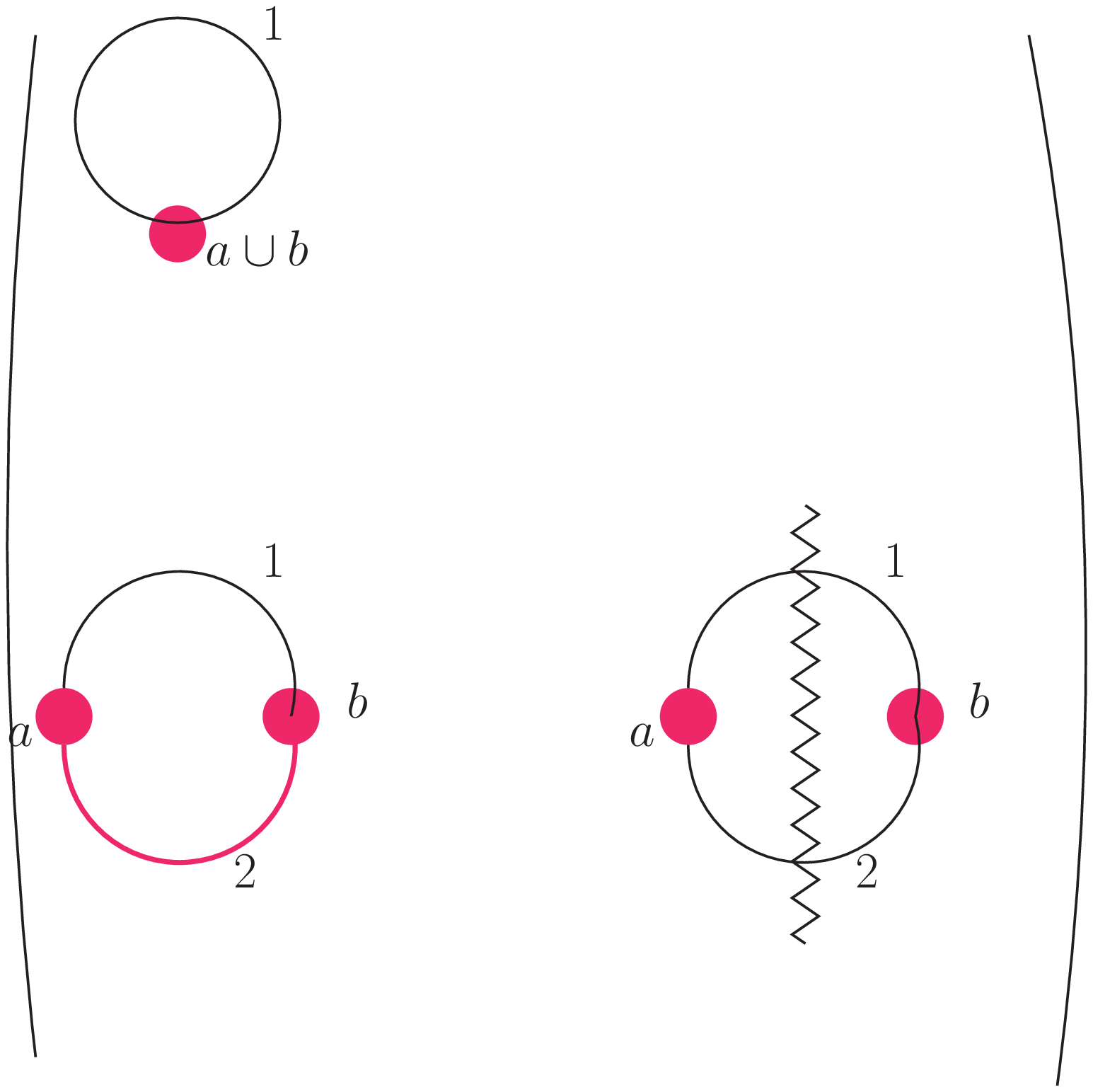}}\;}}
\def\trianglematrix{{\;\raisebox{-40mm}{\epsfxsize=80mm\epsfbox{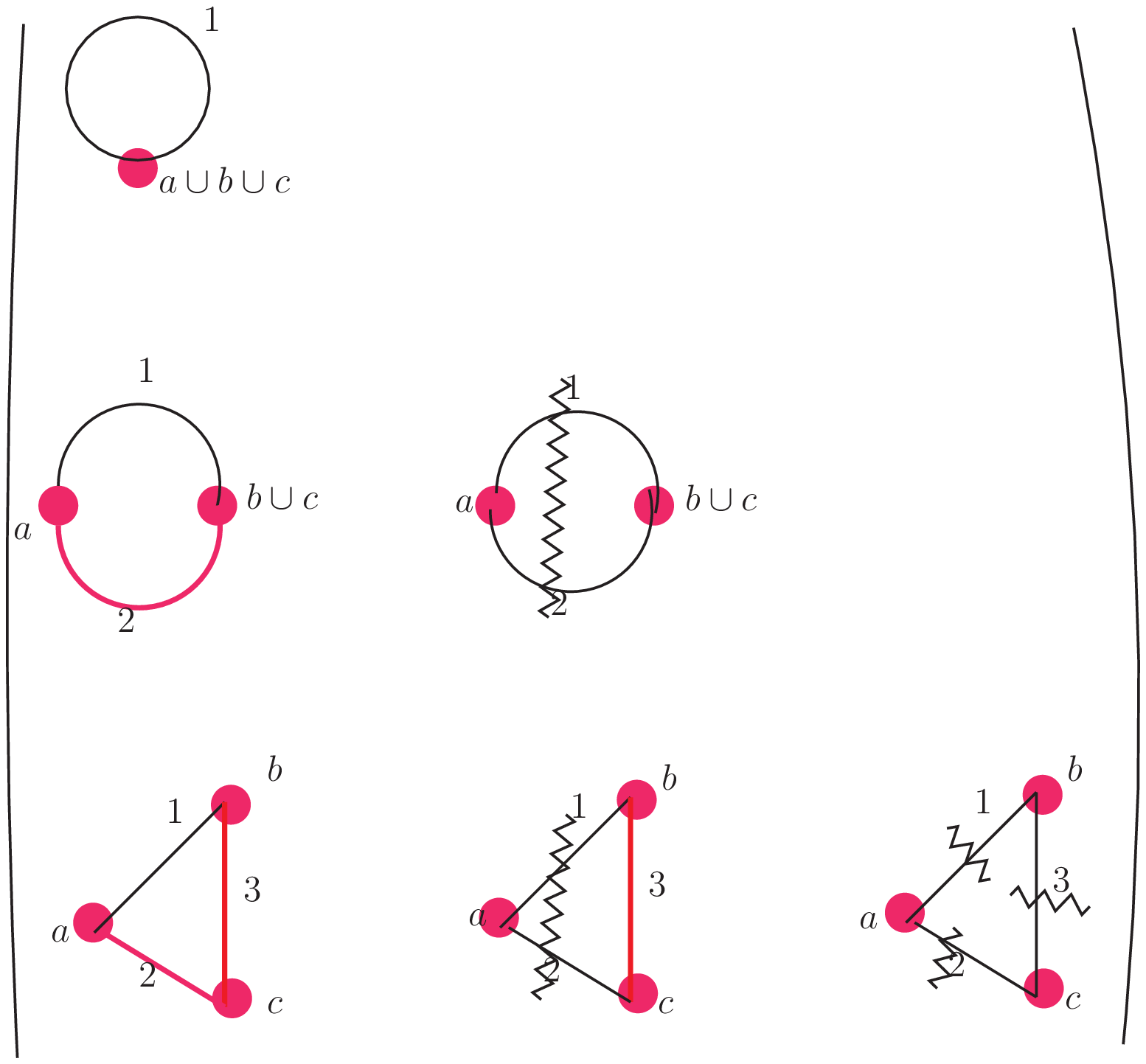}}\;}}
\def\trianglecubical{{\;\raisebox{-40mm}{\epsfxsize=80mm\epsfbox{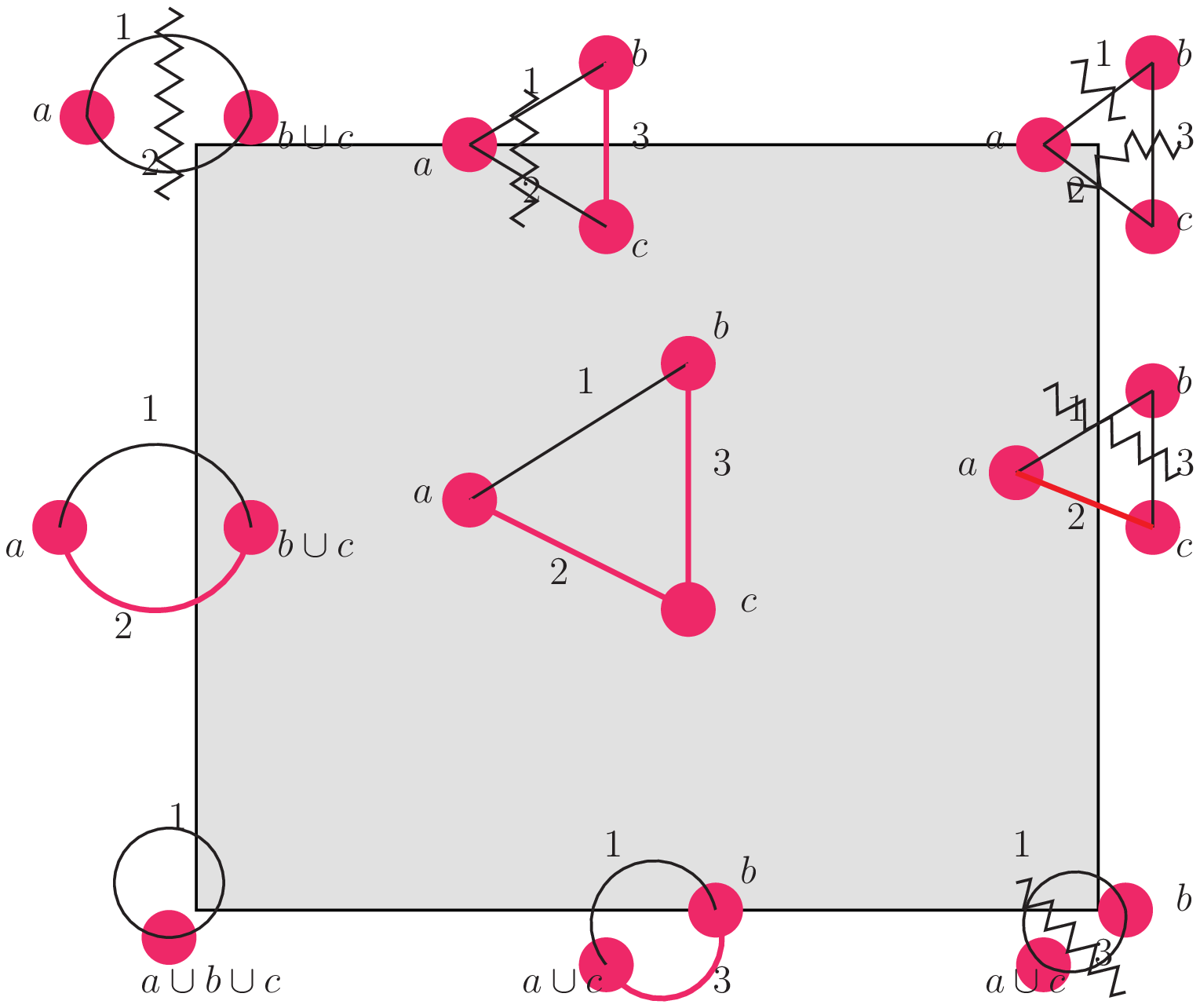}}\;}}
\def\morse{{\;\raisebox{-40mm}{\epsfysize=100mm\epsfbox{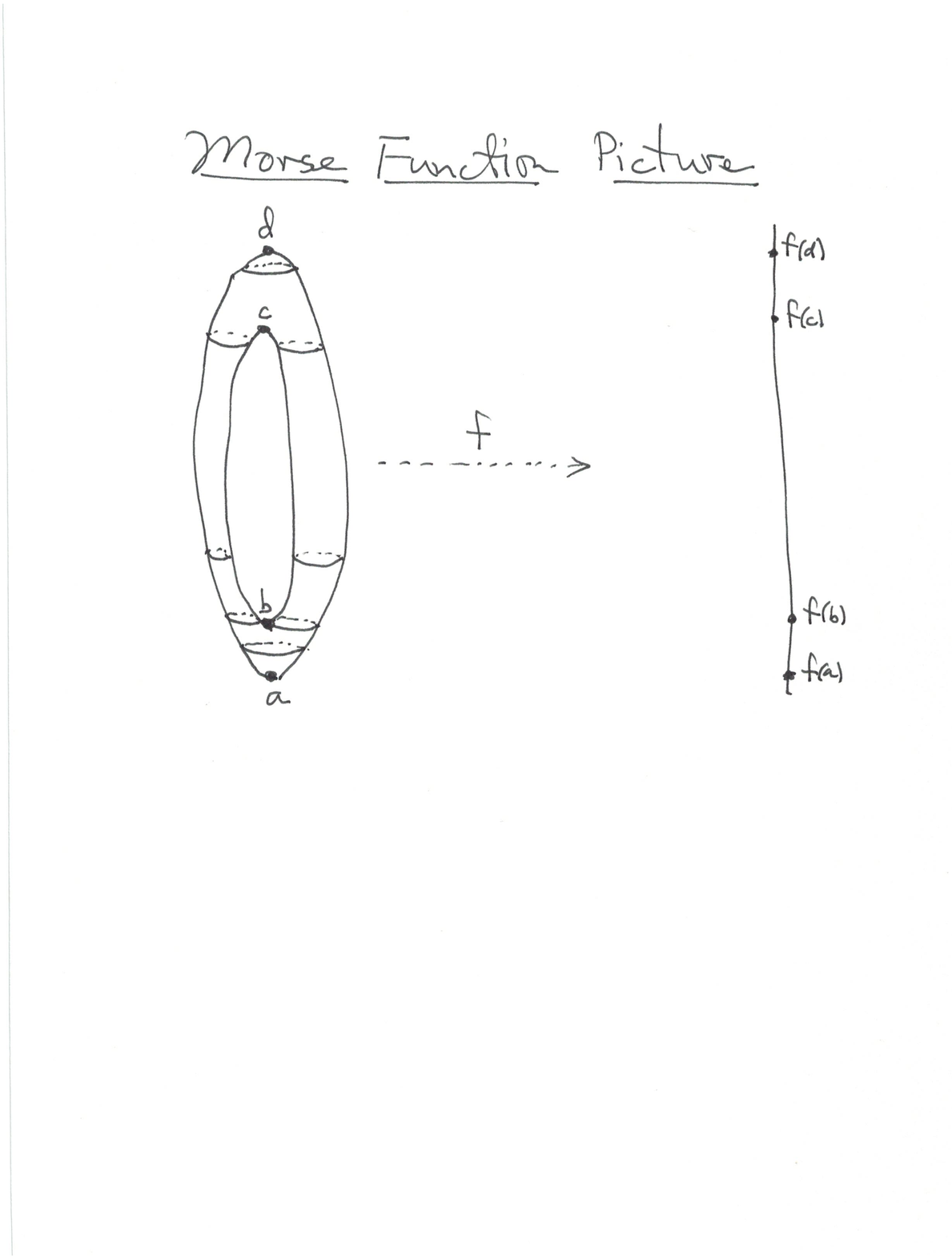}}\;}}

\title{Cutkosky rules and  Outer Space}
\author{Spencer Bloch \and Dirk Kreimer}
\thanks{DK thanks the Alexander von Humboldt Foundation and the BMBF
for support by an Alexander von Humboldt Professorship.
}\address{Math.\ Dept., Chicago U.\ \and Depts.\ of Math.\ and of Physics, Humboldt U.\ Berlin\\
E-mail address: spencer\_bloch@yahoo.com \and krei\-mer@physik.hu-berlin.de}

\maketitle
\begin{abstract}
We derive Cutkosky's theorem starting from Pham's classical work. 
We emphasize structural relations to Outer Space.
\end{abstract}
\section{Introduction and results}
\subsection{Introduction}
Over the last $50$ years, a technique developed by Landau, Cutkosky, and others has been used to study the variation of Feynman amplitudes as external momenta wind around threshold divisors (discriminants). One chooses a subset of edges of the graph and particular values of the external momenta such that the intersection of the subset of propagator quadrics has an isolated ordinary double point. The presence of such a singularity means that a threshold divisor passes through the given point in external momentum space. Cutkosky rules say the variation of the Feynman amplitude as external momenta wind around the threshold divisor is given (upto a power of $2\pi i$) by putting the subset propagators on shell and integrating over the locus where the energies are $\ge 0$. Techniques developed by Pham \cite{Pham1} can be used to give a rigorous mathematical justification for Cutkosky rules, but curiously, to our knowledge this has never been done. There is a subtlety which is that the mathematical theory of vanishing cycles is usually treated geometrically as a theory describing monodromy for families of complex algebraic varieties. 

For physicists, however, existence of vanishing cycles and the Picard-Lefschetz formula is not sufficient. One needs to know that the vanishing cycle (or more precisely the vanishing sphere in the sense of Pham) is the component of the real locus of the intersection of the quadrics being cut determined by the on-shell condition. This reality condition can be reinterpreted as saying that a Hessian matrix at the singular point is (either positive or negative) definite.

We will embed the study of Cutkosky rules in a study of graphs and their cubical chain complex.
The motivation comes from the fact that understanding of the analytic structure of the contribution of a graph to a Feynman amplitude
is related to an analysis of
its reduced graphs and the graphs in which internal edges are on the mass-shell. The former case relates to graphs in which  internal edges shrink. The latter case relates to graphs with cut edges. The set of cut edges is uniquely determined by the choice of a spanning forest for the graph.

Such pairs of graphs and their spanning forests  populate the cubical chain complex.
The latter is based on a given bridge-free graph $\Gamma$ and a chosen spanning tree $T$ for it.

A given ordering of the edges of $T$ defines then a sequence of spanning forests $F$, and to any pair $(\Gamma,F)$ for fixed $\Gamma$ 
we can associate:\\ 
-a reduced graph $\Gamma_F$ obtained by shrinking all edges of $\Gamma$ to length zero which do not connect different components of the spanning forest,\\ 
-a cut graph $\Gamma^F$ where all those edges connecting different components are put on-shell, so are marked by a Cutkosky cut,\\
-the set of graphs $G^F=\Gamma-E_{\Gamma_F}$ obtained from $\Gamma$ by removing the edges which connect distinct components of the spanning tree. 

Such data define a cell-complex and also a set of lower triangular matrices which allow to analyse a graph amplitude from its reduced graphs and the variations obtained by putting internal edges on-shell.   

The combined data stored in all such matrices contains sufficient information to reconstruct the graph from its variations then through a sequence of dispersion integrals based on the variations as defined by the Cutkosky rules. 

This is made possible as the above sequence of forests allows an iterative analysis of anomalous thresholds, based on the real analyticity of Feynman rules. This is an iterated application of the optical theorem and dispersion using real analyticity with respect to a single kinematical variable defined for each $F$.
We will exhibit this in a final section where we discuss an example from physics.

\subsection{Results}
Our first result is Cutkosky's theorem. It suffices to derive it in the case that the disjoint components of $G^F$ do not contain loops.

We also provide an edge-by-edge analysis of a graph using properties of the parametric representation for Feynman diagrams.

When we combine Cutkosky's theorem with straightforward properties of graph polynomials, we arrive at a method 
to analyse monodromies related to amplitudes multiple-edge by multiple-edge.

A sequence of cuts 
\[\epsilon_2\to \epsilon_3\to\cdots\to \epsilon_{v_\Gamma}\] 
will shift the normal threshold $s_0(\epsilon_2)$ associated with a chosen cut $\epsilon_2$ to anomalous thresholds
\[
s_0(\epsilon_2)\to s_1(\epsilon_3)\to\cdots\to s_{v_\Gamma-2}(\epsilon_{v_\Gamma}).
\]
\begin{remark}
The resulting sequence of anomalous thresholds $s_i(\epsilon_{i+2})$, $i>0$ is a sequence of values for a channel variable $s$ defined by $\epsilon_2$.
They are computed from the divisors associated to $\epsilon_{i+2}$. The latter are functions of all kinematical variables.
For example, for the one-loop triangle discussed in the final section the divisor in $\mathbb{C}^3$ associated to $\epsilon_3$ is a simple function of 
\[
\lambda(p_1^2,p_2^2,p_3^2)=p_1.p_2^2-p_1^2p_2^2=p_2.p_3^2-p_2^2p_3^2=p_3.p_1^2-p_3^2p_1^2,\,p_1+p_2+p_3=0.
\] 
The three representations of $\lambda$ allow to compute $s_1(\epsilon_3)$ for $s=p_3^2$ or $s=p_1^2$ or $s=p_2^2$ respectively.
\end{remark}

As a result, to a graph $\Gamma$  we can assign a collection of lower triangular matrices $M_i^\Gamma$ with the following properties:
\begin{itemize}
\item All entries in the matrix correspond to well-defined integrable forms under on-shell renormalization conditions.
\item Anomalous thresholds $s_i$ are determined from properties of graph polynomials. They provide lower boundaries for dispersion integrals associated to these integrable forms. 
\item Along the diagonal in the matrices $M_i^\Gamma$ we find leading threshold entries: all quadrics for all edges in a graph  are on the mass-shell.
\item The variation of a column in $M_i^\Gamma$ wrt to a given channel is given by the column to the right. 
\item Non-leading thresholds (entries below the diagonal ) correspond to fibrations over the leading thresholds. They are cones if there are loops in the uncut edges.
\item The subdiagonal entries $(M_i^\Gamma)_{k,k-1}$ are determined from the diagonal entries  $(M_i^\Gamma)_{k-1,k-1}$
and $(M_i^\Gamma)_{k,k}$ via a dispersion integral. This gives $(k-1)$ two-by-two matrices each of which has an interpretation via the optical theorem.
This hence determines the first subdiagonal.
\item Continuing, all subdiagonals and hence the whole matrix $(M_i^\Gamma)_{r,s}$ is determined via iterated dispersion.
This answers the question how to continue the optical theorem beyond two-point functions.
\item The first column in each matrix $M_i^\Gamma$ corresponds to a path in the spine of Outer Space from some rose to some cell containing $\Gamma$.
The entries of  all such matrices which one can assign to a pair $(\Gamma,T)$ by different choices of the ordering of the edges of $T$ give a cell for $(\Gamma,T)$ in the cubical chain complex.
\item There are as many distinct matrices $M_i^\Gamma$ as there are distinct paths from roses to $\Gamma$.
\item Graphs in Outer Space are metric graphs. The Feynman integral in parametric space corresponds to an integral over a cell in Outer Space.
\item The markings at a graph in Outer Space fix the ambiguities related to the variations associated with thresholds. In particular this last aspect we reserve to future investigation.
\end{itemize}
In fact, this first paper on the subject serves only to settle ideas and provide a starting point for future investigations.
In particular, we will treat the matrices $M^\Gamma_i$ combinatorially, and take them to have as entries the modified graphs rather than the asssociated amplitudes.
Only when we turn to physics examples we will consider their entries as given by the associated amplitudes.

The relation  to outer space and a Hodge theoretic analysis of the matrices $M_i^\Gamma$ certainly call for more detailed analysis in future work.

Here, we give a mathematical result -precise formulation of Cutkosky's thorem- and analyse normal and anomalous thresholds of amplitudes -physics notions- in terms of parametric representation of Feynman integrals. Outer space and its cubical chain complex provides a common structure to this analysis which fascinates us both.
\subsection*{Acknowledgments.} Both authors thank Karen Vogtmann and Holger Reich for helpful discussions.

\section{Graphs and spanning forests}
We let $\Gamma$ be a connected graph. We allow multiple edges between vertices and self-loops as well.

We let $V_\Gamma$ then be the set of vertices of $\Gamma$, $|V_\Gamma|=v_\Gamma$, and $E_\Gamma$, $|E_\Gamma|=e_\Gamma$ be the set and number
of edges.

$\Gamma/X$, for $X\subseteq\Gamma$ a (not necessarily connected) graph, denotes the graph obtained from $\Gamma$ by shrinking all internal edges of $X\subseteq\Gamma$ to zero length:
\be \Gamma/X=\Gamma_{|l(e)=0,e\in E_X}.\ee

A spanning tree $T$ of $\Gamma$ is a proper subgraph $T\subseteq\Gamma$ such that $V_T=V_\Gamma$ and $T$ is connected and simply connected. 

A spanning $k$-forest is a disjoint union $\amalg_{i=1}^k T_i$ of $k$  trees $T_i\subsetneq \Gamma$,
and $\cup_i V_{T_i}=V_\Gamma$.

A pair $(\Gamma,F)$, $F$ a spanning $k$-forest for $\Gamma$, defines a set $G^F$ of $k$ mutually disjoint  graphs $\Gamma_j\subsetneq \Gamma$,
 $1\leq j\leq k$. For each edge $e\in \Gamma_i$  in such a graph $\Gamma_i$ we have that its boundary vertices $\partial(e)=\{v_+(e),v_-(e)\}\in V_{\Gamma_i}$ belong to the vertices of  the same graph.

$F$ also defines a unique set of edges $E_{\Gamma_F}$ which connect vertices of different such $\Gamma_i$ and such that $\Gamma_F:=\Gamma/(\cup_{i=1}^k \Gamma_i)$ is based on those
edges and $k$ vertices. In particular, for a 2-forest, we obtain two vertices connected by a multiple edge.

$\Gamma-X$, for $X\subseteq \Gamma$, denotes the graph obtained from $\Gamma$ by removing the edges of $X$. It can contain isolated vertices.

A graph is bridge-free or 1PI or 2-connected if $(\Gamma-e)$  is connected for any $e\in E_\Gamma$.

A connected graph $\Gamma$ which is not bridge-free is a union of 1PI graphs $\gamma=\cup_i\gamma_i$ and trees $T=\cup_i T_i$ such that $\Gamma/\gamma=T$
and $\Gamma/(\cup_i T_i)=\gamma$.

Here we associate the disjoint union of graphs $\gamma=\cup_i\gamma_i$ with the connected graph obtained by shrinking
all bridges (edges of the trees $T_i$), and similarly for $T$.

A graph $\Gamma$ is 1VI (one-vertex irreducible) if $\Gamma-v$ is connected for any vertex $v\in V_\Gamma$. Trees of edges are always one-vertex reducible.

We let $|\Gamma|:=|H_1(\Gamma)|$ be the first Betti number. Note $|\Gamma|=|\Gamma/F|$ for any spanning forest $F$ of $\Gamma$.

For disjoint unions of graphs $h_1,h_2$, we set $|h_1\cup h_2|=|h_1|+|h_2|$.

The length of a path is the number of edges in the path. 
The distance between two vertices is the number of edges in the shortest path between them, a pair of vertices of unit distance is called adjacent.

\subsection{Momenta and masses}
To each vertex $v\in V_\Gamma$, we assign a four-momentum $p_v\in \mathbb{M}^4$. For the purposes of this section, we can let components of these vectors be real. 

We require momentum conservation:
\be \sum_{v\in V_\Gamma}p_v=0.\ee

By definition, if vertices $v_i \in V_\Gamma$ merge together to a vertex $w=\cup_i v_i$  in $\Gamma/X$, the vertex $w$ has momentum $\sum_i p_{v_i}$ assigned.

Also, to all internal edges $e\in E_\Gamma$, we assign a parameter $m_e^2-i\epsilon$, $m_e,\epsilon>0$ real and $\epsilon \ll 1$ infinitesimal
(often written $m_e^2-i0$ in the physics literature).

\subsection{Powercounting}
To all edges $e$ and vertices $v$ of a graph, we assign weights $w(e),w(v)\in\mathbb{Z}$, and define the weight of a graph to be the integer
\be
w(\Gamma)=4|\Gamma|-\sum_{v\in V_\Gamma} w(v)-\sum_{e\in E_\Gamma}w(e).
\ee  
We always set $w(v)=0$ and $w(e)=2$ below.

\subsection{Multiple edges}
If the set of edges connecting two adjacent vertices $x,y$ has cardinality $k$, we call this set a multiple edge $b_k(x,y)$, a $k$-edge banana (often denoted by $b_k$ if the vertices are clear). $b_1$ is a single edge.

For two adjacent vertices $x,y$ in a bridge-free graph $\Gamma$
connected by a $k$-edge banana, $\Gamma-b_k(x,y)$ is regarded as a set of trees $T_i$ and bridge-free components $\gamma_i$ as above. 

Edges and multiple edges will play an important role, as we will analyse graphs by removing them, or shrinking them.
\subsection{Example}
The following figure gives an example.
$$
\Gg
$$
In the first row on the left we see the graph $\Gamma$ on edges $1,\ldots,8$. We let $\gamma\subset\Gamma$ be the subgraph on edges $3,4$.
In the first row next to $\Gamma$ we see $\Gamma-\gamma$, and below in the second row to the left  we have $\Gamma/\gamma$. 
On the right in the first row, we see the tree $T$ on edges $7,8$ obtained by shrinking the two subgraphs on edges $1,2$ and edges $5,6$ to points. On the other hand, shrinking edges $7,8$ in $\Gamma-\gamma$, we get the two subgraphs 
as a new graph $g$ which is not 1VI: the two vertices of edge $7$ unify to the new vertex which connects the two subgraphs to the graph $g$
in the second row on the right.
\section{The cubical chain complex and $M_\Gamma^i$}
\subsection{The cubical chain complex}
We follow  \cite{VogtmannHatcher}.

Consider a pair $(\Gamma,T)$ of a bridge free graph $\Gamma$ and a chosen spanning tree $T$ for it.
Assume $T$  has $k$ edges. Consider the $k$-dimensional unit cube. It has origin $(0,\cdots,0)$ and $k$ unit vectors
$(1,0,\cdots,0),\ldots$, $(0,\cdots,0,1)$ form its edges regarded as $1$-cells. A change of ordering of the edges of $T$ permutes those edges.

The origin is decorated by a rose on $|\Gamma|$ petals, and the corner $(1,1,\cdots,1)$ decorated by $(\Gamma,V_\Gamma)$, with $k=v_\Gamma-1$,
and we regard $V_\Gamma$ as a spanning forest.

The complex is best explained by assigning graphs as in the following example.
$$\cubical$$
The cell is two-dimensional as each of the five spanning trees of the graph $\Gamma$, the dunce's cap graph, in the middle of the cell has length two.

We have chosen a spanning tree $T$ provided by the edges $e_1$ and $e_3$, indicated in red. The boundary of our two-dimensional cell has four one-dimensional edges, bounded by two of the four 0-dimensional corners each.

To these lower dimensional cells we assign graphs as well.

To the vertical edge on the left we assign the pair $(\Gamma/e_1,e_3)$, in which the first edge, $e_1$, of the spanning tree $T$ shrinks to zero length. We have $\Gamma/e_1=\Gamma_{T-e_1}$, as $F=T-e_1$.
The edge $e_3=T/e_1$ remains as its spanning tree. The graph has two vertices, $a\cup c$ is the vertex onto which vertices $a,c$ collapse, the other vertex is $b$.

To the vertical edge on the right we assign the pair $(\Gamma, a\cup e_3)$, $a\cup e_3\equiv T-e_1$, with the union of the vertex $a$ with edge $e_3$ a spanning forest for $\Gamma$,
obtained by removing edge $e_1$ from $T$. $\Gamma^{T-e_1}$ is the indicated graph with the cut edges marked by a zigzag line.

To the horizontal edge below we assign the pair $(\Gamma/e_3,e_1)$,in which the second edge, $e_3$, of the spanning tree $T$ shrinks to zero length.
The edge $e_1$ remains as its spanning tree. The graph has two vertices, $b\cup c$ is the vertex onto which vertices $b,c$ collapse, the other vertex is $a$.

To the horizontal edge above  we assign the pair $(\Gamma, b\cup e_1)$, with the union of the vertex $b$ with edge $e_1$ a spanning forest for $\Gamma$,
obtained by removing edge $e_3$ from $T$.

The lower left corner carries the graph obtained by shrinking both edges of the spanning tree, a rose on two petals with a single vertex $a\cup b\cup c$.

The lower right corner carries the graph obtained by shrinking edge $e_3$ and removing edge $e_1$ from the spanning tree.

The upper left corner carries the graph obtained by shrinking edge $e_1$ and removing edge $e_3$ from the spanning tree.

The upper right corner finally carries the graph obtained by removing both edges from the spanning tree, so that the forest is the disjoint union of the three vertices.

The spanning tree has length two and so there are $2=2!$ orderings of its edges, and hence two lower triangular $3\times 3$ matrices  $M^\Gamma_i$ which we can assign to this cell.

They look as follows:
$$\cubicalM $$
These square matrices are lower triangular. An entry $(M^\Gamma_i)_{rs}$, $1\leq r\leq v_\gamma$, $1\leq s \leq v_\gamma$ is given
by 
\be 
(M^\Gamma_i)_{rs}=(\Gamma/f_{v_\Gamma-r},T/f_{v_\Gamma-r}-g_{s-1}).
\ee
Here, $\Gamma/f_i$ is the graph obtained by shrinking edges $e_1,\cdots e_i$ of the spanning tree. For $\Gamma/f_0=\Gamma$ we shrink none.
$g_0$ is the empty set, and $g_r$ is the union of the last $r$ edges of the spanning tree. 

Let us now look at these matrices from the viewpoint of a partition of the set of vertices.
\subsection{The matrix $M_i^\Gamma$}\label{matrixM}
Let $\Gamma$ be a connected graph with vertex set $V=V_\Gamma$ and edge set $E=E_\Gamma$. We will be interested in partitions of $V$
\eq{}{\sP = V_1\amalg\cdots \amalg V_r.
}
Associated to a partition is a collection of subgraphs $G^F=\{\Gamma_i\}$, where the edges of $\Gamma_i$ are all edges $e$ of $\Gamma$ such that the vertices $\partial e = \{v^+_e, v^-_e\}$ both lie in $V_i$ as before. 

We say that the partition $\sP$ is {\it connected} if all the $\Gamma_i$ are connected graphs. Note that if $\Gamma_i$ is not connected, there is a natural refinement of $\sP$ given by further partitioning $V_i$ according to the connected components of $\Gamma_i$. Unless otherwise specified, we will work throughout with connected partitions. 

Given a connected partition $\sP(V(\Gamma))$, define
\ml{}{E_{\Gamma_F}\equiv E_\sP := E(\Gamma)\backslash(\coprod_i E(\Gamma_i))= \\
\{e\in E(\Gamma)\ |\ \text{vertices of $e$ lie in distinct $V_i$}\}.
}

Let $\Gamma//\coprod \Gamma_i$ be the graph obtained from $\Gamma$ by shrinking all the $\Gamma_i$ to (distinct) points. (Note that the $\Gamma_i$ are disjoint.) Our assumption that the $\Gamma_i$ are connected is easily seen to imply an exact sequence on the level of first homology with $\Z$-coefficients:
\eq{}{0 \to \bigoplus_i H_1(\Gamma_i) \to H_1(\Gamma) \to H_1(\Gamma//\coprod_i \Gamma_i) \to 0. 
}
We write
\eq{}{\Gamma_F\equiv \Gamma_\sP := \Gamma//\coprod_i \Gamma_i. 
}
We have $E_{\Gamma_\sP} = E_\sP$. 

Given a connected partition $\sP = \sP(\Gamma)$ the corresponding {\it cut} graph is by definition the graph $\Gamma$ with the edges $E_\sP$ removed (``cut'')\footnote{For the purposes of this section, edges which are on-shell, i.e.\ 'cut', are simply removed.}. I.e. the cut graph is $\coprod_i \Gamma_i$. Note that some of the $\Gamma_i$ may be isolated vertices. In other words, cutting an edge does not mean removing its endpoints. We also use the notation $(\Gamma,\sP)$ a bit abusively to denote the pair $(\Gamma, E_\sP)$. Thus $(\Gamma,\sP)$ represents the graph $\Gamma$ together with the choice of a cut $E_\sP$. 

We are particularly interested in the case of connected $2$-partitions $V_\Gamma=V_1\amalg V_2$. 
\begin{lem}\label{connected} Let $\Gamma$ be a connected graph with at least $1$ edge and no self-loops. Then $\Gamma$ admits a connected $2$-partition.
\end{lem}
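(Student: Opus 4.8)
The plan is to build the partition from a spanning tree, using a leaf to peel off one vertex. Since $\Gamma$ is connected and has at least one edge but no self-loops, that edge joins two distinct vertices, so $v_\Gamma\geq 2$; hence any spanning tree $T$ of $\Gamma$ has at least one edge and therefore possesses a leaf, i.e.\ a vertex of degree $1$ in $T$.

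First I would fix a spanning tree $T$ of $\Gamma$ and a leaf $\ell$ of $T$, and set $V_1:=\{\ell\}$ and $V_2:=V_\Gamma\setminus\{\ell\}$. Both blocks are nonempty because $v_\Gamma\geq 2$, so this is a genuine $2$-partition of $V_\Gamma$. The induced subgraph $\Gamma_1$ consists of the single vertex $\ell$ (recall that cutting an edge does not delete its endpoints), hence $\Gamma_1$ is connected.

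Next I would verify that $\Gamma_2$ is connected. Deleting $\ell$ together with its unique incident tree-edge yields a subgraph $T-\ell$ which is again a tree, and whose vertex set is exactly $V_2$. Thus $T-\ell$ is a connected spanning subgraph of $\Gamma_2$, and since $\Gamma_2$ contains $T-\ell$ and has the same vertex set $V_2$, the graph $\Gamma_2$ is connected. Therefore $V_\Gamma=V_1\amalg V_2$ is a connected $2$-partition, which proves the lemma.

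There is essentially no serious obstacle in this argument; the only place the hypotheses enter is in guaranteeing that both blocks are nonempty, which is exactly what ``at least one edge'' together with ``no self-loops'' provides (they force $v_\Gamma\geq 2$). If one prefers a more symmetric route, one may instead delete any edge $e$ of $T$, obtaining two subtrees $T',T''$ with vertex sets $V',V''$; each of $T',T''$ is a connected spanning subgraph of the corresponding induced subgraph, so $V_\Gamma=V'\amalg V''$ is a connected $2$-partition as well. Either way the proof is a one-line consequence of the existence of a spanning tree.
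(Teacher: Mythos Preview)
Your proof is correct. Both you and the paper exhibit a partition of the form $\{v\}\amalg(V_\Gamma\setminus\{v\})$ with $\Gamma_v$ connected, but the arguments differ. The paper argues by induction on the number of edges: if $\Gamma$ is a tree it takes a unary vertex, and otherwise it deletes a non-bridge edge and invokes the inductive hypothesis on the smaller graph. You instead go straight to a spanning tree and pick a leaf $\ell$; since $T-\ell$ is a spanning tree of the induced subgraph on $V_\Gamma\setminus\{\ell\}$, connectedness of $\Gamma_2$ is immediate. Your route is shorter and avoids the induction entirely, and your alternative observation (deleting any edge of $T$ to get two subtrees) in fact yields the stronger statement that \emph{every} spanning tree of $\Gamma$ is adapted to some connected $2$-partition, which meshes well with the paper's subsequent discussion of adapted spanning trees.
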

\begin{proof}For $v\in V_\Gamma$, write $\Gamma_v$ for the graph obtained from $\Gamma$ by removing $v$ and all edges containing $v$. We claim there exists a $v$ such that $\Gamma_v$ is connected. This is clear if $\Gamma$ is a tree, because we can take $v$ to be a unary vertex. If $\Gamma$ is not a tree, then there exists an edge $e$ such that $\Gamma-e$ is still connected. Since $\Gamma$ is not a tree, $\Gamma-e$ has at least one edge and no self-loops, so by induction on the number of edges, there exists $v\in \Gamma-e$ such that $(\Gamma-e)_v$ is connected. But this implies $\Gamma_v$ is connected as well.  The $2$-partition $\{v\}\amalg V_{\Gamma_v}$ is thus connected. 
\end{proof}

Partitions can be refined in an evident sense. A refinement of the partition $V_\Gamma = \coprod_i V_i$ consists of refinements $V_i = \coprod V_{i,j}$ for all $i$. (We admit the trivial refinement $V_i=V_i$.) The refinement is connected if all the $\Gamma_{i,j}$ are connected. We refer to a sequence of refinements as {\it binary} if each non-trivial refinement is a connected $2$-partition, viz. $V_{i_1,\dotsc,i_k} = V_{i_1,\dotsc,i_k,1}\amalg V_{i_1,\dotsc,i_k,2}$.  

Recall a {\it spanning tree} in a connected graph $\Gamma$ is a subgraph $T\subset \Gamma$ which is a tree (connected, with no loops) containing all vertices of $\Gamma$. Suppose we are given a connected $2$-partition $\sP: V_\Gamma=V_1\amalg V_2$. Let $T_i$ be a spanning tree for $\Gamma_i, i=1,2$, and let $e\in E_\Gamma$ be an edge in $\Gamma_\sP$. Then 
\eq{}{T = T_1\cup \{e\}\cup T_2
}
is a spanning tree for $\Gamma$. Such a $T$ is said to be adapted to the partition $\sP$. More generally, a spanning tree $T$ is adapted to a connected binary refinement if it contains exactly one edge from each ``subcut''. For example, if $V=(V_{11}\amalg V_{12})\amalg V_2$ is a connected refinement of $V=V_1\amalg V_2$, then an adapted spanning tree would have the form
\eq{}{T = T_{11}\cup T_{12}\cup e_{1;1,2} \cup e_{1,2}\cup T_{2}.
}
Here $e_{1;1,2}$ is an edge of $\Gamma_1$ connecting the vertex sets $V_{11}$ and $V_{12}$. 

There is a maximal (finest) partition of $V_\Gamma$ where each $V_i$ consists of a single vertex. As a consequence of lemma \ref{connected} we can find a sequence of connected binary refinements of a given connected partition abutting to the maximal partition. Fix such a maximal sequence $\sP$, and let $T$ be a spanning tree for $\Gamma$ which is adapted to it. We want to associate to $(\Gamma, \sP, T)$ a matrix $M$ whose entries are ``graphs with cuts''. 

We define a height function $ht: E_T \to \{1,2,3,\ldots\}$ by taking $ht(e) = k$ if $e\in E_T$ connects $\Gamma_{i_1,\dotsc,i_{k-1},1}$ to $\Gamma_{i_1,\dotsc,i_{k-1},2}$. For example $ht(e)=1$ means $e$ connects $V_1$ and $V_2$. Note that for $k>1$, a given height $k$ may correspond to several cuts. We choose some ordering for these cuts compatible with the heights and display the graphs from left to right:
\ml{}{\Gamma,\ \ \Gamma(\text{unique ht $1$ cut}), \ \ \Gamma(\text{ht $1$ cut, one ht $2$ cut}), \\
\Gamma(\text{ht $1$ cut, two ht $2$ cuts})\cdots \Gamma(\text{all cuts}).
}
Next, over each entry $\Gamma(\text{...cuts...})$ we move upward shrinking edges from $T$ starting with the edges of largest height. Again there is some ambiguity but we choose an ordering among edges of equal height. By shrinking edges of largest height while cutting edges of smallest height, we build a triangular matrix of graphs with entries along the diagonal having only cut edges and tadpoles. 
Matrices $M^\Gamma_1,M^\Gamma_2$ above are examples.

\subsection{Graphs and their Hopf algebra structure}

\subsubsection{Hopf algebra structure}
Consider the free commutative $\mathbb{Q}$-algebra 
\be 
H=\oplus_{i\geq 0} H^{(i)},\, H^{(0)}\sim \mathbb{Q}\One,
\ee
generated by 2-connected graphs as free generators (disjoint union is product $m$, labelling of edges and of vertices by momenta as declared).

Consider the Hopf algebras $H(m,\One,\Delta,\hat{\One},S)$ and $H(m,\One,\Delta_c,\hat{\One},S_c)$, given by
\be \One:\mathbb{Q}\to H, q\to q\One,\ee
\be \Delta:H\to H\otimes H, \Delta(\Gamma)=\Gamma\otimes\One+\One\otimes\Gamma+\sum_{\gamma\subsetneq\Gamma,\gamma=\cup_i\gamma_i,w(\gamma_i)\geq 0}
\gamma\otimes \Gamma/\gamma,\ee
\be \Delta_c:H\to H\otimes H, \Delta_c(\Gamma)=\Gamma\otimes\One+\One\otimes\Gamma+\sum_{\gamma\subsetneq\Gamma,\gamma=\cup_i\gamma_i}
\gamma\otimes \Gamma/\gamma,\ee
\be \hat{\One}:H\to\mathbb{Q}, q\One\to q, H_>\to 0,\ee
\be S: H\to H, S(\Gamma)=-\Gamma-\sum_{\gamma\subsetneq\Gamma,\gamma=\cup_i\gamma_i,w(\gamma_i)\geq 0}
S(\gamma) \Gamma/\gamma,\ee
\be S_c: H\to H, S(\Gamma)=-\Gamma-\sum_{\gamma\subsetneq\Gamma,\gamma=\cup_i\gamma_i}
S(\gamma) \Gamma/\gamma,\ee
where $H_>=\oplus_{i\geq 1} H^{(i)}$ is the augmentation ideal.

Both Hopf algebras will be needed in the following for renormalization in the presence of variations.

Edges in $\Gamma$ which have one vertex in $\Gamma-\gamma$ and the other in $\gamma$ merge in $\Gamma/\gamma$ into $k$  distinct vertices $v_i$.
In this merging, momentum labels at vertices are additive as stated before:
$q_{v_i}=\sum_{v\in V_{\gamma_i}}q_v$. 

Here is an example for both (reduced) coproducts.
$$
\ExCop
$$
The first reduced coproduct $\tilde{\Delta}$ only produces a single term: the subgraph $\gamma$ on edges $3,4$ has $w(\gamma)=0$ and hence contributes,
while all other subgraphs have a weight $<0$.

We have no such restriction on the other reduced coproduct and hence collect the indicated three terms on the right.

\subsubsection{The Hopf algebra and pairs $(\Gamma,F)$}
Let $(\Gamma,T)$ be a pair of a graph and a spanning tree for it with a choice of ordering for its edges.
Let $\mathcal{F}_{(\Gamma,T)}$  be the set of corresponding forests.

Then, to any pair $(\Gamma,F)$, with $F$ a $k$-forest ($1\leq k\leq v_\Gamma$), $F\in \mathcal{F}_{(\Gamma,T)}$ we can assign a set of $k$ disjoint graphs $G^F$. We let $\Gamma_F:=\Gamma/G^F$ be the graph obtained by shrinking all internal edges of these graphs.

For each such $F$, we call $E_{\Gamma_F}$ a cut. In particular, for $F$ the unique 2-forest assigned to $T$ (by removing the first edge from the ordered edges of $T$), we call $\epsilon_2=E_{\Gamma_F}$ the Cutkosky cut of $(\Gamma,T)$. 

Note that the ordering of edges defines an ordering of cuts
$\emptyset=\epsilon_1\subsetneq\epsilon_2\subsetneq\cdots\subsetneq\epsilon_k=E_\Gamma$.   

For a Cutkosky cut, we have $G^F=(\Gamma_1,\Gamma_2)$ and we call 
\be
s=(\sum_{v\in V_{\Gamma_1}}q_v)^2=(\sum_{v\in V_{\Gamma_2}}q_v)^2
\ee
the channel associated to $(\Gamma,T)$.

These notions are recursive in an obvious way: the difference between a $k$ and a $k+1$ forest defines a Cutkosky cut for some subgraph.

We define $|G^F|=\sum_{\gamma\in G^F}|\gamma|$. Also, we let $\mathcal{F}_k(\Gamma)$ be the set of all $k$-forests for a graph $\Gamma$.

For a disjoint union of $r$  graphs $\gamma=\cup_{i=1}^r\gamma_i$, we say a disjoint union of trees $T=\cup_i t_i$ spans $\gamma$
and write $T|\gamma$, if $t_i$ is a spanning tree for $\gamma_i$. 

We have then an obvious decomposition of all possible spanning forests using the coproduct $\Delta_c$. A spanning forest decomposes into a spanning forest which leaves no loop intact in the cograph together with spanning trees for the subgraph: 
\begin{lem}\label{cccuts}
\be 
\sum_{T|\Gamma^\prime}\left(\Gamma,T\cup \sum_{k=1}^{v_\Gamma}\sum_{F\in\mathcal{F}_k(\Gamma{\prime\prime}),|G_F|=0}F\right)=\sum_{k=1}^{v_\Gamma}\sum_{F\in\mathcal{F}_k(\Gamma)}(\Gamma,F).
\ee
\end{lem}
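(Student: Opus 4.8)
The plan is to exhibit an explicit bijection between the sets indexing the two sides of the identity; both sides will then be the same formal sum of symbols $(\Gamma,F)$ with $F$ a spanning forest of $\Gamma$. The right-hand side is indexed by $\coprod_{k=1}^{v_\Gamma}\mathcal F_k(\Gamma)$, i.e.\ by all spanning forests of $\Gamma$. A term on the left is indexed by a triple $(\gamma,T,F')$, where $\gamma\subseteq\Gamma$ is one of the pieces occurring in $\Delta_c(\Gamma)$ — a disjoint union $\gamma=\coprod_i\gamma_i$ of $2$-connected subgraphs, with the two boundary cases $\gamma=\One$ (so $\Gamma'':=\Gamma//\gamma=\Gamma$) and $\gamma=\Gamma$ (so $\Gamma''=\One$) allowed — together with a spanning tree $T\,|\,\gamma$ for each $\gamma_i$ and a spanning forest $F'$ of $\Gamma''$ with $|G_{F'}|=0$. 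To such a triple I would assign the edge subset $F:=T\cup F'\subseteq E_\Gamma$, and the inverse map will recover $(\gamma,T,F')$ from $F$.

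First I would check that $F=T\cup F'$ is a genuine spanning forest of $\Gamma$. Acyclicity is the only real point: $T$ is a disjoint union of trees, it meets the interior of each $\gamma_i$ in a spanning tree of $\gamma_i$, and $F'\subseteq E_\Gamma\setminus E_\gamma$, so any cycle contained in $F$ must use an edge of $F'$; contracting the forest $T$ then carries such a cycle to a cycle of $\Gamma/T$ built only from $F'$-edges, and since no edge of $F'$ lies inside a single $\gamma_i$, none of these becomes a self-loop, so we obtain an honest cycle of $F'$ inside $\Gamma''$ — contradicting that $F'$ is a forest. An edge count ($|E_T|=|V_\gamma|-\#\{\gamma_i\}$ and $|E_{F'}|=|V_{\Gamma''}|-k$ for $F'\in\mathcal F_k(\Gamma'')$) gives $|E_F|=v_\Gamma-k$, which together with acyclicity and the fact that $F$ meets every vertex of $\Gamma$ shows $F\in\mathcal F_k(\Gamma)$.

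For the inverse, given $F\in\mathcal F_k(\Gamma)$ I form $G^F=\{\Gamma_j\}$ and let $\gamma=\coprod_i\gamma_i$ be the disjoint union of the non-trivial $2$-edge-connected components (the ``$1$PI pieces'') of those $\Gamma_j$ with $|\Gamma_j|>0$; set $T:=F\cap E_\gamma$ and let $F'$ be the image of $F\setminus E_\gamma$ in $\Gamma''=\Gamma//\gamma$. Then $T\,|\,\gamma$ because a spanning tree of a connected graph restricts to a spanning tree of each of its $2$-edge-connected components; $F'$ is readily seen to be a spanning $k$-forest of $\Gamma''$; and, crucially, $|G_{F'}|=0$, because contracting exactly the non-trivial pieces of each $\Gamma_j$ turns the $F$-component inside $\Gamma_j$ into the tree of bridges of $\Gamma_j$, and that tree is also precisely the subgraph of $\Gamma''$ induced on the corresponding $F'$-component, hence has vanishing $H_1$. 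Conversely, applying this to $F=T\cup F'$ built from a triple with $|G_{F'}|=0$: the vanishing forces the subgraph of $\Gamma''$ induced on each $F'$-component to be a tree, whence each $\gamma_i$ is already a \emph{full} subgraph (an omitted internal edge would be a self-loop of $\Gamma''$) and the non-trivial $2$-edge-connected components of $\Gamma_j=\Gamma[V(\Gamma_j)]$ are exactly the $\gamma_i$ meeting that component; thus $\gamma$, and then $T=F\cap E_\gamma$ and $F'=F\setminus E_\gamma$, are recovered. The two maps being mutually inverse, summing the symbols $(\Gamma,F)$ on both sides yields the identity; the boundary term $\gamma=\One$ of $\Delta_c$ contributes precisely the forests with $|G^F|=0$, and $\gamma=\Gamma$ precisely the spanning trees of $\Gamma$.

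I expect the inverse map to be the main obstacle — specifically the assertion that $|G_{F'}|=0$ holds \emph{exactly} when the loopy part of $G^F$ has been entirely absorbed into $\gamma$, together with the bookkeeping of where bridges go: a bridge of a loopy component $\Gamma_j$ is an edge of the cograph $\Gamma''$ and therefore belongs to $F'$, not to $\gamma$, and it is this split that makes the decomposition of a given forest canonical, hence the correspondence bijective. The acyclicity argument and the edge counts in the forward direction are routine by comparison.
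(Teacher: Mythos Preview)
The paper states this lemma without proof; the only commentary is the remark that follows, explaining its role in allowing loops in uncut subgraphs to be integrated out. So there is no ``paper's own proof'' to compare against --- the authors treat the identity as a bookkeeping observation.

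Your argument supplies what the paper omits: a bijection between spanning forests $F$ of $\Gamma$ and triples $(\gamma,T,F')$ indexing the left-hand side. The forward map $F\mapsto T\cup F'$ and the edge count are routine, as you anticipate. The substantive content is in the inverse, and you have identified the right mechanism: given $F$, the condition $|G_{F'}|=0$ on the cograph side forces $\gamma$ to absorb precisely the $2$-edge-connected blocks of the loopy components $\Gamma_j\in G^F$, and nothing more --- an omitted internal edge of some $\gamma_i$ would survive in $\Gamma''$ as a self-loop, and any uncontracted cycle in a $\Gamma_j$ would project to a cycle in the corresponding $F'$-induced subgraph of $\Gamma''$. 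Together these pin down $\gamma$ uniquely, and then $T=F\cap E_\gamma$ and $F'=F\setminus E_\gamma$ follow. The fact you invoke --- that a spanning tree of a connected graph restricts to a spanning tree on each of its $2$-edge-connected blocks --- is standard (all bridges lie in every spanning tree, so the restriction to a block is acyclic, spanning, and has the right edge count).

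One small point of care: the paper's $\Delta_c$ is defined on the Hopf algebra generated by $2$-connected graphs, and the primitive terms $\Gamma\otimes\One$ and $\One\otimes\Gamma$ sit slightly awkwardly when $\Gamma$ itself has bridges. Your handling of the boundary cases ($\gamma=\One$ picking up the forests with $|G^F|=0$, and $\gamma=\Gamma$ the spanning trees) is the natural reading, and it is consistent with how the lemma is used downstream in the Feynman rules for cut graphs. The argument is sound.
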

\begin{remark}
This lemma ensures that uncut subgraphs with loops can have their loops integrated out. The resulting integrals are part of the integrand of the full graph and its variations determined by the cut edges. 
\end{remark}
\section{Kinematics}
For a graph $\Gamma$ with momenta $q_v$ and masses $m_e$, we consider 
the real vectorspace 
\be \label{qgamma}
Q_\Gamma=\mathbb{R}^{e_\Gamma+v_\Gamma(v_\Gamma-1)/2}\
\ee
 of dimension $e_\Gamma+v_\Gamma(v_\Gamma-1)/2$
generated by $m_e^2$ and $q_v\cdot q_w$ (ignoring constraints from finite dimensionality of Minkowski space). 

Note that $Q_\Gamma/(Q_{\Gamma/b_k})$ is a real vector space of dimension $v_\Gamma-1+k$.

A point $p\in Q_\Gamma$ is of the form
\be 
p=\sum_{e\in E_\Gamma} r_e m_e^2+\sum_{i\leq j, i,j\in V_\Gamma}
r_{ij}q_i\cdot q_j,
\ee
with $r_r,r_{ij}\in\mathbb{R}$ and $q_i\cdot q_j:=\sum_{\mu=0}^3 q_{i\mu}q_j^\mu=q_i^0q_j^0-q_i^1q_j^1-q_i^2q_j^2-q_i^3q_j^3$.

The quadric $l(e)\equiv Q(e)$ assigned to edge $e$ with momentum $q(e)$ and mass $m(e)$
is 
\be 
Q(e):=q(e)\cdot q(e)-m_e^2.
\ee 
We say an edge $e$ is on shell if $l(e)=0$ and $q^0(e)>0$.

Here, $q(e)$ is fixed by assigning an internal momentum $k(e)$ to each oriented edge $e$ outside a spanning tree.
These edges together with a choice of a spanning tree define a basis for $H^1(\Gamma)$, and we require that $k(e)$, a loop momentum, is routed through the spanning tree in the orientation given by $e$.
Furthermore we require that external momenta assigned to vertices traverse only through edges in the spanning tree and we require  momentum conservation at each vertex. All external momenta are considered incoming.

\section{Landau singularities}
We consider the universal quadric
\eq{1}{Q: \vartheta=\sum_{i=1}^{e_\Gamma} a_i \ell_i= 0;\quad \ell_i= q_i^2-m_i^2
}
Here $Q\inj \P^{e_\Gamma-1} \times \C^{gD}$ ($D=4$ in four dimensions of spacetime). The external momenta and masses are viewed as fixed. Writing $x_{ij}, 1\le i\le g, 1\le j\le D$ for the coordinate functions on $\C^{Dg}$, we have $\vartheta = \vartheta(a,x)$ and we are interested in singular points of $\vartheta$. 

We can complete the square and rewrite (\cite{SMatrix})
\eq{2}{\vartheta(a,x) = \vec{x'}A\vec{x'}^t+\Phi(a)/\psi(a)
}
Here $\psi(a)=\det(A)$ is the first Symanzik, $x'$ denotes a certain affine linear change of coordinates, and $\phi(a)$ is the second Symanzik. (We omit masses and external momenta from the notation because we view these as fixed.) 

We consider a point $(a^0,x^0)$ which is a singular point of $Q$ such that $\psi(a^0)\neq 0$. For a physical singularity, all the $a^0_e>0$ so this condition holds. ($\psi$ is a sum of monomials with coefficient $+1$.) We can, in fact, weaken this positivity condition. We might say a singularity $(a^0,x^0)$ of $Q$ is {\it weakly physical} if all $a_e^0 \ge 0$ and the set of edges $e$ such that $a_e^0=0$ does not support a loop on the graph. It is straightforward to check in this case that $\psi(a^0) > 0$.

\begin{lem}The mapping $(a^0,x^0) \mapsto a^0$ gives a bijection between weakly physical singularities of $Q$ and singularities $a^0$ of the second Symanzik $\phi(a)$ such that $a^0_e\ge 0$ for all edges $e$ and $\psi(a^0)\neq 0$. 
\end{lem}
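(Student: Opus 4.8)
The plan is to let the completed-square identity \eqref{2} carry the argument. With the external data fixed, for $a$ with $\psi(a)\neq 0$ the restriction $\vartheta(a,\cdot)$ is a nondegenerate quadratic form in the shifted variables $x'$ plus the constant $\Phi(a)/\psi(a)$, so the ``$x$-half'' of the Landau equations for $Q$ pins down $x^0$ uniquely from $a^0$, while the remaining equations become the vanishing of the $a$-gradient of $\Phi/\psi$, which an Euler-relation argument turns into the statement that $a^0$ is a singular point of $\Phi$. I begin with the Landau equations. Since $\vartheta=\sum_i a_i\ell_i$ is linear in the homogeneous variables $a$, a point $(a^0,x^0)\in\P^{e_\Gamma-1}\times\C^{gD}$ is a singular point of $Q$ exactly when $\partial_{a_i}\vartheta(a^0,x^0)=\ell_i(x^0)=0$ for all $i$ and $\partial_x\vartheta(a^0,x^0)=0$; the equation $\vartheta=0$ needed for $(a^0,x^0)$ to lie on $Q$ then follows from the first set of equations by Euler's relation in the $a$'s. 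So a weakly physical singularity of $Q$ is a pair $(a^0,x^0)$ with (a) $\ell_i(x^0)=0$ for all $i$, (b) $\partial_x\vartheta(a^0,x^0)=0$, and (c) $a^0_e\ge 0$ for all $e$ with $\{e:a^0_e=0\}$ supporting no loop; by the remark preceding the lemma, (c) forces $\psi(a^0)>0$.

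Fix $a^0$ with $\psi(a^0)\neq 0$. By \eqref{2}, as a function of $x$ alone $\vartheta(a^0,x)$ is a nondegenerate quadratic form in $x'=x+v(a^0)$ --- nondegenerate since $\det A=\psi$ is nonzero at $a^0$ --- plus the constant $\Phi(a^0)/\psi(a^0)$; the coordinates of the shift $v$ are polynomials in $a$ over $\psi(a)$, hence regular where $\psi\neq 0$, and $v$ is homogeneous of degree $0$ in $a$, so it descends to $\P^{e_\Gamma-1}$. The $x$-gradient of such a function vanishes precisely at the centre of the quadric, $x'=0$, i.e.\ at $x^0=-v(a^0)$. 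So (b) forces $x^0=-v(a^0)$, which already shows $(a^0,x^0)\mapsto a^0$ is injective, with only possible inverse $a^0\mapsto(a^0,-v(a^0))$.

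For (a), differentiate \eqref{2} with respect to $a_i$ at fixed $x$: since $\partial_{a_i}\vartheta=\ell_i$, we get $\ell_i=\partial_{a_i}(\text{quadratic part})+\partial_{a_i}(\Phi/\psi)$, and $\partial_{a_i}$ of the quadratic part is a sum of terms each at least linear in $x'$ (one from differentiating the coefficients, two from differentiating the shift $v(a)$ hidden inside $x'$), hence vanishing at $x'=0$. Evaluating at $x^0=-v(a^0)$ thus gives $\ell_i(x^0)=\partial_{a_i}(\Phi/\psi)(a^0)$ for every $i$, so (a) is equivalent to $\nabla_a(\Phi/\psi)(a^0)=0$. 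Because $\psi(a^0)\neq 0$, this reads $\psi(a^0)\,\partial_{a_i}\Phi(a^0)=\Phi(a^0)\,\partial_{a_i}\psi(a^0)$ for all $i$; contracting with $a^0_i$, summing over $i$, and invoking Euler's relation for the homogeneous polynomials $\Phi$ and $\psi$, one obtains $(\deg\Phi-\deg\psi)\,\Phi(a^0)\psi(a^0)=0$, whence $\Phi(a^0)=0$ since $\deg\Phi=\deg\psi+1$ and $\psi(a^0)\neq 0$; feeding this back in gives $\partial_{a_i}\Phi(a^0)=0$ for all $i$. Conversely, if $\Phi$ is singular at $a^0$ (so $\partial_{a_i}\Phi(a^0)=0$, hence $\Phi(a^0)=0$ by Euler) then $\nabla_a(\Phi/\psi)(a^0)=0$ trivially. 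Thus, for $a^0$ with $\psi(a^0)\neq 0$, condition (a) at $x^0=-v(a^0)$ holds iff $a^0$ is a singular point of the second Symanzik.

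It remains to match the positivity conditions and assemble. From the spanning-tree expansion $\psi=\sum_T\prod_{e\notin T}a_e$, for $a^0$ with all $a^0_e\ge 0$ one has $\psi(a^0)\ge 0$, and $\psi(a^0)>0$ iff some spanning tree $T$ omits only edges with $a^0_e>0$, i.e.\ iff $\{e:a^0_e=0\}$ lies inside a spanning tree, i.e.\ iff $\{e:a^0_e=0\}$ supports no loop. So, granted $a^0_e\ge 0$, the conditions ``$\psi(a^0)\neq 0$'' and ``$\{e:a^0_e=0\}$ carries no loop'' coincide; hence (c) is exactly the pair of conditions ``$a^0_e\ge 0$ and $\psi(a^0)\neq 0$'' appearing on the $\Phi$-side. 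Putting the pieces together, $(a^0,x^0)\mapsto a^0$ carries weakly physical singularities of $Q$ into singular points $a^0$ of $\phi$ with $a^0_e\ge 0$ and $\psi(a^0)\neq 0$; it is injective by the second step; and it is surjective because, given such an $a^0$, the pair $(a^0,-v(a^0))$ is a weakly physical singularity of $Q$, by the converse directions of the two previous steps together with the matching just made. I expect the main obstacle to be the third step: noticing that the $a_i$-derivative of the quadratic part of \eqref{2} vanishes at its own centre, and pushing through the Euler-relation bookkeeping that upgrades ``$\nabla(\Phi/\psi)=0$'' to ``$\Phi$ singular'', all while keeping careful track of the projective scaling in $a$ and of the fact that the coordinate change $x'$ is regular exactly where $\psi\neq 0$, so that every object in the argument is genuinely defined.
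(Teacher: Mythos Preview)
Your argument is correct and follows essentially the same route as the paper: both hinge on the completed-square identity \eqref{2}, use the vanishing of $\partial_x\vartheta$ to pin down $x'=0$ uniquely, and then read off that the remaining $a$-equations say exactly that $a^0$ is singular on the second Symanzik. The paper phrases this last step geometrically (the slice $Q\cap\{\partial_x\vartheta=0\}$ projects isomorphically onto $\{\Phi/\psi=0\}$, and singular points go to singular points by a tangent-space count), whereas you unpack it algebraically via $\ell_i(x^0)=\partial_{a_i}(\Phi/\psi)(a^0)$ and the Euler-relation manoeuvre; your version is more explicit and also spells out the matching of ``$\{e:a^0_e=0\}$ carries no loop'' with ``$\psi(a^0)\neq 0$'', which the paper only states before the lemma.
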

\begin{proof} One can obtains the locus $\Phi(a)/\psi(a)=0$ in $\P^{e_\Gamma-1}-\{\psi=0\}$ by intersecting $Q$ with the $gD$ affine linear forms $\partial\vartheta/\partial x_{ij}=0$. Said another way, this intersection in $(\P^{e_\Gamma-1}-\{\psi(a)=0\})\times \C^{gD}$ projects isomorphically to the locus $\phi(a)/\psi(a)=0$ in $\P^{e_\Gamma-1}-\{\psi=0\}$. Since $\partial\vartheta/\partial x_{ij}$ necessarily vanishes at any singular point of $Q$, and since each intersection cuts down the tangent space dimension by at most one, we see that singular points on $Q$ give rise to singular points on the intersection. 

Conversely, given a singular point $a^0$ of $\phi$ such that $\psi(a^0)\neq 0$, it follows from \eqref{2} that the point $(a^0, x''=0)$ is a singular point of $Q$. 
\end{proof}

A Landau singularity is a point $p=(p_A,p_Q)\in \mathbb{P}^{e_\Gamma-1}\times Q_\Gamma$ such that 
\be
\Phi(p)=0, \left(\frac{\partial}{\partial A_e}\Phi_e\right)(p)=0,\,\forall e.
\ee

A Landau singularity is leading if $p_A$ lies in the interior of the simplex $\sigma_\Gamma$,
\be \label{simplex}
\sigma_\Gamma:\{\prod_e A_e=0\},
\ee
 so if it is physical as opposed to weakly physical.

\section{Thresholds}
The cut $\epsilon_2$ defines a normal threshold given by  
\be
s_0=s_0(\epsilon_2)= (\sum_{e\in E_{\Gamma_2}}m_e)^2,
\ee
the corresponding pseudo thresholds allow for negative signs:
\be
 (\sum_{e\in E_{\Gamma_2}}\pm m_e)^2,
\ee
and are unphysical thresholds.

The normal threshold corresponds to a Landau singularity of the reduced graph $\Gamma_2$. Given as a singular point of the second Symanzik polynomial $\Phi_{\Gamma_2}$ 
it is at  $p=(p_A,p_Q)$:
\be p_Q: s=(\sum_{e\in E_{\Gamma_2}}m_e)^2,\ee 
and
\be 
p_A: \{A_im_i=A_jm_j\}.
\ee

We call the thresholds $s_{j-2}$ assigned to cut $\epsilon_j$, $j>2$, anomalous.
They are defined with respect to the channel $s$  defined by the Cutkosky cut $\epsilon_2$.

They can be computed studying discriminants of graph polynomials as points on a one-dimensional real subspace of $Q_\Gamma$ defined by the associated channel $s$.
See Thm.(\ref{anomalous}).

\begin{remark}
The divisor assigned to a cut $\epsilon_j$ is a function of  complex variables in $Q_{\Gamma_{F_j}}\subseteq Q_\Gamma$, $F_j$ the $j$-forest assigned to $\epsilon_j$. Physical Landau singularities determine a function $s(p_A),p_A\in \mathbb{P}^{e_{\Gamma_{F_j}}-1}$ for fixed real kinematical variables on the co-dimenion one subspace of $Q_{\Gamma_{F_j}}$ determined by the channel variable $s$. Anomalous thresholds minimize this function.
\end{remark}

\begin{remark}
If we shrink all edges in a spanning tree $T$ of $\Gamma$ we are left with  $|\Gamma|$ edges which constitute a rose $r_{|\Gamma|}$  with $|\Gamma|$ petals. 
There are no possible cuts,  and hence no threshold.
\end{remark}

\section{Graph Polynomials and Feynman rules}
\subsection{Symanzik polynomials}
Let $\psi(\Gamma),\phi(\Gamma)$ be the two usual graph polynomials, and 
\be \Phi(\Gamma)=\phi(\Gamma)-M(\Gamma)\psi(\Gamma),\ee
the full second graph polynomial with masses.
Here,
\be 
M(\Gamma):=\left(\sum_{e\in E_\Gamma}m_e^2 A_e\right).
\ee
We have 
\be
\psi(\Gamma)=\psi(\Gamma/\gamma)\psi(\gamma)+R_\gamma^\Gamma,
\ee
\be
\phi(\Gamma)=\phi(\Gamma/\gamma)\psi(\gamma)+\tilde{R}_\gamma^\Gamma.
\ee
\be
\Phi(\Gamma)=\phi(\Gamma/\gamma)\psi(\gamma)+\bar{R}_\gamma^\Gamma.
\ee
\be
\psi(\Gamma_1\Gamma_2)=\psi(\Gamma_1)\psi(\Gamma_2), 
\ee
\be
\phi(\Gamma_1\Gamma_2)=\phi(\Gamma_1)\psi(\Gamma_2)+\phi(\Gamma_2)\psi(\Gamma_1), 
\ee
\be
\Phi(\Gamma_1\Gamma_2)=\Phi(\Gamma_1)\psi(\Gamma_2)+\Phi(\Gamma_2)\psi(\Gamma_1).
\ee
Here, the remainders $R_\gamma^\Gamma$, $\tilde{R}_\gamma^\Gamma$, $\bar{R}_\gamma^\Gamma$
are all of higher degrees in the subgraph variables than $\psi(\gamma)$. This is crucial to achieve renormalizability \cite{BrownKreimer}.
 
For two adjacent vertices $x,y\in V_\Gamma$ and $\mathbb{R}\ni r>0$ let us also define
\be 
\phi^r_{x,y}(\Gamma)=\sum_{T_1\cup T_2} (Q(T_1)\cdot Q(T_2))^r \prod_{e\not\in T_1\cup T_2}A_e,
\ee
where 
\be 
(Q(T_1)\cdot Q(T_2))^r=(Q(T_1)\cdot Q(T_2))-r,
\ee 
if $T_1\cup T_2$ separates $x,y$, and
\be 
(Q(T_1)\cdot Q(T_2))^r=(Q(T_1)\cdot Q(T_2)),
\ee
 if it does not (a 2-tree separates two vertices $x,y$ if each of the two trees contains one of them).
\be 
\Phi^r_{x,y}(\Gamma)=\phi^r_{x,y}(\Gamma-b_k)-M(\Gamma)\psi(\Gamma).
\ee

One checks $\Phi^r_{x,y}(\Gamma_1\Gamma_2)=\Phi^r_{x,y}(\Gamma_1)\psi(\Gamma_2)+\Phi^r_{x,y}(\Gamma_2)\psi(\Gamma_1)$
when $x,y$ are adjacent in any one of the two graphs. If $x,y$ are clear from the context, they are often omitted.

\subsubsection{Relations}
Let $x,y$ be two adjacent vertices of $\Gamma$ connected by a $k$-edge $\gamma=b_k$ for some $k$.
Let $r_k$ be the rose $r_k$  obtained by identifying the two vertices $x,y$.

We have
\begin{lem}\label{Lemma1}
\beas
\Phi(\Gamma)
& = & \underbrace{\Phi(\Gamma/\gamma)\psi(\gamma)}_{k-1}
+\underbrace{\Phi(\Gamma-\gamma)\psi(r_k)-M(\gamma)
\psi(\Gamma/\gamma)\psi(\gamma)}_{k}\\
& & -\underbrace{M(\gamma)\psi(\Gamma-\gamma)\psi(r_k)}_{k+1},
\eeas
where we indicate the order in the subgraph variables.
\end{lem}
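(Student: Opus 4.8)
The plan is to identify, for a banana subgraph $\gamma = b_k(x,y)$, the remainder terms in the factorizations $\psi(\Gamma)=\psi(\Gamma/\gamma)\psi(\gamma)+R_\gamma^\Gamma$ and $\phi(\Gamma)=\phi(\Gamma/\gamma)\psi(\gamma)+\tilde R_\gamma^\Gamma$ recalled above, showing $R_\gamma^\Gamma=\psi(\Gamma-\gamma)\psi(r_k)$ and $\tilde R_\gamma^\Gamma=\phi(\Gamma-\gamma)\psi(r_k)$, and then to combine these through $\Phi=\phi-M\psi$ by a short algebraic manipulation. The two Symanzik factorizations are themselves classical deletion--contraction statements, so the work is in pinning down the remainders explicitly and then bookkeeping.

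First I would prove
\[
\psi(\Gamma) = \psi(\Gamma/\gamma)\,\psi(\gamma) + \psi(\Gamma-\gamma)\,\psi(r_k).
\]
Writing $\psi(\Gamma)=\sum_T\prod_{e\notin T}A_e$ over spanning trees and noting that, for a banana $\gamma=b_k(x,y)$, the set $T\cap E_\gamma$ is a subforest of $b_k$ and hence either empty or a single edge $f_i$, I split the sum. If $T\cap E_\gamma=\emptyset$ then $T$ is exactly a spanning tree of $\Gamma-\gamma$ and its $\Gamma$-monomial carries the extra factor $\prod_{i=1}^k A_{f_i}=\psi(r_k)$; these terms sum to $\psi(\Gamma-\gamma)\psi(r_k)$. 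If $T\cap E_\gamma=\{f_i\}$ then $T/f_i=T/\gamma$ is a spanning tree of $\Gamma/\gamma$; conversely every spanning tree $S$ of $\Gamma/\gamma$ lifts, by an edge count, to a two-component forest with $x$ and $y$ in distinct components, which is reconnected to a spanning tree of $\Gamma$ by adjoining any one $f_i$ --- a bijection for each fixed $i$ whose $\Gamma$-monomial carries $\prod_{j\ne i}A_{f_j}$, so summing over $i$ yields $\psi(\gamma)\psi(\Gamma/\gamma)$. (If $\Gamma-\gamma$ is disconnected it has exactly two components, because deleting $b_k(x,y)$ can only separate the $x$-side from the $y$-side; then the first case is empty, $\psi(\Gamma-\gamma)$ is read via the product of the two pieces, and the identity still holds.) Adding the two contributions gives the display.

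Second --- and this is the step I expect to be the main obstacle --- I would prove the analogue for the second Symanzik,
\[
\phi(\Gamma) = \phi(\Gamma/\gamma)\,\psi(\gamma) + \phi(\Gamma-\gamma)\,\psi(r_k),
\]
by rerunning the same case split on $F\cap E_\gamma$ for $F=F_1\sqcup F_2$ ranging over spanning $2$-forests in $\phi(\Gamma)=\sum_F\bigl(Q(F_1)\cdot Q(F_2)\bigr)\prod_{e\notin F}A_e$, where $Q(F_j)=\sum_{v\in F_j}q_v$. The combinatorics is governed by the same two bijections as before; the extra ingredient is that contraction of $\gamma$ merges $x$ and $y$ into a vertex of momentum $q_x+q_y$, so the total momentum of the component containing it --- and hence the weight $Q(F_1)\cdot Q(F_2)$ --- is unchanged under $F\mapsto F/\gamma$ and does not depend on which $f_i$ is used. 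Hence the $F\cap E_\gamma=\emptyset$ terms sum to $\phi(\Gamma-\gamma)\psi(r_k)$ and the $F\cap E_\gamma=\{f_i\}$ terms to $\phi(\Gamma/\gamma)\psi(\gamma)$. The delicate points here are the verification that lifting a spanning $2$-forest of $\Gamma/\gamma$ always separates $x$ from $y$ (so adjoining any single banana edge restores a $2$-forest rather than creating a cycle) and the correct reading of $\phi(\Gamma-\gamma),\psi(\Gamma-\gamma)$ in the degenerate disconnected case.

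Finally I would assemble the lemma. Write $M(\Gamma)=M(\gamma)+M'$ with $M':=M(\Gamma/\gamma)=M(\Gamma-\gamma)=\sum_{e\notin E_\gamma}m_e^2A_e$ (shrinking or deleting the banana leaves the same complementary edge set and the same masses). Substituting the two identities above,
\begin{align*}
\Phi(\Gamma) &= \phi(\Gamma) - M(\Gamma)\psi(\Gamma)\\
&= \bigl(\phi(\Gamma/\gamma)\psi(\gamma) + \phi(\Gamma-\gamma)\psi(r_k)\bigr) - \bigl(M(\gamma)+M'\bigr)\bigl(\psi(\Gamma/\gamma)\psi(\gamma) + \psi(\Gamma-\gamma)\psi(r_k)\bigr)\\
&= \bigl(\phi(\Gamma/\gamma) - M'\psi(\Gamma/\gamma)\bigr)\psi(\gamma) + \bigl(\phi(\Gamma-\gamma) - M'\psi(\Gamma-\gamma)\bigr)\psi(r_k)\\
&\quad - M(\gamma)\psi(\Gamma/\gamma)\psi(\gamma) - M(\gamma)\psi(\Gamma-\gamma)\psi(r_k),
\end{align*}
and since $\phi(\Gamma/\gamma) - M'\psi(\Gamma/\gamma) = \Phi(\Gamma/\gamma)$ and $\phi(\Gamma-\gamma) - M'\psi(\Gamma-\gamma) = \Phi(\Gamma-\gamma)$ this is precisely the asserted identity. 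The degree labels then follow by inspection: in the variables $A_{f_1},\dots,A_{f_k}$ one has $\deg\psi(\gamma)=k-1$, $\psi(r_k)=A_{f_1}\cdots A_{f_k}$ is homogeneous of degree $k$, and $\deg M(\gamma)=1$, while $\psi(\Gamma/\gamma),\psi(\Gamma-\gamma),\Phi(\Gamma/\gamma),\Phi(\Gamma-\gamma)$ are free of these variables, so the three grouped terms are homogeneous of degrees $k-1$, $k$ and $k+1$, matching the underbraces.
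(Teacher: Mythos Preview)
Your proof is correct and follows essentially the same route as the paper: identify the remainders $R_\gamma^\Gamma=\psi(r_k)\psi(\Gamma-\gamma)$ and $\tilde R_\gamma^\Gamma=\psi(r_k)\phi(\Gamma-\gamma)$ for a banana subgraph, use $M(\Gamma)=M(\Gamma-\gamma)+M(\gamma)$, and assemble via $\Phi=\phi-M\psi$. The paper merely asserts these identities without the combinatorial justification you supply (and in fact writes $\psi(\gamma)$ in place of $\psi(r_k)$ for $\tilde R_\gamma^\Gamma$, which appears to be a typo; your version is the correct one).
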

\begin{proof}Elementary from the definition of graph polynomials $\psi,\phi,\Phi$. Indeed, for $k$-edges as subgraphs one confirms immediately
$R^\Gamma_\gamma=\psi(r_k)$ $\psi(\Gamma-\gamma)$, $\tilde{R}^\Gamma_\gamma=\psi(\gamma)\phi(\Gamma-\gamma)$
and $M(\Gamma)=M(\Gamma-\gamma)+M(\gamma)$. 
\end{proof}

\subsection{Feynman Rules}

\subsubsection{renormalized Feynman rules}

For graphs of a renormalizable field theory, we get renormalized Feynman rules for an overall logarithmically divergent graph $\Gamma$ ($w(\Gamma)=0$) 
with logarithmically divergent  subgraphs as
\be 
\Phi_R=\int_{\mathbb{P}_\Gamma}\sum_{F\in\mathcal{F}_\Gamma}(-1)^{|F|}\frac{\ln\frac{\Phi_{\Gamma/F}\psi_F+\Phi^0_F\psi_{\Gamma/F}}{\Phi^0_{\Gamma/F}\psi_F+\Phi^0_F\psi_{\Gamma/F}}}{\psi^2_{\Gamma/F}\psi^2_F} \Omega_\Gamma.
\ee
Formula for other degrees of divergence for sub- and cographs can be found in \cite{BrownKreimer}. In particular, also overall convergent graphs are covered. 

The Hopf algebra in use in the above is based on the renormalization coproduct $\Delta$.

The antipode $S(\Gamma)$ in this Hopf algebra can be written as a forest sum:
\be
S(\Gamma)=-\Gamma-\sum_{F\in\mathcal{F}_\Gamma}(-1)^{|F|} F\times(\Gamma/F).
\ee
This covers the generic graphs we treat here (of which the graphs apparent in a renormalizable field theory are just a subset), where $\Phi^0$ evaluates using 
renormalization conditions which agree with Lagrangean renormalization conditions whenever $\Phi^0$ acts on a graph $\Gamma\subset H_R\subset H$
so that such conditions are defined.

To renormalize generic graphs we need more general kinematic renormalization conditions. There is freedom in the corresponding choice which shall not concern us here: the analytic structure of graphs is independent of the chosen renormalization point, as long as we stay in the realm of kinematic renormalization schemes.
\subsubsection{Renormalized Feynman rules for cut graphs}
We now give the Feynman rules for a graph with some of its internal edges cut. This can be regarded as giving Feynman rules for a pair $(\Gamma,F)$.
Set $G:=\Gamma/f_{v_\Gamma-r}$, $F=T/f_{v_\Gamma-r}-g_{r-1}$.
\be
\Upsilon_G^F:=\int\left(\Phi_R(G^\prime)\prod_{e\in (G^{\prime\prime}-E_{\Gamma_F})}\frac{1}{P(e)}\prod_{e\in E_{\Gamma_F}}\delta^+(P(e))\right)d^{4|G/G^\prime|}k.\label{Upsilon}
\ee
We use Sweedler's notation for the copoduct provided by $\Delta_c$. 

Note that in this formula $\Phi_R(G^\prime)$ has to stay in the integrand. The internal loops of $G^\prime$ have been integrated out by $\Phi_R$,
but $\Phi_R(G^\prime)$ is still an obvious function of loop momenta apparent in $G/G^\prime$.
The existence of this factorization into integrated subgraphs times cut cographs is a consequence of Lem.(\ref{cccuts}).

\subsubsection{Anomalous thresholds}
Let us come back to a generic graph $\Gamma$. We want to determine anomalous thresholds.

For a multiple-edge $\gamma=b_k(x,y),k\geq 1$ between vertices $x,y\in V_\Gamma$, there is an obvious  bijection  between
spanning trees of $\Gamma/\gamma$ and spanning 
2-trees of $\Gamma-\gamma$ which separate $x,y$.
We set $u=(\sum_ {e\in E_\gamma} m_e)^2$.

We analyse the Landau singularities of $\Gamma$ in terms of $\Gamma/\gamma$, where
$\gamma$ is such a $b_k$ multiple-edge. To completely analyse the graph, we have to consider all possibilities to shrink it multiple-edge after multiple-edge. 

\begin{cor}
\[ 
\Phi(\Gamma-\gamma)E^\gamma_k-M(\gamma)
\psi(\Gamma/\gamma)E^\gamma_{k-1}=\Phi^u(\Gamma-\gamma).
\]
\end{cor}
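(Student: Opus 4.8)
The plan is to obtain the identity directly from Lemma~\ref{Lemma1} by passing to the part of $\Phi(\Gamma)$ that is homogeneous of degree $k$ in the edge variables $A_e$, $e\in E_\gamma$, of the multiple edge $\gamma=b_k$. Recall that $E^\gamma_k$ and $E^\gamma_{k-1}$ are the two Symanzik polynomials attached to the banana: $E^\gamma_k=\psi(r_k)=\prod_{e\in E_\gamma}A_e$ (the rose obtained by identifying $x,y$) and $E^\gamma_{k-1}=\psi(\gamma)=\sum_{e\in E_\gamma}\prod_{f\neq e}A_f$, i.e.\ the $k$-th and $(k-1)$-st elementary symmetric functions in the $A_e$, of degrees $k$ and $k-1$. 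With this reading the left-hand side of the corollary is, up to the normalization carried by the $E^\gamma$, exactly the degree-$k$ bracket $\Phi(\Gamma-\gamma)\psi(r_k)-M(\gamma)\psi(\Gamma/\gamma)\psi(\gamma)$ isolated in Lemma~\ref{Lemma1}. The task is then to identify that bracket with $\Phi^u(\Gamma-\gamma)$.

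For this I would unwind $\Phi^u_{x,y}(\Gamma-\gamma)=\phi^u_{x,y}(\Gamma-\gamma)-M(\Gamma-\gamma)\psi(\Gamma-\gamma)$. The polynomials $\phi^u_{x,y}$ and $\phi$ differ only in the terms indexed by spanning $2$-forests $T_1\cup T_2$ of $\Gamma-\gamma$ which separate $x$ from $y$, where the difference is $-u\prod_{e\notin T_1\cup T_2}A_e$. Using the bijection recorded just before the statement --- spanning $2$-forests of $\Gamma-\gamma$ separating $x,y$ correspond, with matching monomials, to spanning trees of $\Gamma/\gamma$ --- the sum of these corrections is exactly $-u\,\psi(\Gamma/\gamma)$, so $\phi^u_{x,y}(\Gamma-\gamma)=\phi(\Gamma-\gamma)-u\,\psi(\Gamma/\gamma)$ and hence $\Phi^u(\Gamma-\gamma)=\Phi(\Gamma-\gamma)-u\,\psi(\Gamma/\gamma)$. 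If $\gamma$ is a ``bridge'' multiple edge, so that $\Gamma-\gamma$ is a disjoint union of trees and bridge-free components, one reads $\psi,\phi,\Phi$ of $\Gamma-\gamma$ through the multiplicativity relations $\psi(\Gamma_1\Gamma_2)=\psi(\Gamma_1)\psi(\Gamma_2)$, etc., and both the bijection and the computation go through componentwise.

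It remains to reconcile the bracket with this expression, i.e.\ to check $\Phi(\Gamma-\gamma)\psi(r_k)-M(\gamma)\psi(\Gamma/\gamma)\psi(\gamma)=\Phi(\Gamma-\gamma)-u\,\psi(\Gamma/\gamma)$. Since $\Phi(\Gamma-\gamma)$ and $\psi(\Gamma/\gamma)$ are independent of the $\gamma$-variables, homogeneity allows one to specialize the $\gamma$-edge variables; the natural choice is the critical locus $A_e=m_e^{-1}$ of the banana, where $\psi(r_k)=\prod_e m_e^{-1}$, $M(\gamma)=\sum_e m_e$ and $\psi(\gamma)=(\sum_e m_e)\prod_e m_e^{-1}$, so that $M(\gamma)\psi(\gamma)=(\sum_e m_e)^2\prod_e m_e^{-1}=u\,\psi(r_k)$; after cancelling the common factor $\prod_e m_e^{-1}$ (the normalization absorbed into $E^\gamma_k,E^\gamma_{k-1}$) the two sides coincide. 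The only real computation here is the elementary-symmetric identity $\big(\sum_e m_e^2A_e\big)\big(\sum_e\prod_{f\neq e}A_f\big)=\big(\sum_e m_e\big)^2\prod_e A_e$ at $A_e=m_e^{-1}$, immediate by substituting $A_e=t/m_e$ and expanding; it is exactly the assertion that $A_e\propto 1/m_e$ solves the Landau equations of $b_k$ at $p^2=u=(\sum_e m_e)^2$, i.e.\ that $u$ is the normal threshold of the banana.

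I expect the main obstacle to be purely bookkeeping: aligning the normalization hidden in $E^\gamma_k,E^\gamma_{k-1}$ with the $\psi(r_k),\psi(\gamma)$ of Lemma~\ref{Lemma1}, and verifying that the ``separating $2$-forest'' accounting inside $\phi^u_{x,y}$ is precisely compatible with the contraction $\Gamma\mapsto\Gamma/\gamma$, especially in the disconnected case where everything must be fed through the multiplicativity rules. Once Lemma~\ref{Lemma1} and that bijection are in hand, what is left is the one-line symmetric-function identity above.
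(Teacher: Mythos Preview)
Your approach is correct and is precisely the one the paper intends: the corollary is stated without proof, but it is set up as an immediate consequence of Lemma~\ref{Lemma1} together with the bijection (recorded just before the corollary) between spanning trees of $\Gamma/\gamma$ and separating $2$-forests of $\Gamma-\gamma$, which is exactly what you use. Your key identity $\Phi^u(\Gamma-\gamma)=\Phi(\Gamma-\gamma)-u\,\psi(\Gamma/\gamma)$ is the heart of the matter, and your specialization to $A_em_e=A_fm_f$ is what the paper's own remark immediately after the corollary makes explicit (``This setting is correct for physical Landau singularities of interest at $A_em_e=A_fm_f$'').

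One small point worth tightening: the overall scale. At $A_e=c/m_e$ your computation gives
\[
\Phi(\Gamma-\gamma)\,\psi(r_k)-M(\gamma)\psi(\gamma)\,\psi(\Gamma/\gamma)
=\psi(r_k)\bigl(\Phi(\Gamma-\gamma)-u\,\psi(\Gamma/\gamma)\bigr)
=\psi(r_k)\,\Phi^u(\Gamma-\gamma),
\]
so the identity holds up to the common factor $\psi(r_k)=\prod_e A_e$. You flag this as ``the normalization absorbed into $E^\gamma_k,E^\gamma_{k-1}$'', which is the right diagnosis; the paper never defines the symbols $E^\gamma_k,E^\gamma_{k-1}$, and in the subsequent derivation it works projectively in $t_\gamma,t_{\Gamma/\gamma}$ (dividing through by the appropriate monomial), so the overall scale is immaterial there. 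It would strengthen your write-up to say this in one sentence rather than leave it as an expectation.
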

\begin{remark}
This setting is correct for physical Landau singularities of interest at $A_em_e=A_fm_f$, $e,f\in E_\gamma$.
Else, $u$ has to be modified according to Thm.(\ref{anomalous}).
\end{remark}

For the multiple edge subgraph $\gamma$ we switch to variables $A_i=t_\gamma b_i$, and for the edges in $E_{\Gamma/\gamma}=E_{\Gamma-\gamma}$, we switch to
$A_i=t_{\Gamma/\gamma}a_i$.

The equation for a zero of the second Symanzik polynomial becomes
\beas
\Phi(\Gamma) & = & t_{\Gamma/\gamma}^{e_{\Gamma/\gamma}}\Phi(\Gamma/\gamma)(\{a\})t_\gamma^{k-1}\psi(\gamma)(\{b\})\\
 & & +
t_{\Gamma/\gamma}^{e_{\Gamma/\gamma}-1}t_\gamma^k\Phi^u(\Gamma-\gamma)(\{a\})\\
 & & -t_{\Gamma/\gamma}^{e_{\Gamma/\gamma}-2}t_\gamma^{k+1}M_\gamma(\{b\})
\psi(\gamma)(\{b\})\psi(\Gamma-\gamma)(\{a\})\\
 & = & 0.
\eeas
Divide by $t_{\Gamma/\gamma}^{e_{\Gamma/\gamma-2}}t_\gamma^{k-1}$ to get an
quadratic  equation of the form
\be
-t_{\Gamma/\gamma}^2X(\{a,b\})+t_{\Gamma/\gamma}t_\gamma Y(\{a\}))-t_\gamma^2Z(\{a,b\})) =0.
\ee

We set
\be
X(\{a,b\})=-\Phi(\Gamma/\gamma)(\{a\})\psi(\gamma)(\{b\}), 
\ee
\be
Y(\{a\})=\Phi^u(\Gamma-\gamma)(\{a\}),
\ee
\be
Z(\{a,b\})=M_\gamma(\{b\})
\psi(\gamma)(\{b\})\psi(\Gamma-\gamma)(\{a\}). 
\ee
we note that $X(\{a,b\})$ does not only depend on edge variables but also on masses and momenta
$X(\{a,b\})=X(\{a,b\},s,\{Q,M\})$. It is linear in the channel variable $s$ and we can write
\be 
X(\{a,b\})=X_s(\{a,b\}) s+N(\{a,b\},\{Q,M\}).
\ee
We also have $Y(\{a\})=Y(\{a\},\{Q,M\})$ and $Z(\{a,b\})=Z(\{a,b\},\{M\})$.
The graph $\Gamma/\gamma$ has a Landau singularity $(p_A^{\Gamma/\gamma},p_Q^{\Gamma/\gamma})$. Define $Y_0:=Y(p_A^{\Gamma/\gamma},\{Q,M\})$.

Define also the discriminant
\be 
D:=Y^2+4ZX
\ee
and the function
\be 
s(\{a,b\},\{Q,M\})=\frac{Y^2-4ZN}{4ZX_s}.
\ee

Let $\mathit{T}_s^\Gamma$ be the set of all ordered spanning trees $T$ of a fixed graph $\Gamma$ which allow for the same associated channel variable $s$. 

We have the following result.
\begin{thm}\label{anomalous}
i) A necessary and sufficient condition for a physical Landau  singularity is $Y_0>0$ with $D=0$.\\
ii)  The corresponding anomalous threshold $s_F$ for fixed masses and momenta $\{M,Q\}$ is given as the minimum of $s(\{a,b\},\{Q,M\})$ varied over edge variables $\{a,b\}$. It is finite ($s_F>-\infty$) if the minimum is a point inside $p\in\mathbb{P}^{e_\Gamma-1}$ in the interior of the simplex $\sigma_\Gamma$ (see (\ref{simplex})). If it is on the boundary
of that simplex, $s_F=-\infty$.\\
iii) If for all $T\in \mathit{T}_s^\Gamma$ and for all their forests $(\Gamma,F)$ we have $s_F>-\infty$, the Feynman integral $\Phi_R(\Gamma)(s)$ is real analytic as a function of $s$
for $s<\min_F\{s_F\} $.
\end{thm}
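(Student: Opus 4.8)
The plan is to run the entire argument relative to a fixed multiple edge $\gamma=b_k(x,y)$ and the block rescaling $A_i=t_\gamma b_i$ for $i\in E_\gamma$, $A_i=t_{\Gamma/\gamma}a_i$ for $i\in E_{\Gamma-\gamma}=E_{\Gamma/\gamma}$, as set up before the statement. First I would feed Lemma~\ref{Lemma1} and the Corollary above into the equation $\Phi(\Gamma)=0$; after dividing by $t_{\Gamma/\gamma}^{e_{\Gamma/\gamma}-2}t_\gamma^{k-1}$ this collapses to the quadratic $-t_{\Gamma/\gamma}^2X+t_{\Gamma/\gamma}t_\gamma Y-t_\gamma^2Z=0$ in the block ratio $t:=t_{\Gamma/\gamma}/t_\gamma$, with $X,Y,Z$ as defined, $Z=M_\gamma\psi(\gamma)\psi(\Gamma-\gamma)$ visibly a positive combination of monomials, and $X=X_s s+N$ affine in the channel variable $s$. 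Next I would split the Landau equations $\partial_{A_e}\Phi(\Gamma)=0$ by applying Euler's relation separately in the two homogeneity weights: this isolates (a) the $t$-derivative of the quadratic, which forces a double root and hence $D=0$; (b) the $\{b\}$-derivatives, which are the internal Landau problem of the banana and, on the physical branch $A_em_e=A_fm_f$, pin the shift to $u=(\sum_{e\in E_\gamma}m_e)^2$ (other branches give the corresponding critical value of $u$, which is exactly what part ii) accounts for); and (c) the $\{a\}$-derivatives, which at $D=0$ are the reduced Landau equations of $\Gamma/\gamma$ with the collapsed edge carrying mass$^2$ equal to $u$.

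For i) I would argue both directions from this reduction together with the passage, established above through \eqref{2}, between singular points of the quadric $Q$ with $\psi\ne0$ and singular points of $\Phi$. A physical Landau singularity of $\Gamma$ is a common zero of $\Phi(\Gamma)$ and all $\partial_{A_e}\Phi(\Gamma)$ with every $A_e>0$; by the reduction it forces $D=0$, the ensuing double root is $t_\ast=Y/(2X)$ with $t_\ast^2=Z/X$, and since $Z>0$ on the open simplex this forces $X>0$, whence positivity of the $A_e$ forces $t_\ast>0$, i.e.\ $Y>0$; evaluating at the reduced Landau point gives $Y_0>0$. Conversely, $D=0$ and $Y_0>0$ at an interior point give $X>0$, $t_\ast>0$, and, rescaling back, a point with all $A_e>0$ that \eqref{2} identifies as a singular point of $\Phi(\Gamma)$, i.e.\ a physical Landau singularity.

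For ii) I would solve $D=0$ for $s$ using $X=X_s s+N$ and $Z,X_s\ne0$ on the open simplex, producing the stated function $s(\{a,b\},\{Q,M\})$; for fixed $\{Q,M\}$ this realises the $s$-threshold divisor as the image under $s(\cdot)$ of $\{D=0\}\subset\P^{e_\Gamma-1}$. With the $m_e^2-i0$ prescription the physical-sheet threshold is the least $s$ at which the integration contour is pinched from the causal side, which is $\min s(\cdot)$ over the closure of $\sigma_\Gamma$ in \eqref{simplex}; by i) an interior critical point of $s(\cdot)$ is exactly a physical Landau singularity, so an interior minimum is a genuine finite anomalous threshold $s_F>-\infty$, while an infimum attained only on $\partial\sigma_\Gamma$ is, on the relevant face, the threshold function of a proper subgraph whose further minimisation either recurses to a lower sub-threshold or is unbounded below, so that no finite physical-sheet singularity of $\Phi_R(\Gamma)$ appears at this level, recorded as $s_F=-\infty$. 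For iii) I would combine i)--ii) with completeness of the forest enumeration: by Lemma~\ref{cccuts}, the cubical complex of \S\ref{matrixM}, and the existence of the requisite connected binary refinements (Lemma~\ref{connected}), the on-shell configurations that can pinch the parametric integral in the $s$-channel are exhausted by the pairs $(\Gamma,F)$ with $F$ a forest of some $T\in\mathit{T}_s^\Gamma$, and their physical thresholds are exactly the finite $s_F$. Hence for $s<\min_F s_F$ the point $s$ avoids every threshold divisor, and by Cutkosky's theorem (our first main result) $\Phi_R(\Gamma)(s)$ equals a suitably subtracted dispersion integral in $s$ whose density is the cut integral \eqref{Upsilon}, whose lower limit is $\min_F s_F$, and whose constituent cut cographs $\Phi_R(G')$ are, by induction on the loop number, real analytic below their own thresholds; this yields holomorphy of $\Phi_R(\Gamma)$ in a complex neighbourhood of $(-\infty,\min_F s_F)$. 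Finally, for real $s<\min_F s_F$ the integrand is real---$\psi>0$ on $\sigma_\Gamma$, no $\delta^+$ has support, and no denominator vanishes with the $m_e^2-i0$ prescription---so $\Phi_R(\Gamma)(s)$, being holomorphic and real on an interval, is real analytic there.

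The step I expect to be the main obstacle is the completeness claim underlying iii): showing that the forest/cubical-complex bookkeeping really does capture \emph{all} singularities of the renormalised amplitude below $\min_F s_F$---in particular that no second-type singularities and no renormalisation-induced singularities intervene---and that the iterated dispersion representation is legitimate all the way down to that threshold. A secondary subtlety is the boundary case in ii): rigorously justifying that an infimum of $s(\cdot)$ on $\partial\sigma_\Gamma$ genuinely fails to produce a branch point on the physical sheet, which requires tracking the Picard--Lefschetz monodromy across that face rather than merely locating the discriminant locus.
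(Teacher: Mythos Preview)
Your arguments for i) and ii) are essentially a detailed fleshing-out of the paper's own proof, which is extremely terse: for i) the paper writes only ``follows from the definition of a physical Landau singularity,'' and for ii) ``follows from analysing $\partial_e \frac{Y^2-4ZN}{4ZX_s}=0$, using that the denominator vanishes along the boundaries $A_e=0$.'' Your reduction via the block rescaling, the double-root condition $D=0$, and the positivity chain $Z>0\Rightarrow X>0\Rightarrow t_\ast>0\Rightarrow Y>0$ is exactly what the paper's setup intends, and your treatment of the boundary case in ii) via the denominator $4ZX_s$ vanishing on $\partial\sigma_\Gamma$ matches the paper's one-line hint.

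For iii), however, you take a substantially more elaborate route than the paper. The paper's entire argument is the single clause ``follows as for sufficiently low $s$ as indicated $\Phi_R(\Gamma)(s)$ is real.'' The intended point is simply that below every threshold the parametric integrand is real on the open simplex (the logarithm's argument $\Phi_\Gamma/\Phi_\Gamma^0$ stays positive, $\psi_\Gamma>0$), so $\Phi_R(\Gamma)(s)$ is real-valued there; combined with holomorphy in $s$ from differentiation under the integral sign, real analyticity follows directly. Your route through Cutkosky's theorem, the dispersion representation, forest completeness, and induction on loop number is not wrong in spirit, but it is both overkill and structurally awkward in this paper: Cutkosky's theorem (Section~\ref{CutkThm}) and the dispersion machinery are developed \emph{after} this theorem, and the dispersion section explicitly \emph{invokes} the real analyticity from iii) as its starting hypothesis (``If there is a lowest finite real threshold $s_k>-\infty$, then for $s<s_k$, we have real analyticity in this channel''). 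So using them to prove iii) is circular as the paper is organised. The completeness worry you flag at the end is genuine for your approach but simply does not arise in the paper's: one does not need to enumerate all pinch configurations or control second-type singularities, only to observe that the integrand is real.
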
  
\begin{proof}
i) follows from the definition of a physical Landau singularity.\\
ii) follows from analysing $\partial_e \frac{Y^2-4ZN}{4ZX_s}=0$, using that the denominator vanishes along the boundaries $A_e=0$.\\
iii) follows as for sufficiently low $s$ as indicated $\Phi_R(\Gamma)(s)$ is real.
\end{proof}
\begin{remark}
We treat massive quantum fields here, $m_e>0,\forall e$. Quadratic equations turn linear, discriminants do not exist and the analysis proceeds differently
in the massless case.
\end{remark}
\section{Cutkosky's theorem}\label{CutkThm}

Our objective in this section is a proof of Cutkosky's theorem (theorem \ref{ct}). We give the proof in the general case subject to the existence of a convenient renormalization scheme. In the case where the cuts leave no remaining loops, the argument does not require renormalization and is much simpler.

\subsection{Thom's First Isotopy Lemma}\label{isotopy}

We want to study the monodromy of a family of unions of propagator quadrics which are affine varieties of the form $(x-p)^2-m^2$ where $x,p\in \A^D$ and the square is a Minkowski or Euclidean quadratic form. This is a ``dirty'' picture in the sense that there are many complicating features which do not play a role and which we will need to filter out. For example, the quadrics have ``trivial'' singularities (e.g. if $m=0$ and $p=0$, the locus $x=0$ is singular.) Further, we will need to compactify, and compactification will introduce new singularities at infinity. The first isotopy lemma of Thom enables us to disguard this ``noise'' and focus on the vanishing cycle which is the phenomenon of physical interest. For more details on this material, the reader is referred to the wonderful notes of Mather \cite{Mather}. 

Let $M$ be a smooth manifold, and let $S\subset M$ be a closed subset which admits a {\it Whitney pre-stratification}. We do not go into detail about Whitney pre-stratifications. The condition that $S$ admit one is very weak. It is always the case, for example, when $M$ is a smooth algebraic variety over $\C$ and $S$ is a (singular) closed subvariety. For complete details, see (op. cit.). Let $P$ be another smooth manifold, and let $f:M \to P$ be a smooth map. 
\begin{thm}[Thom's first isotopy lemma] Suppose $f|S: S \to P$ is proper, and $f|X: X \to P$ is a submersion for every stratum $X \subset S$. Then $(S,f|S,P)$ is a locally trivial bundle. In particular, if $P$ is contractible, then for $p\in P$ we have $S \cong f^{-1}(p)\times P$. 
\end{thm}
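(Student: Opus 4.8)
The statement is the classical first isotopy lemma of Thom and Mather, and the plan is simply to reproduce the standard argument via controlled vector fields, following Mather's notes \cite{Mather}. Since local triviality is a local assertion over the base, and since the conclusion for contractible $P$ follows by pulling back the trivialization along a contraction, I would first reduce to the case $P = (-1,1)^n \subset \R^n$ and show that $(S, f|S, P)$ is trivial there. By a further induction on $n$ it then suffices, for each coordinate $t_j$ on $P$, to lift the constant vector field $\partial/\partial t_j$ to a vector field $\xi_j$ on $S$ which is \emph{stratified} (tangent to every stratum $X\subset S$ and mapping under $df$ to $\partial/\partial t_j$) and whose flow is complete; integrating the $\xi_j$ in turn then produces the required stratum-preserving homeomorphism $S \cong f^{-1}(p)\times P$.

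The heart of the matter is to construct such a lift which is continuous across strata. First I would fix a system of \emph{control data} for the Whitney pre-stratification of $S$: for each stratum $X$ a tubular neighbourhood $T_X$ of $X$ in $M$ with projection $\pi_X : T_X \to X$ and tubular (``distance'') function $\rho_X : T_X \to [0,\infty)$, chosen so that whenever $X < Y$ one has the commutation relations $\pi_X\circ\pi_Y = \pi_X$ and $\rho_X\circ\pi_Y = \rho_X$ on the overlap. The existence of such a system is the technical input supplied by the Whitney (a) and (b) conditions. Then I would build $\xi_j$ by induction on the dimension (equivalently, the depth) of strata: on the open strata a lift exists because $f|X$ is a submersion, and a partition of unity patches local lifts; descending to a stratum $X$ of lower dimension, one extends the vector field already built on $X$ to a neighbourhood of $X$ by requiring $\xi_j$ to be tangent to the fibres of $\rho_X$ and to satisfy $d\pi_X(\xi_j) = \xi_j|_X$. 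A stratified vector field obeying these conditions relative to all strata is by definition \emph{controlled}, and controlledness is precisely what guarantees that the flows on the individual strata fit together into a single continuous (and locally trivializing) flow on $S$.

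Finally I would integrate. Because $f|S$ is proper, the trajectory of a controlled $\xi_j$ through any point of $S$ is defined for as long as its image in $P$ remains in $P$ — the $f$-preimage of a compact set is compact, so there is no escape in finite time — whence the composite of the flows of $\xi_1,\dots,\xi_n$ is a globally defined homeomorphism of $S$ covering the translation flow on $(-1,1)^n$. This yields the local triviality of $(S,f|S,P)$, and restriction to $f^{-1}(p)$ gives the product description $S\cong f^{-1}(p)\times P$ when $P$ is contractible.

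The step I expect to be the main obstacle is the middle one: producing the control data and verifying that the inductively defined vector fields are genuinely controlled, since this is where the Whitney (b) condition enters essentially and where the only genuine analysis occurs — the estimates near a lower stratum ensuring that the extended field stays tangent to the $\rho_X$-levels and matches the field on $X$ under $\pi_X$. Everything surrounding it is formal. For the use made of the lemma in this paper one does not need the full strength: $S$ will be an honest closed complex algebraic subvariety of a smooth variety, hence automatically Whitney-stratifiable, so I would be content to cite \cite{Mather} for the control-data package and expend effort only on checking the hypotheses in the cases at hand, namely properness of $f|S$ after a suitable compactification and the submersion property of $f$ on each stratum.
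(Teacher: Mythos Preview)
Your outline is correct and is precisely the standard Mather argument via control data and controlled vector fields. Note, however, that the paper does not supply its own proof of this statement: Thom's first isotopy lemma is quoted as a classical result, with the reader referred to Mather's notes \cite{Mather} for details, so there is nothing to compare against beyond observing that your sketch is exactly the proof one finds there.
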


We will apply the isotopy lemma as follows. We are given $T \subset M$ closed, admitting a pre-stratification, and a point $m\in T$. Let $B_\ve$ be a small, open ball about $m$ in $M$, and let $S=T-T\cap B_\ve$. Let $p=f(m)$. We replace $P$ by a small contractible neighborhood $U$ of $p$ in $P$ and then we replace $M, T, S$ by their intersections with $f^{-1}(U)$. We can arrange that $S,T$ have Whitney pre-stratifications. Suppose it is the case that $f$ is a submersion on the strata of $S$.  We are interested in the monodromy on the homology of the fibres of $f_T: T \to U$ along a closed path $\phi:[0,1]\to U-\{p\}$. Let $\phi(0)=\phi(1)=q\neq p$. We reduce by pullback to the case where $U\subset \C$ is an open disk and $f: T \to U$ is submersive on strata when restricted over $U-\{p\}$. Applying the isotopy lemma to the pullback of $S\subset T$ over $[0,1]$, we obtain isomorphisms
\eq{}{\begin{CD} f_S^{-1}(\phi(0)) @> \subset >> f_T^{-1}(\phi(0)) \\
@VV \cong V @VV \cong V \\
f_S^{-1}(\phi(1)) @> \subset >> f_T^{-1}(\phi(1)).
\end{CD}
}
For $f_S$, the isotopy lemma applies over the full contractible $U$ which implies that the isomorphism $ f_S^{-1}(\phi(0))\cong f_S^{-1}(\phi(1))$ can be taken to be the identity. (See Proposition (2.4) and Corollaries (2.1) and (2.11) in \cite{Pham3}.) Using this, the variation $var:= \phi -id: H_*(f_T^{-1}(q)) \to H_*(f_T^{-1}(q))$ can be studied locally around $m$. Namely, we have a commutative diagram 
\eq{}{\begin{CD} H_*(f_T^{-1}(q)) @>var >> H_*(f_T^{-1}(q)) \\
@VV \text{excision} V @AA i_* A \\
H_*(f_T^{-1}(q)\cap B_\ve,f_T^{-1}(q)-f_T^{-1}(q)\cap B_\ve) @>var >> H_*(f_T^{-1}(q)\cap B_\ve).
\end{CD}
}

\subsection{Whitney pre-stratifications for graphs}
In this section we give a brief discussion of the Whitney pre-stratifications which underlie the monodromy related to dispersion relations in physics. To a certain extent, the construction is straightforward, only depending on the dimension of space-time and the number of edges of the graph. However, an issue arises at infinity that we do not resolve. The isotopy lemma requires that the map $f$ be proper. In compactifying the Feynman picture, one encounters the blowup of the projectivized first homology of the graph. Strata lying on this blowup will depend in a rather subtle and interesting way on the structure of the graph. (They do not, however, depend on masses or external momenta.) Our study focuses on vanishing cycles which appear at finite distance, and we have not analysed possible exotic classes associated to this structure at infinity. 

Let $G$ be a connected graph with edge set $E$ and vertices $V$. Let $D$ be the dimension of space-time. Write $H:= H_1(G, \C^D)$ for the homology of the graph tensored with $\C^D$. We have the usual exact sequence from topology to which we have added an extra coordinate since we will want to projectivize:
\eq{}{0 \to H \to \C^{DE}\oplus \C\mu \xrightarrow{\partial\oplus 1} (\C^D)^{V,0} \oplus \C\mu \to 0.
}
This gives a rational map on the projective spaces of lines through the origin
\eq{}{f: \P( \C^{DE}\oplus \C\mu) \dasharrow \P((\C^D)^{V,0} \oplus \C\mu).
}
This map is defined off $\P(H) \subset \P( \C^{DE}\oplus \C\mu)$. Let 
$$M=\text{Blow}\Big(\P(H)\subset \P( \C^{DE}\oplus \C\mu)\Big)$$
be the blowup of $\P(H)$, so $f$ extends to a map $g: M \to P:= \P((\C^D)^{V,0} \oplus \C\mu)$. The exceptional divisor $\sE \cong \P(H)\times P \subset M$, and the composition
\eq{}{\P(H)\times P = \sE \subset M \to P
}
is just projection. 

For $e\in E$ we write $e^\vee: \C^{DE}\oplus \C\mu \to \C^D$. Similarly, we write $\mu^\vee: \C^{DE}\oplus \C\mu \to \C$. To each edge $e$ we associate a mass $m_e$. For clarity of exposition, we will assume none of the $m_e$ vanish. We assume $\C^D$ endowed with a non-degenerate quadratic form and we write (abusively) $e^{\vee,2}$ for the evident quadratic form on $ \C^{DE}\oplus \C\mu$. Let $\widetilde Q_e: F_e:= e^{\vee,2}-m_e^2\mu^{\vee,2}$ as a homogeneous quadric on $\P( \C^{DE}\oplus \C\mu)$. We pull $F_e$ back to the blowup $M$ and write $Q_e$ for the corresponding divisor on $M$. Note that $Q_e\cap \sE = (\widetilde Q_e\cap \P(H))\times P$. We write 
\eq{}{Q:= \bigcup_{e\in E} Q_e \subset M.
}

We first consider the Whitney pre-stratification on 
$$Q^0:= Q\cap M^0:=Q\cap (M-\{\mu^\vee=0\})= Q\cap \C^{DE}.$$ 
Let $Q_e^0:e^{\vee,2}-m_e^2=0$ in $\C^D$. Note the $Q_e^0$ are smooth. We stratify $\C^D$ with open stratum $\C^D-Q_e^0$ and closed stratum $Q_e^0$. We view $M^0 = \prod_E\C^D$ as a product of stratified spaces, and give it the product stratification. Then $Q^0 \subset M^0$ is a union of strata, so it inherits a stratification. 

The divisor $\mu=0$ in $M$ has two irreducible components, $\sE$ and the blowup $L$ of $\P(\C^{DE})$ along $\P(H)$. Let $L^0:=L-L\cap \sE = \P(\C^{DE})-\P(H)$. Note that restricting to $L^0$ kills the mass terms. To stratify $Q\cap L^0$ we first stratify $\C^D$ with strata $\{0\}\subset \{e^{\vee}=0\} \subset \{e^{\vee,2}=0\} \subset \C^D$. We then stratify $\prod_E \C^D$
with the product of these stratifications. We remove the smallest stratum $\{0\}\subset\C^{DE}$. The resulting stratification is equivariant for the $\C^\times$-action, so we get a stratification on $\P(\C^{DE})$. Restricting to the open set $L^0 \subset \P(\C^{DE})$ yields the desired stratification.  Again, the structure at $\infty$ is not our primary focus. We simply remark that strata involving the closed stratum $\{e^\vee=0\}$ can be understood from the Whitney pre-stratification of the graph obtained from $G$ by cutting the edge $e$. 

Finally, the stratification of $\sE = \P(H)\times P$ is more subtle. We won't have much to say about monodromy associated to $\sE$ so we merely give an example. Suppose $G=e_1\cup e_2\cup e_3$ is the triangle graph. In this case, $H=\C^D(e_1+e_2+e_3)$. We consider $Q_1\cap Q_2$ on $M-\sE$ and how its closure $Q_{12}$ meets $\sE$. The two equations $e_i^{\vee,2}-m_i^2\mu^{\vee,2},\ i=1,2$ meet $\P(H)=\P(\C^D)$ in the same divisor. The difference $e_1^{\vee,2}-e_2^{\vee,2}-(m_1^2-m_2^2)\mu^{\vee,2}$ lies in the ideal defining $\P(H)$. If we write $e_i^\vee = (x_i^1,\dotsc,x_i^D)$ then this difference lies in the ideal $(x_1^1-x_2^1,\dotsc,x_1^D-x_2^D,\mu^\vee)$. In particular, $Q_{12}$ meets $\sE$ in a family of hyperplanes in $P$ parametrized by the divisor $Q_1\cap \P(H)$. We do not know if this geometry at infinity has any physical consequences.

\subsection{Pham's Vanishing Cycles}

In this section we recall briefly an extension due to F. Pham of the theory of vanishing cycles and the Picard-Lefschetz transformation. More details can be found in \cite{Pham1}, \cite{Pham2}, and \cite{Pham3}.

Let $f: \sX \to \Delta$ be a proper family of varieties over a disk $\Delta \subset \C$. We assume $f$ is smooth except over $0\in \Delta$, and that $f^{-1}(0)$ has a single ordinary double point at $x_0$ and is smooth away from $x_0$. With these hypotheses, we can find coordinates $x_1,\dotsc,x_n$ on $\sX$ near $x_0$ and $t$ on $\Delta$ such that locally the family is given by 
\eq{}{x_1^2+\ldots +x_n^2=t.
}
For $0<\ve<<1$, the homology of a ball $B$ around $x_0$ intersected with $\sX_\ve$ is generated by the class of the real sphere $S_\ve: \sum x_i^2=\ve,\ x_k \in \R$. The monodromy on the smooth fibre $H_*(f^{-1}(\ve),\Q)$ is given by the Picard-Lefschetz formula
\eq{}{c \mapsto c+(-1)^{n(n+1)/2}\langle \text{ex}(c),S_\ve\rangle i_*S_\ve.
}
Here 
$$\text{ex}: H_*(f^{-1}(\ve)) \to H_*\Big(B\cap f^{-1}(\ve),\partial( B\cap f^{-1}(\ve))\Big)
$$
denotes excision and $i_*: H_*(B\cap f^{-1}(\ve)) \to H_*(f^{-1}(\ve))$ is the push-forward. The pairing $\langle \text{ex}(c),S_\ve\rangle$ refers to the natural pairing  
$$ H_{n-1}\Big(B\cap f^{-1}(\ve),\partial( B\cap f^{-1}(\ve))\Big)\otimes H_{n-1}(B\cap f^{-1}(\ve)) \to \Q.
$$

Recall the Feynman amplitude of a graph is an integral of the form $\int_{\R^{Dg}}\frac{d^{Dg}x}{f_1f_2\cdot\cdots \cdot f_n}$. Here the $f_i$ are propagator quadrics which depend on masses and external momenta. The vanishing cycles addressed by Pham's theory occur when some subset of the quadrics $Q_i: f_i=0$ do not meet transversally at a point. In the remainder of this section we develop the theory abstractly, following Pham \cite{Pham1}. 

Let $f:M \to P$ be a smooth, proper map of algebraic varieties. Let $S = \bigcup_{i=1}^n S_i\subset M$ will be a normal crossings divisor. Let $s_0\in \bigcap_{i=1}^nS_i \subset S\subset M$, and let $p_0=f(s_0)$. We will work locally near $s_0$ on $M$, so we may fix analytic coordinates $t_1,\dotsc,t_k$ around $p_0$ on $P$ and $x_1,\dotsc,x_\ell$ around $s_0$ on $M$ in such a way that $t_1,\dotsc,t_k,x_1,\dotsc,x_\ell$ form a full set of coordinates on $M$ near $s_0$. We assume that locally 
\ga{}{S_i: x_i=0;\quad i=1,\dotsc,n-1, \\
S_n: t_1-(x_1+\cdots+x_{n-1}+x_n^2+\cdots+x_\ell^2)=0. 
}
For such a configuration of divisors, the point $s_0$ (origin) is called a {\it pinch point}. Notice that on the smallest stratum $\bigcap_1^n S_i$ of $S$, viewed as a variety fibred over $P$, we have a classical Picard-Lefschetz vanishing cycle local equation, which is a family of spheres $x_n^2+\cdots+x_\ell^2 = t_1$ degenerating as $t\to 0$. For $t_1=\ve>0$, the {\it vanishing sphere} $\rm{vsphere}_\ve$ is the real sphere 
\eq{vs}{\rm{vsphere}_\ve: \ve=x_n^2+\cdots+x_\ell^2;\quad x_i\in \R,\ n\le i\le \ell. 
}
Classically, $\rm{vsphere}_\ve$ is referred to as the vanishing cycle, but we follow Pham here and distinguish $3$ different topological chains, the vanishing sphere, the vanishing cell, and the vanishing cycle. 
\begin{defn}With notation as above, the vanishing cell, 
\eq{}{\rm{vcell}_\ve:x_i \ge 0,\ 1\le i\le n-1;\quad \ve-( x_n^2+\cdots+x_\ell^2) \ge 0.
}
The vanishing cycle $\rm{vcycle}_\ve$ is the iterated tube 
$$\tau_{1,\ve_1}^*\tau_{2,\ve_2}^*\cdots\tau_{n-1,\ve_{n-1}}^*(\rm{vsphere}_\ve).
$$
Here $\ve>>\ve_{n-1}>>\cdots >>\ve_1>0$. The notation $\tau_{i,\ve_i}^*$ refers to pulling back to the circle bundle of radius $\ve_i$ inside the (metrized) normal bundle for $S_1\cap\cdots\cap S_i \subset S_1\cap \cdots \cap S_{i-1}$. This circle bundle is viewed as embedded in $S_1\cap\cdots\cap S_{i-1}$ and not meeting $S_1\cap\cdots\cap S_{i}$. The inequalities $\ve_i<<\ve_{i+1}$ insure that $\rm{vcycle}_\ve$ is a closed chain on $M-\bigcup_{i=1}^n S_i$. 
\end{defn}

The local structure of homology around $s_0$ are computed by Pham to be 
\begin{thm}\label{homology}Let $B$ be a small ball around $s_0$ in $M$. Let $p\in P$ be near $p_0$ and assume $t_1(p)=\ve>0$. Write $d=\dim_\C M_p$.
\nn
(i) We have for reduced homology 
$$\widetilde H_j\Big(\bigcap_1^nS_{i,p}\cap B,\Z\Big)=(0),\ j\neq d-n:= \dim_\C \Big(\bigcap_1^nS_{i,p}\Big).
$$
$$\widetilde{H}_{d-n}\Big(\bigcap_1^nS_{i,p}\cap B,\Z\Big)=\Z\cdot \rm{vsphere}_\ve.
$$
\noindent (ii) In relative homology
$$H_j(B_p,S_p\cap B) = (0), \ j\neq d;\quad H_{d}(B_p,S_p\cap B) =\Z\cdot \rm{vcell}_\ve.
$$
\noindent (iii) For the homology of the complement
$$H_j(B_p-S_p\cap B) = (0),\ j\neq d;\quad H_{d}(B_p-S_p\cap B) =\Z\cdot \rm{vcycle}_\ve.
$$
\end{thm}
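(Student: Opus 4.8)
The plan is to reduce the entire statement to the classical Picard--Lefschetz situation by an iterated ``peeling off'' of the divisors $S_1,\dots,S_{n-1}$, and to organize the bookkeeping via the long exact sequences of the triples $(B_p, S_p\cap B, (S_1\cup\cdots\cup S_{j})_p\cap B)$. First I would set up the local model carefully: after choosing the analytic coordinates as in the hypotheses, the fibre $M_p$ over $p$ with $t_1(p)=\ve$ is (locally, inside $B$) the affine space with coordinates $x_1,\dots,x_\ell$ constrained only by $S_n$, i.e.\ an affine chart of dimension $d$, and $S_{i,p}=\{x_i=0\}$ for $i<n$, while $S_{n,p}$ is the affine quadric $x_1+\cdots+x_{n-1}+x_n^2+\cdots+x_\ell^2=\ve$. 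Substituting the $S_n$ equation to eliminate $x_1$, one sees $M_p\cap B$ is a ball of dimension $d$ in coordinates $(x_1,\dots,x_{n-1},x_n,\dots,x_\ell)$ subject only to the hyperplane-type constraints $S_{i,p}$ for $i<n$ and the fact that $S_{n,p}$ becomes the graph of $x_1=\ve-(x_2+\cdots+x_{n-1}+x_n^2+\cdots+x_\ell^2)$. The upshot is that we have a product-like decomposition $M_p\cap B \approx (B^{n-1}\ \text{with coordinate hyperplanes } x_1,\dots,x_{n-1}) \times (\text{ball with the quadric } x_n^2+\cdots+x_\ell^2=\text{const})$, which is exactly the setting where a K\"unneth / Leray--Hirsch argument applies.

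For part (i), on the smallest stratum $\bigcap_1^n S_{i,p}$ the equations force $x_1=\cdots=x_{n-1}=0$ and $x_n^2+\cdots+x_\ell^2=\ve$, so this is literally a complex affine quadric of dimension $d-n$ intersected with a ball; its reduced homology is concentrated in the middle degree $d-n$ and generated by the real vanishing sphere $\mathrm{vsphere}_\ve$ — this is the classical Milnor fibre computation and I would simply cite \cite{Pham1} (or the standard Milnor-ball argument) for it. For part (ii), I would prove by descending induction on $j$ the statement $H_*(B_p,\ (S_1\cup\cdots\cup S_j)_p\cap B)$ is $\Z$ in degree $d$ (generated by an appropriate partial vanishing cell) and $0$ otherwise, starting from $j=0$ where $(B_p,\emptyset)$ gives $H_*(B_p)=\Z$ in degree $0$ — wait, more precisely I would run the induction the other way, building the cell: the pair $(B_p, S_{n,p}\cap B)$ computes $H_d=\Z$ by the classical quadric-cell computation (a ball relative to a smooth affine quadric), and then each further union with a coordinate hyperplane $S_{i,p}$, $i<n$, is handled by the excision/Mayer--Vietoris argument showing that adjoining a transverse coordinate hyperplane shifts nothing in the relevant range and keeps the top group cyclic, because locally $S_{i,p}$ together with everything already present forms a normal crossings configuration transverse in the remaining directions. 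The generator in the top degree is tracked through these maps and is identified with $\mathrm{vcell}_\ve$ by its explicit description $x_i\ge 0$, $\ve-(x_n^2+\cdots+x_\ell^2)\ge 0$.

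For part (iii), I would use the long exact sequence of the pair $(B_p,\ B_p-S_p\cap B)$ together with Alexander--Lefschetz duality (or the Thom isomorphism for the normal crossings stratification), relating $H_j(B_p-S_p\cap B)$ to the relative homology $H_{2d-j}(B_p, S_p\cap B)$ computed in (ii), which immediately gives concentration in a single degree; the identification of the generator with the iterated tube $\mathrm{vcycle}_\ve=\tau_{1,\ve_1}^*\cdots\tau_{n-1,\ve_{n-1}}^*(\mathrm{vsphere}_\ve)$ follows because the tube operation $\tau_{i,\ve_i}^*$ is precisely the geometric (Gysin/coboundary) inverse to the intersection-with-$S_i$ map dual to the inclusion used in (i) and (ii), and the nested-$\ve_i$ condition is exactly what makes the iterated tube a cycle in the complement. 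Again I expect to invoke \cite{Pham1,Pham3} for the precise compatibility of these tube maps with the duality isomorphisms rather than reprove it.

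The main obstacle is the inductive step in (ii)--(iii): one must check that at each stage of peeling off a hyperplane $S_{i,p}$ the configuration of the already-assembled divisors restricted to $S_{i,p}$ is again normal crossings of the same combinatorial type in one fewer variable, so that the induction hypothesis applies, and — more delicately — that the connecting maps in the long exact sequences are isomorphisms in the relevant range rather than merely surjections or injections. This is where the Whitney / normal-crossings hypothesis on $S$ is really used, and where one has to be careful that the ``ball'' $B$ is chosen compatibly (a Milnor-type ball for the whole configuration, small enough that all the local product structures hold simultaneously); granting that, the vanishing of the off-degree homology and the cyclicity of the top group propagate formally, and the explicit chain-level descriptions of $\mathrm{vsphere}_\ve$, $\mathrm{vcell}_\ve$, $\mathrm{vcycle}_\ve$ pin down the generators.
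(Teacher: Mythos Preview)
The paper does not prove this theorem at all: it is stated as a computation due to Pham, introduced by the sentence ``The local structure of homology around $s_0$ are computed by Pham to be'' and followed immediately (with no proof environment) by the discussion of the duality pairing; the implicit reference is \cite{Pham1}, as for the next theorem. So there is no ``paper's own proof'' to compare against---your proposal goes well beyond what the paper does.

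As a sketch of the underlying argument your outline is essentially the one Pham gives: (i) is the classical Milnor fibre computation for the $A_1$ singularity on the smallest stratum; (ii) and (iii) are obtained by an induction on $n$ that peels off the coordinate hyperplanes $S_1,\dots,S_{n-1}$ one at a time, using at each step the Gysin/tube (Leray coboundary) isomorphism for a smooth divisor in a contractible ambient space. Two small cautions. First, your appeal to Alexander--Lefschetz duality in (iii) is morally right but the cleanest bookkeeping is exactly the iterated-tube argument you mention afterwards: each $\tau^*_{i,\ve_i}$ is an isomorphism $H_{*}(S_1\cap\cdots\cap S_i\,\setminus\,\text{rest})\to H_{*+1}(S_1\cap\cdots\cap S_{i-1}\,\setminus\,\text{rest})$, and chasing this through $n$ steps (including the tube off $S_n$, which your count seems to omit---check the degree: you need $n$ tubes total to pass from real dimension $d-n$ to $d$) both gives the concentration in degree $d$ and identifies the generator with $\mathrm{vcycle}_\ve$ in one stroke, without a separate duality argument. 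Second, in (ii) the induction base you want is not $(B_p,\emptyset)$ but $(B_p,S_{n,p}\cap B)$, which is a ball relative to a smooth affine quadric and already sits in top degree $d$; you noticed this mid-paragraph, and that corrected version is the right one. With those adjustments your plan matches Pham's proof in \cite{Pham1} and would be acceptable as a self-contained argument.
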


Local Poincar\'e duality yields a pairing
\eq{}{\langle\cdot,\cdot\rangle: H_d(B_p-S_p\cap B)\otimes H_{d}(B_p,(S_p\cap B)\cup \partial B_p) \to H_{2d}(B_p,\partial B_p) \cong \Z.
}
Let $\gamma$ be a simple closed path on $P$ based at $p$, supported in a neighborhood of $p_0$, and looping once around the divisor $t_1=0$. Using the isotopy lemma as discussed in section \ref{isotopy}, and assuming that $f$ is a submersion on strata except at the point $s_0$, Pham shows that the variation of monodromy
\eq{}{var:= \gamma_*-Id:H_*(M_p-S_p)\to H_*(M_p-S_p)
}
factors through excision and a local variation map $var_{loc}$
\ml{}{ H_d(M_p-S_p) \xrightarrow{excision} H_d(B_p-S_p\cap B_p,\partial B_p) \xrightarrow{var_{loc}}\\
 H_d(B_p-S_p) \xrightarrow{i_*} H_d(M_p-S_p).
}
(The variation map is zero in homological degrees $\neq d$.) 

The Picard-Lefschetz theorem in this setup is
\begin{thm}\label{pl_thm} We have
\ml{}{var_{loc}(\rm{excision}(c)) = \rm{excision}(c) \\
+(-1)^{(n+1)(n+2)/2}\langle \rm{excision}(c),\rm{vcell}_\ve\rangle \rm{vcycle}_\ve
}
\end{thm}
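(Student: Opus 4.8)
The plan is to localize with Thom's isotopy lemma, use the fact from Theorem~\ref{homology} that all the pertinent local homology groups are infinite cyclic, and thereby reduce the statement to the determination of a single integer together with its sign.

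First I would fix the shape of the answer. By the factorization $var=i_*\circ var_{loc}\circ\text{excision}$ from Section~\ref{isotopy}, the whole computation takes place in a small ball $B=B_p$ about $s_0$ in the fibre $M_p$, and $var_{loc}$ is the monodromy of a small loop around $\{t_1=0\}$. Theorem~\ref{homology} gives $H_d(B_p-S_p\cap B)=\Z\cdot\text{vcycle}_\ve$ and $H_d(B_p,S_p\cap B)=\Z\cdot\text{vcell}_\ve$ (and $\widetilde H_{d-n}(\bigcap_i S_{i,p}\cap B)=\Z\cdot\text{vsphere}_\ve$); one first checks, by Lefschetz duality or directly as in Pham, that the source $H_d(B_p-S_p\cap B_p,\partial B_p)$ of $var_{loc}$ is also infinite cyclic, so that $var_{loc}$ is determined by the image of one generator. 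The local Poincar\'e--Lefschetz pairing displayed above is perfect, and $\langle\text{vcycle}_\ve,\text{vcell}_\ve\rangle=\pm 1$: in the local coordinates the vanishing cycle (an iterated Leray tube over the real sphere $x_n^2+\cdots+x_\ell^2=\ve$) and the vanishing cell ($x_1,\dots,x_{n-1}\ge 0$ and $x_n^2+\cdots+x_\ell^2\le\ve$, all real) meet transversally in the single point at the centre of the ball in the quadratic variables. Hence $z\mapsto\langle z,\text{vcell}_\ve\rangle$ generates the $\Z$-dual of the source of $var_{loc}$, and any endomorphism that alters $z$ only by an element of $\Z\cdot\text{vcycle}_\ve$ is of the form $z\mapsto z+N\langle z,\text{vcell}_\ve\rangle\,\text{vcycle}_\ve$ for one integer $N$. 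It remains to check that $var_{loc}$ has this form and that $N=(-1)^{(n+1)(n+2)/2}$.

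For the shape, I would run the isotopy lemma for $S=T\setminus(T\cap B_\ve)$ over the full contractible neighbourhood of $p_0$, as in the commutative square of Section~\ref{isotopy}. This shows that, off the part supported at $s_0$, $var_{loc}$ agrees with the canonical comparison map from the relative to the absolute complement homology --- which is the term ``$\text{excision}(c)$'' on the right-hand side --- and that the difference is supported in $H_d(B_p-S_p)=\Z\cdot\text{vcycle}_\ve$; the perfect-pairing remark then pins it to $N\langle\,\cdot\,,\text{vcell}_\ve\rangle\,\text{vcycle}_\ve$. For the value of $N$, I would compute directly in the local model, separating the $n-1$ linear divisors $x_i=0$ from the $\ell-n+1$ quadratic coordinates $x_n,\dots,x_\ell$. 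Going once around $\{t_1=0\}$, the transformation of the vanishing sphere inside the smooth quadric $\{x_n^2+\cdots+x_\ell^2=\tau\}$, $\tau=t_1-x_1-\cdots-x_{n-1}$, is the classical Picard--Lefschetz transformation recalled above, contributing the sign $(-1)^{(\ell-n+1)(\ell-n+2)/2}$; the rest of the monodromy is realised by the tube operations $\tau_{1,\ve_1}^*,\dots,\tau_{n-1,\ve_{n-1}}^*$ entering the definition of $\text{vcycle}_\ve$, each contributing a further orientation sign from the normal circle bundle of $S_1\cap\cdots\cap S_i$ in $S_1\cap\cdots\cap S_{i-1}$. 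Assembling these signs with Pham's orientation conventions for $\text{vcell}_\ve$, $\text{vcycle}_\ve$ and for the identification $H_{2d}(B_p,\partial B_p)\cong\Z$ should collapse the total to $(-1)^{(n+1)(n+2)/2}$, while $N=\pm 1$ is forced by the single transversal intersection counted above.

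I expect the main obstacle to be precisely this last sign bookkeeping. Everything structural --- localization, the infinite cyclic groups, the perfect pairing, the form of $var_{loc}$ --- is formal once Theorem~\ref{homology} is available; but keeping the orientations coherent through the iterated Leray tube construction and reconciling them with the classical Picard--Lefschetz sign and with the chosen orientation of the top local homology is the delicate point, and it is exactly here that Pham's separation of the vanishing sphere, the vanishing cell and the vanishing cycle pays off. One caveat, already noted in the text, is that in the Feynman setting $f$ fails to be proper because of the blow-up at infinity; the statement here presumes properness, and the exotic strata on the exceptional divisor are not covered.
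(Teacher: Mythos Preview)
The paper does not actually prove this theorem; its entire proof reads ``See \cite{Pham1}, \cite{Pham2}.'' Your outline is essentially a sketch of Pham's argument, so in that sense you are aligned with what the paper invokes, only you have unpacked it rather than cited it.

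Your structural steps are sound: localization via the isotopy lemma, the identification of the relevant local homology groups as infinite cyclic from Theorem~\ref{homology}, the perfect local duality pairing, and the reduction to a single integer $N$ are exactly how Pham proceeds. Your claim that $\mathrm{vcycle}_\ve$ and $\mathrm{vcell}_\ve$ meet transversally in one point is correct and is what forces $N=\pm 1$.

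The only place where your sketch is genuinely incomplete is the one you flag yourself: the sign. You correctly isolate the classical Picard--Lefschetz contribution $(-1)^{(\ell-n+1)(\ell-n+2)/2}$ from the quadratic block, but the passage from this to the stated $(-1)^{(n+1)(n+2)/2}$ requires tracking the orientation through each of the $n-1$ tube maps and through the identification $H_{2d}(B_p,\partial B_p)\cong\Z$, and this is not a triviality --- Pham devotes several pages to it and the answer depends on his specific orientation conventions for $\mathrm{vcell}_\ve$ and $\mathrm{vcycle}_\ve$. Your phrase ``should collapse'' is honest but is not a proof; to make it one you would have to either reproduce Pham's bookkeeping or cite it precisely. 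Given that the paper itself is content to cite, your level of detail already exceeds what is on the page.

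Your closing caveat about properness and strata at infinity is well taken but is orthogonal to the statement as formulated, which is purely local and assumes the hypotheses needed for the isotopy lemma.
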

\begin{proof} See \cite{Pham1}, \cite{Pham2}. 
\end{proof}

\begin{ex}In this example we compute the Picard-Lefschetz transformation arising in physics. One has a set of quadrics $Q_i:f_i=0,\ 1\le i\le n$ indexed by the edges of a graph. Our normal crossings divisor in this case is $\bigcup Q_i$. We consider the situation locally around a pinch point $s_0$, and we write in Pham's coordinates, $Q_i: x_i=0,\ 1\le i\le n-1$ and $Q_n: t_1-(x_1+\cdots +x_{n-1} + \sum_n^\ell x_i^2)=0$. Notice that in these coordinates, the bad fibre over $p_0$ (where $t_1=0$) of $\bigcap_1^n Q_{i,p_0}$ has an isolated $\R$-point at the origin. We will see (proposition \ref{real}) in the case of {\it physical singularities} that this is also the case for the space-time coordinates. Hence it is plausible to assume that at least for physical singularities Pham's local coordinates are defined over $\R$. We are interested in computing the monodromy on $M-\bigcup Q_i$ of the chain given by taking space-time coordinates in $\R$. As it stands this makes no sense, because our quadrics are defined using the Minkowski metric so the $Q_i$ meet the real locus. The standard ploy to avoid this problem is to replace the defining equations $f_j=0$ by $f_j+ia_j=0$ with $1>>a_j>0$.  Using theorem \ref{pl_thm}, we need to compute $\langle \rm{excision}(\R^{Dg}),\rm{vcell_\ve}\rangle$, where $D$ is the dimension of space-time and $g$ is the number of loops of the graph. 

In Pham's local coordinates, the deformed quadrics are defined by $x_j=-ia_j,\ 1\le j\le n-1$ and $\ve + ia_n - (\sum_1^{n-1}x_j+\sum_n^{Dg} x_j^2)=0$. We let $x_j$ run over the path $x_j=(1-\rho_j)ia_j + \rho_j(\ve+ia_n)/(n-1),\ 1\le j\le n-1$. Here $0\le \rho_j\le 1$. For points $x^0=(x_1^0,\dotsc,x^0_{n-1})$ on those paths, the quadratic equation becomes $\sum_n^{Dg} x_j^2 = \ve + ia_n - \sum_1^{n-1}x_j^0=:r(x^0)e^{i\theta(x^0)}$. The locus
\eq{17}{\{(e^{i\theta(x^0)/2}u_n,\dotsc e^{i\theta(x^0)/2}u_{Dg})\ |\ \sum_{k=n}^{Dg}u^2_k \le r(x^0)\}
} 
is a solid sphere, and the union of those solid spheres as the $x_j$ run over those paths is the chain $\text{vcell}_\ve$ for the deformed quadrics. 

According to theorem \ref{pl_thm}, the multiplicity of the vanishing cycle in the variation of the cycle $\R^{Dg}$ (or more precisely of $\P^{Dg}(\R)$) around the given pinch point is the intersection of this real locus with $\text{vcell}_\ve$.

Note that $x_j$ is real only at the point $\rho_j=a_j/(a_j+a_n/(n-1))\in (0,1)$. At the point with these coordinates, the quadratic equation reads $\sum_n^{Dg} x_j^2 = R+ia_n$ for some $R\in \R$. since $a_n\neq 0$, the only real point on the corresponding solid sphere \eqref{17} lies at the origin. Thus, $\text{vcell}_\ve\cap \R^{Dg}$ is precisely one point. The intersection looks locally like the intersection $\R^N\cap e^{i\theta}\R^N \subset \C^N$, so it is transverse, and $$\langle\rm{excision}(\P^{Dg}(\R)),\rm{vcycle}\rangle = \pm 1.
$$ 
We do not try to compute the sign. 
\end{ex}

\subsection{Cutkosky Rules}

Cutkosky rules \cite{Todorov,IZ}, \cite{SMatrix}, give a formula for the variation of an amplitude integral around a threshold point in the space of external momenta. The formula as classically expressed (\cite{SMatrix}, formula (2.9.10)) is
\eq{18}{\text{var}(I) = (-2\pi i)^r\int\frac{\delta^+(q^2_1-m_1^2)\cdots \delta^+(q^2_r-m_r^2)d^{Dg}x}{(q^2_{r+1}-m_{r+1}^2)\cdots (q^2_N-m_N^2)}.
}
Here $I$ is the amplitude associated to a graph $G$ with $N$ edges and $g$ loops. The $q^2_i-m_i^2$ are propagators, and $\delta^+(q^2_i-m_i^2)$ means to take the residue of the original form $\omega := \frac{d^{Dg}x}{\prod_1^N(q^2_i-m_i^2)}$ along the divisor $q^2_i-m_i^2=0$ and then to integrate over the part of the real locus of that divisor where the energy is positive. External momenta are placed at a point near the threshold divisor associated to a pinch point of the intersection $\bigcap_{i=1}^r\{q^2_i-m_i^2=0\}$, and ``var'' (also called ``discontinuity'') refers to the variation of $I$ as external momenta wind around the threshold divisor. 

The calculus of Pham's vanishing cycles leads to a rigorous proof of the Cutkosky formula in cases when it holds. The crucial point to be understood is the role of the real structure. Note the real structure plays a central role in the physical narrative; $\delta^+(q^2_i-m_i^2)$ puts the edge ``on shell''. The issue is under what conditions \eqref{18} is the integral of the residue of $\omega$ over Pham's vanishing sphere on $\bigcap_{i=1}^r\{q^2_i-m_i^2=0\}$. In this section, we prove this in the basic case $r=N$ at a {\it physical pinch point}. 

For convenience we write $\ell_i = q^2_i-m_i^2$. We consider the universal quadric $\ell:= a_1\ell_1+\cdots +a_N\ell_N$ where the $a_i$ are variables. Let $Q\subset \C^N\times \C^{Dg}$ be the zeroes of the universal quadric. 
\begin{defn}\label{physical} A point $p=(c_1,\dotsc,c_N,p') \in Q(\R)$ is said to be a physical singularity if $p$ is a singular point on $Q$ and all the $c_i>0$. By extension, a point $p'\in \R^{Dg}$ is called a physical singularity if there exists a physical singularity of the form $p=(c_1,\dotsc,c_N,p')$. 
\end{defn}
\begin{lem}Let $p=(c_1,\dotsc,c_N,p')$ be a physical singularity of the universal quadric $Q$. Then $p'\in \bigcap_{i=1}^N \{\ell_i=0\}$ and $p'$ is singular on the intersection. 
\end{lem}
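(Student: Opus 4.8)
The plan is to exploit the special shape of the universal quadric: $\ell=\sum_{i=1}^{N}a_i\ell_i$ is linear in the auxiliary coordinates $a_i$ on $\C^N$ and quadratic in the space-time coordinates $x$ on $\C^{Dg}$, so the equations cutting out $\Sing(Q)$ decouple according to which group of variables one differentiates. First I would record the two groups of partials at $p=(c_1,\dots,c_N,p')$. Differentiating $\ell$ in the $a$-directions gives $\partial\ell/\partial a_i=\ell_i$ (a function of $x$ alone), so the hypothesis that $p$ is a singular point of $Q$ forces $\ell_i(p')=0$ for every $i$; this is exactly the first assertion $p'\in\bigcap_{i=1}^{N}\{\ell_i=0\}$, and it renders $\ell(p)=\sum_i c_i\ell_i(p')=0$ automatic. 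Differentiating $\ell$ in the $x$-directions gives $\partial\ell/\partial x=\sum_{i=1}^{N}a_i\,\partial\ell_i/\partial x$, so singularity of $Q$ at $p$ yields the relation
\[
\sum_{i=1}^{N} c_i\, d\ell_i(p')=0
\]
in the cotangent space $T^*_{p'}\C^{Dg}$.

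Now I would invoke that $p$ is a \emph{physical} singularity: since all $c_i>0$ we have $(c_1,\dots,c_N)\neq 0$, so the displayed identity is a \emph{nontrivial} linear dependence among the differentials $d\ell_1(p'),\dots,d\ell_N(p')$. Hence the Jacobian matrix of $(\ell_1,\dots,\ell_N)$ at $p'$ has rank at most $N-1$, i.e.\ the $N$ quadrics $\{\ell_i=0\}$ do not meet transversally at $p'$; this is precisely the statement that $p'$ is a singular point of their intersection $X:=\bigcap_{i=1}^N\{\ell_i=0\}$. (When $X$ is a complete intersection of codimension $N$, as in the generic situation relevant here, failure of transversality at $p'$ is equivalent to $p'\in\Sing(X)$, since then $\dim T_{p'}X\geq Dg-(N-1)>Dg-N=\dim X$.)

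Essentially all the content sits in this last step, and the one point that needs care is the use of positivity. Without it --- for instance at $c_1=\cdots=c_N=0$, which is a singular point of $Q$ lying over \emph{every} point of $X$ --- the conclusion fails, since such a $p'$ may be a perfectly smooth point of $X$. So the role of the physical hypothesis (in fact merely that the $c_i$ are not all zero) is exactly to promote ``$p$ is singular on $Q$'' to ``$p'$ is singular on $X$'', and I anticipate no genuine obstacle beyond fixing, as above, the precise meaning of ``singular on the intersection''.
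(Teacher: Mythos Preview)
Your argument is correct and follows essentially the same route as the paper: differentiate $\ell$ in the $a_i$-directions to get $\ell_i(p')=0$, then in the $x$-directions to obtain $\sum_i c_i\,\mathrm{grad}_{p'}(\ell_i)=0$, and conclude that $p'$ is singular on the intersection. You are in fact more explicit than the paper in invoking the physical hypothesis $c_i>0$ to guarantee the dependence among the gradients is nontrivial, and in spelling out the Jacobian-rank interpretation of ``singular on the intersection''; the paper's proof records the gradient relation and asserts singularity directly.
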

\begin{proof}By assumption, $p$ is singular on $Q$, so $\ell_i(p') = \frac{\partial}{\partial a_i}\ell(p) = 0$ and $p'$ lies in the intersection of the quadrics. Also, vanishing of the partials of $\ell$ with respect to coordinates on $\R^{Dg}$ yield the relation
\eq{19}{\sum_{i=1}^N c_i\text{grad}_{p'}(\ell_i) = 0
}
which implies that $p'$ is singular on the intersection of the quadrics. 
\end{proof}

The following is the analogue in this setup of Pham's pinchpoint condition \cite{Pham1}. 
\begin{defn} A physical singularity $p$ will be called non-degenerate if two conditions hold\nn
(i) The relation \eqref{19} among the gradients of the $\ell_i$ at $p'$ is the only non-trivial linear relation. \nn
(ii) The Hessian matrix at the singular point for the intersection of the quadrics is non-degenerate. 
\end{defn}

Condition (i) in the above definition means we can choose local coordinates $x_1,\dotsc,x_{N-1}$ at the singular point in such a way that the intersection of the quadrics is cut out locally near the singular point by 
\eq{20}{x_1=\cdots =x_{N-1}=g=0
}
for some function $g$. Condition (ii) says that the matrix of second order partials of $g|\{x_1=\cdots =x_{N-1}=0\}$ at the point $p'$ is non-degenerate. 

In order to understand the $\R$-structure we borrow a standard picture of a Morse function $f$, \cite{Morse}. 
$$\morse$$
The hessian of $f$ at the points $a, d$ is definite, while the hessian at $b,c$ has one positive and one negative eigenvalue. Note $f^{-1}f(a)=a$ and the fibre $f^{-1}(x)$ for $x$ slightly above $f(a)$ is a sphere (circle in this case) which contracts to a point as $x\to f(a)$. The picture is similar at $d$.  On the other hand $f^{-1}f(b)$ and $f^{-1}f(c)$ are  figure-eights and there are no vanishing cycles. In our case, if we know that the hessian of the function $g$ in \eqref{20} at the singular point is (either positive or negative) definite, then Pham's vanishing cycle will be the real sphere given to us by Morse theory.

\begin{prop}\label{real} Let $(c_1,\dotsc,c_N,p') \in Q$ be a non-degenerate physical singularity (definition \ref{physical}). Assume the squares of the masses $m_i^2>0$ for $1\le i\le N$. Then the hessian associated to $p' \in \bigcap_1^N \{\ell_i=0\}$ is negative definite. In particular, the vanishing cycle is real. 
\end{prop}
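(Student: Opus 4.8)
The plan is to convert the geometric claim into a statement about a single real symmetric matrix and then feed in the Lorentzian signature of the space-time form. Since the $\ell_i$ have real coefficients and $p'$ is a real point, the Hessian of each $\ell_i$ at $p'$ is a real symmetric bilinear form, and ``negative definite'' will mean negative definite on the real part of the tangent space to $\bigcap_i\{\ell_i=0\}$ at $p'$. First I would use non-degeneracy condition (i) to reorder the edges so that $d\ell_1,\dots,d\ell_{N-1}$ are linearly independent at $p'$. Then $W:=\{\ell_1=\cdots=\ell_{N-1}=0\}$ is smooth near $p'$ with tangent space $T:=\bigcap_{i=1}^{N-1}\ker(d\ell_i|_{p'})$, and the function $g$ of \eqref{20} is exactly $\ell_N|_W$. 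The relation \eqref{19} reads $c_N\,d\ell_N|_{p'}=-\sum_{i<N}c_i\,d\ell_i|_{p'}$; in particular $d\ell_N|_{p'}$ vanishes on $T$, so $T=\bigcap_{i=1}^N\ker(d\ell_i|_{p'})$, the point $p'$ is a critical point of $g$, and the Lagrange multipliers of $g=\ell_N|_W$ are exactly $-c_i/c_N$. The standard formula for the Hessian of a restricted function at a critical point then gives, for $v\in T$,
\[\mathrm{Hess}_{p'}(g)(v,v)=\mathrm{Hess}_{p'}(\ell_N)(v,v)+\frac1{c_N}\sum_{i<N}c_i\,\mathrm{Hess}_{p'}(\ell_i)(v,v)=\frac1{c_N}\sum_{i=1}^N c_i\,\mathrm{Hess}_{p'}(\ell_i)(v,v).\]

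Next I would compute the individual Hessians. Since $\ell_i=q_i^2-m_i^2$ with $q_i$ affine-linear in the loop momenta, writing $Q_i:=q_i(p')$ and $dq_i$ for the linear part of $q_i$, one has $\ell_i(p'+v)=2\,Q_i\cdot dq_i(v)+dq_i(v)\cdot dq_i(v)$, using $Q_i\cdot Q_i=m_i^2$ (valid because $\ell_i(p')=0$). Hence $d\ell_i|_{p'}(v)=2\,Q_i\cdot dq_i(v)$ and $\mathrm{Hess}_{p'}(\ell_i)(v,v)=2\,dq_i(v)\cdot dq_i(v)$, so $T=\{v\ :\ Q_i\cdot dq_i(v)=0,\ 1\le i\le N\}$ and
\[\mathrm{Hess}_{p'}(g)(v,v)=\frac2{c_N}\sum_{i=1}^N c_i\,\big(dq_i(v)\cdot dq_i(v)\big).\]

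The essential input is now a signature fact. Because $p'$ and the external momenta are real and $m_i^2>0$, each $Q_i$ is a real timelike vector for the Minkowski form (signature $(1,D-1)$), so its orthogonal complement $Q_i^{\perp}\subset\R^D$ is a spacelike hyperplane on which the form is negative definite. Thus for a real $v\in T$ the vector $dq_i(v)$ lies in $Q_i^{\perp}$, whence $dq_i(v)\cdot dq_i(v)\le 0$, with equality iff $dq_i(v)=0$. Since the $c_i$ are positive, $\mathrm{Hess}_{p'}(g)(v,v)\le 0$, and equality forces $dq_i(v)=0$ for every edge $i$. But the assignment $v\mapsto(dq_1(v),\dots,dq_N(v))$ maps $\C^{Dg}$ isomorphically onto the cycle space $H_1(G)\otimes\C^D$ inside the edge space $\bigoplus_{e\in E}\C^D$, hence is injective, so $v=0$. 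Therefore the Hessian is negative definite. (In a Euclidean signature the identical argument would give positive definiteness.) Finally, definiteness of the Hessian puts us in exactly the Morse situation of the picture preceding the statement: near $p'$ the bad fibre of $g$ is an isolated real point and the nearby real fibres are honest spheres collapsing onto it, so Pham's vanishing sphere is supported on the real locus --- which is the assertion that the vanishing cycle is real.

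I expect the bulk of the work to be bookkeeping: pinning down the restricted-Hessian identity and, above all, recognising that the Lagrange multipliers are --- up to the harmless overall factor $1/c_N$ --- the coefficients $c_i$ of the universal quadric, so that the positivity $c_i>0$ built into a \emph{physical} singularity is precisely what the argument consumes and \emph{weak} physicality would not suffice. One should also record at the outset the reality of the $Q_i$, i.e.\ that at a physical singularity $p'$ is a real common zero of the $\ell_i$ with real external data; this is exactly where ``physical'' rather than ``weakly physical'' is used. It is also worth noting that the computation shows non-degeneracy condition (ii) is automatic under the stated hypotheses: once $m_i^2>0$ the Hessian is not merely non-degenerate but definite.
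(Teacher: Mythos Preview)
Your proof is correct and follows the same strategy as the paper: identify the restricted Hessian on the common tangent space $T=\bigcap_i\ker d\ell_i|_{p'}$ with a positive multiple of $\sum_i c_i\,dq_i(v)^2$, use $m_i^2>0$ to make each $Q_i=q_i(p')$ timelike so that $dq_i(v)\in Q_i^{\perp}$ is spacelike for real $v\in T$, and conclude strict negativity from the injectivity of $v\mapsto(dq_i(v))_i$. You are more explicit than the paper about the Lagrange-multiplier identity for the restricted Hessian (the paper simply asserts the quadratic form is $\sum c_iq_i^2$) and you use all $N$ tangent conditions uniformly where the paper singles out a loop basis $i\le g$, but the substance is identical.
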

\begin{proof}Order the edges $e_1,\dotsc,e_N$ such that $q_1,\dotsc,q_g$ give coordinates on the space of loops. We view $q_i$ for $i>g$ as a linear combination of $q_1,\dotsc,q_g$. (Note each $q_i = (q_i^1,\dotsc,q_i^D)$ is a $D$-tuple of coordinate functions.)  It will be convenient to shift to the more familiar physics viewpoint and think of $p=(c_1,\dotsc,c_N,q_1(p'),\dotsc,q_g(p'))\in \R^N\times \R^{Dg}$. We then have $\ell_i = (q_i-s_i)^2-m_i^2$ for some $s_i \in \R^D$. Because $p$ is singular on $Q$, the sum $\sum_1^N c_i\ell_i$ is pure quadratic when expanded about $p'$, with vanishing constant and linear terms. Notice that for all $i\le N$, the quadratic part of $\ell_i$ is $q_i^2$ which is a Minkowski square. The quadratic form associated with the hessian is then simply $\sum_1^N c_iq_i^2$. Of course, these are Minkowski squares so it is not possible to say anything about the sign of the form on $\R^{Dg}$. However, what we want is the sign of the form restricted to the intersection of the tangent spaces at $p'$ of the $\ell_i$. Among these are the tangent spaces at $p'$ to $\ell_i = q_i^2-m_i^2$ for $1\le i\le g$. The tangent space at $p'$ for such an $\ell_i$ is $q_i^{-1}(q_i(p')^\perp) \subset \R^{Dg}$. Here $q_i(p')^\perp \subset \R^D$. Note $q_i(p')^2=m_i^2>0$ for $1\le i\le g$. Since the Minkowski metric has sign $(1,-1,\dotsc,-1)$ on $\R^D$, we conclude that the sign of the metric on the intersection of these spaces in $R^{Dg}$ is negative definite. In particular, $\sum_1^N c_iq_i^2$ is certainly negative semi-definite on the intersection of these tangent spaces. The only way it could fail to be negative definite would be if there was a non-zero tangent vector $t$ such that $q_i(t)=0,\ 1\le i\le N$. But this is not possible since the $q_i,\ 1\le i\le g$ give coordinates. 
\end{proof}

\subsection{Cutkosky cuts}
We collect together in this section some basic remarks about Cutkosky cuts. $G$ will denote a connected graph with edge set $E$ and vertex set $V$. Each vertex $v$ carries an external momentum $p_v \in \C^D$ where $D$ is the dimension of space-time, and each edge $e$ has associated a mass $m_e\in \C$. Physically, of course, one is interested in real masses and real external momenta, but we will need to work with complex values as well. We view $\R^D$ as carrying a Minkowski metric with sign $(1,-1,\dotsc,-1)$. This quadratic form is extended in an obvious way to $\C^D$, and we write $p_v^2$ for the metric square of an external momentum $p_v\in \C^D$. 

The total momentum $p:= \{p_v\ |\ v\in V\}$ is subject to the conservation law $\sum_{v\in V} p_v=0$. Associated to each edge we associate an affine quadric as follows. The homology of the graph with coefficients in $\C^D$ is calculated by the homology sequence
\eq{}{0 \to H_1(G, \C^D) \to (\C^D)^E \xrightarrow{\partial} (\C^D)^{V,0} \to 0
}
where $\partial$ is the usual boundary map on topological chains, and $(\C^D)^{V,0}:=\ker((\C^D)^{V} \xrightarrow{\rm{sum}} \C^D)$. We view the total momentum $p\in (\C^D)^{V,0}$, and we consider the fibre $\partial^{-1}(p) \subset (\C^D)^E$ as an affine torsor under $H_1(G, \C^D)\cong \C^{Dg}$ where $g$ is the loop number of $G$. To each edge $e$ we associate first the evident projection $e^\vee: (\C^D)^E \to \C^D$, and then, by composing with the Minkowski form and restricting, a quadratic function $e^{\vee,2}$ on $\partial^{-1}(p)$. The propagator quadric
\eq{}{\ell_e:= e^{\vee,2}-m_e^2.
}
Let $Q_e:\ell_e=0$. If $m_e\neq 0$, $Q_e$ is a non-singular affine quadric in $\partial^{-1}(p)\cong \A^{Dg}$. We will be interested in subsets $E'\subset E$ of edges such that $\bigcap_{e\in E'} Q_e$ is singular. The following elementary observation is basic:
\begin{prop}Suppose $\bigcap_{e\in E'} Q_e$ is singular and the masses $m_e,\ e\in E'$ are non-zero. Then the graph $G' \subset G$ obtained by cutting the edges $e\in E'$ is not connected. (Note ``cutting" an edge $e$ means removing $e$ but not the vertices of $e$. The cut graph is permitted to have isolated vertices.)
\end{prop}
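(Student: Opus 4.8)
The plan is to argue by contradiction. Suppose the edges $E'$ all have nonzero mass and yet the cut graph $G'$ (obtained by deleting the edges in $E'$ but keeping all vertices) is still connected. I want to derive from this that $\bigcap_{e\in E'}Q_e$ is nonsingular, contradicting the hypothesis. The key observation is that a point $x\in \partial^{-1}(p)$ lying on the intersection is singular there precisely when the gradients $\mathrm{grad}_x(\ell_e)$, $e\in E'$, satisfy a nontrivial linear relation $\sum_{e\in E'}c_e\,\mathrm{grad}_x(\ell_e)=0$ with not all $c_e$ zero (this is the same computation as in the proof of Proposition \ref{real} and the lemma preceding Definition \ref{physical}). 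So it suffices to show that connectivity of $G'$ forces these gradients to be linearly independent.

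The main step is therefore to compute $\mathrm{grad}(\ell_e)$ in suitable coordinates on the torsor $\partial^{-1}(p)\cong \A^{Dg}$ and show independence. First I would choose a spanning tree $T$ of $G$; the loop momenta $k_f$ indexed by the edges $f\notin T$ give linear coordinates on $\partial^{-1}(p)$, and each edge momentum $q_e$ is an affine-linear function $q_e=\sum_{f\notin T}(\pm k_f)+(\text{const depending on }p)$ of these, with the coefficient of $k_f$ being $\pm1$ if $f$ lies in the unique loop through $f$ and $e$, and $0$ otherwise. Then $\ell_e=q_e^2-m_e^2$ and $\mathrm{grad}(\ell_e)=2q_e\cdot(\partial q_e/\partial k)$, i.e. the gradient direction is governed by the incidence vector $\eta_e\in\{0,\pm1\}^{\,E\setminus T}$ recording which non-tree edges appear in $q_e$. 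A relation $\sum_{e\in E'}c_e\,\mathrm{grad}_x(\ell_e)=0$ at a point $x$ where all $q_e(x)$ are nonzero null-ish vectors then pushes down (after pairing against coordinate directions) to a linear dependence among the vectors $\{q_e(x)\otimes \eta_e\}_{e\in E'}$; using that the $q_e(x)$ are "generic" (nonzero, and one can arrange them not all proportional, which is automatic at a point of the smooth locus of a generic $Q_e$) this forces a dependence among the $\eta_e$ themselves, equivalently a relation $\sum_{e\in E'}c_e\,\eta_e=0$. But $\{\eta_e\}_{e\in E'}$ independent is exactly the statement that no subset of $E'$ supports a cycle of $G$ — indeed a dependence $\sum c_e\eta_e=0$ is a $1$-cycle supported on $E'$. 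Finally, connectivity of $G'=G\setminus E'$ means $G/(G\setminus E')$ — the graph with the edges of $G'$ contracted — has first Betti number equal to that of $G$ restricted appropriately; concretely, if $G'$ is connected then $E'$ contains no cycle of $G$, because any cycle meeting $E'$ could be completed through the connected complement... wait, that's not quite it: I should instead say that if $E'$ supported a cycle, contracting $G'$ would leave a loop, but the point is cleaner stated as: $E'$ contains a cycle of $G$ iff removing $E'$ does not drop the number of independent cycles by $|E'|$, and the precise bridge is that $G'$ connected $\iff$ the edges of $E'$ form part of a spanning tree modulo $G'$, hence are independent in $H_1$. I would spell this out via the exact sequence $0\to H_1(G\setminus E')\to H_1(G)\to (\text{quotient})$.

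Actually, the cleanest route is: $G'$ connected means we can choose a spanning tree $T'$ of $G$ with $E'\subset T'$ — extend a spanning tree of $G'$ (which exists since $G'$ connected and has vertex set $V$) by edges of $E'$; since adding $E'$ keeps things a tree iff it creates no cycle, and connectivity plus the count $|V|-1$ edges forces exactly this. With $E'$ inside a spanning tree, the incidence vectors $\eta_e$, $e\in E'$, are visibly linearly independent (tree edges are independent in the cycle-space pairing), which via the computation above contradicts singularity of $\bigcap_{e\in E'}Q_e$. The main obstacle I anticipate is the passage from "the $\mathrm{grad}_x(\ell_e)$ are dependent" to "the $\eta_e$ are dependent": this requires controlling the $q_e(x)$, and one has to be a little careful because a priori $x$ is some unknown singular point and the $q_e(x)$ could be special. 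I expect this is handled exactly as in Proposition \ref{real}: the singularity forces $\sum c_e q_e(x)\otimes\eta_e=0$ as an element of $\C^D\otimes\C^{E\setminus T}$, and since the $\eta_e$ for $e$ in a spanning tree are part of a basis of $\C^{E\setminus T}$... no — that tensor identity with independent $\eta_e$ forces each $c_e q_e(x)=0$, hence each $c_e=0$ since $q_e(x)\neq 0$ (as $Q_e$ is smooth, $m_e\neq0$ guarantees $q_e(x)$ is a nonzero null vector). That is the crux and it closes the argument.
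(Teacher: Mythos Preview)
Your overall strategy is right and, once fixed, coincides with the paper's proof; but the ``cleanest route'' paragraph contains a genuine error that, as written, breaks the argument. You claim that if $G'=G\setminus E'$ is connected then one can choose a spanning tree $T'$ of $G$ with $E'\subset T'$. This is backwards. Since $G'$ is connected and has vertex set $V(G)$, any spanning tree of $G'$ is already a spanning tree of $G$; adding any edge of $E'$ to it creates a cycle. What connectivity of $G'$ actually buys is a spanning tree $T$ of $G$ with $E'\cap T=\emptyset$, i.e.\ $E'\subset E\setminus T$. (Concretely: for $G=C_n$ a single cycle and $E'$ any two edges, $G\setminus E'$ is disconnected, yet every spanning tree of $C_n$ omits exactly one edge, so $E'$ can never sit inside a spanning tree. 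Your claimed implication fails in both directions.)

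With this correction the rest of your computation goes through cleanly and is essentially the paper's argument. Choosing $T\subset G'$ a spanning tree, the loop momenta $k_f$, $f\in E\setminus T$, coordinatize $\partial^{-1}(p)$, and for $e\in E'\subset E\setminus T$ one has simply $q_e=k_e+\text{(const)}$, so your incidence vector $\eta_e$ is the standard basis vector $\delta_e\in\C^{E\setminus T}$. These are trivially independent, and then your tensor identity $\sum_e c_e\,q_e(x)\otimes\eta_e=0$ forces each $c_eq_e(x)=0$, hence $c_e=0$ since $q_e(x)^2=m_e^2\neq 0$. (Minor slip: $q_e(x)$ is \emph{not} a null vector on $Q_e$; it satisfies $q_e(x)^2=m_e^2\neq 0$, which is precisely why it is nonzero.) The paper phrases the same step as: $G\setminus E'$ connected iff $\prod_{e\in E'}e^\vee:H_1(G,\C^D)\to(\C^D)^{E'}$ is surjective, in which case the $e^\vee$ form part of a coordinate system on $\partial^{-1}(p)$ and the intersection of the corresponding quadrics is transverse. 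Your $\eta_e$ computation is exactly this surjectivity, unpacked.
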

\begin{proof}Cutting an edge of $G$ removes a $1$-simplex from $G$ so it increases the Euler characteristic $\chi(G) = h^0(G)-h^1(G)$ by $1$. Writing $G-e$ for the cut graph, we see there are two (mutually exclusive) possibilities. either $h_1(G-e)=h_1(G)-1$ or $h_0(G-e)=h_0(G)+1$. It is straightforward to check that $h_1(G-e)=h_1(G)-1$ if and only if the composition
\eq{}{ H_1(G,\C^D) \subset (\C^D)^E \xrightarrow{e^\vee} \C^D
}
is surjective. (Note the kernel of this arrow is $H_1(G-e,\C^D)$.)
When we iterate the argument, we see that $G-\bigcup_{e\in E'} e$ is connected if and only if $\prod_{e\in E'} e^\vee: H_1(G,\C^D) \to (\C^D)^{E'}$ is surjective. But when this happens, the $e^\vee$ give a (partial) set of coordinates on $\partial^{-1}(p)$, and it is clear that the intersection over $e\in E'$ of the quadrics $Q_e$ cannot be singular. 
\end{proof}

Note, however, that the converse is not true. A series of cuts which disconnects a graph does not necessarily correspond to a singular intersection of quadrics. Given $E'\subset E(G)$ such that cutting edges in $E'$ disconnects the graph, one obtains a family of intersections of quadrics parametrized by external momenta. Typically, there will be a discriminant, or threshold, divisor in the space of external momenta such that the intersections become singular for momenta on the divisor. The classical Picard-Lefschetz theory (as reworked by F. Pham, \cite{Pham1}) calculates the monodromy when external momenta wind around that discriminant path in a generic way. In the case of a one parameter family parametrized by a disk with a singular intersection at $t=0$, generic means that the singular intersection of propagator quadrics has an isolated ordinary double point and no other singularity. 

This isolated double point condition has an important consequence for the graph $G$. 
\begin{prop}Let $E'' \subset E(G)$ and assume $G-\bigcup_{e\in E''} e$ is disconnected. If the loop number $h_1(G-\bigcup_{e\in E''} e)>0$, then $\bigcap_{e\in E''}Q_e$ cannot have an isolated singularity.
\end{prop}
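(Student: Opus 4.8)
The plan is to extract from the hypothesis $h_1\!\left(G-\bigcup_{e\in E''}e\right)>0$ a one‑parameter group of translations of the ambient affine space which preserves $X:=\bigcap_{e\in E''}Q_e$, and then to observe that the orbit of any singular point of $X$ is a whole line of singular points, so that $X$ cannot have an isolated singularity. Recall from the exact sequence $0\to H_1(G,\C^D)\to(\C^D)^E\xrightarrow{\partial}(\C^D)^{V,0}\to 0$ that $X$ lives in the affine torsor $\partial^{-1}(p)\cong\A^{Dg}$ under $H_1(G,\C^D)$. Iterating the one‑edge identity used in the proof of the preceding proposition, namely $H_1(G-e,\C^D)=\ker\!\big(e^\vee|_{H_1(G,\C^D)}\big)$, gives $H_1\!\big(G-\bigcup_{e\in E''}e,\,\C^D\big)=\bigcap_{e\in E''}\ker\!\big(e^\vee|_{H_1(G,\C^D)}\big)$. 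Since the loop number of the cut graph is positive, this common kernel is nonzero, so I fix $v\in H_1(G,\C^D)$ with $v\neq 0$ and $e^\vee(v)=0$ for every $e\in E''$.

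Next I would verify that translation by $tv$, for $t\in\C$, is an automorphism of $\partial^{-1}(p)$ carrying $X$ to itself. This is immediate: on $\partial^{-1}(p)$ each $e^\vee$ is affine‑linear with linear part the restriction of $e^\vee\colon(\C^D)^E\to\C^D$, so $e^\vee(x+tv)=e^\vee(x)+t\,e^\vee(v)=e^\vee(x)$ whenever $e\in E''$; hence $\ell_e=e^{\vee,2}-m_e^2$ is invariant, and therefore so is each $Q_e$ and their intersection $X$. Since $v\neq 0$, this $\C$‑action is free, so every orbit inside $X$ is a line $\A^1\hookrightarrow X$.

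Finally, because $x\mapsto x+tv$ is an automorphism of the ambient affine space preserving $X$, it preserves $\Sing(X)$; thus if $x_0\in\Sing(X)$ then the entire line $x_0+\C v$ lies in $\Sing(X)$, so $\Sing(X)$, if nonempty, has dimension at least $1$. This is incompatible with $X$ having an isolated singularity (in the sense of the isolated ordinary double point relevant to the Picard--Lefschetz discussion above); if instead $X$ is smooth or empty, the conclusion is vacuous. I do not anticipate a real obstacle: the only points needing care are the iterated identification $H_1\!\big(G-\bigcup E'',\C^D\big)=\bigcap_e\ker(e^\vee|_{H_1(G,\C^D)})$, which is merely the one‑edge statement of the preceding proposition applied repeatedly, and the bookkeeping that a positive‑dimensional singular locus is exactly what rules out an isolated singularity. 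Note that the argument in fact uses only positivity of the loop number of the cut graph, neither its disconnectedness nor non‑vanishing of the masses.
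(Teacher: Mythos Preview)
Your proposal is correct and is essentially the paper's own argument, written out in more detail. The paper observes that the propagator quadrics for edges in $E''$ are constant on the fibres of $H_1(G,\C^D)\to H_1(G'',\C^D)$, these fibres being copies of $H_1(G',\C^D)$; your translation action by a nonzero $v\in\bigcap_{e\in E''}\ker\big(e^\vee|_{H_1(G,\C^D)}\big)=H_1(G',\C^D)$ is exactly this invariance, and your conclusion that $\Sing(X)$ then contains a line is precisely what the paper means by ``cannot have an isolated singularity.''
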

\begin{proof}Let $G' = G-\bigcup_{e\in E''} e \subset G$ and define $G'' = \bigcup_{e\in E''}e$. We view $G''$ as a quotient $G \surj G''$ obtained by contracting the edges in $G'$. The propagator quadrics associated to edges in $E''$ are constant on the fibres of $H_1(G,\C^D) \to H_1(G'', \C^D)$. These fibres are identified with $H_1(G', \C^D)$, so if the loop number of $G'$ is greater than zero, the intersection of these quadrics cannot have an isolated singularity. 
\end{proof}

\subsection{Cutkosky's Theorem}
We can now formulate Cutkosky's theorem. We assume given a graph $G$ and a subgraph $G'\subset G$. We write $G'':= G/\!/G'$ for the graph obtained by contracting the connected components of $G'$ to (separate) points. We assume $G''$ has no self-loops.  We identify the edge set of $G''$ as a subset of the edges of $G$, $E'' \subset E$. 

For the proof of Cutkosky's theorem below, we suggest that the reader first consider the case when $H_1(G')=(0)$, so no loops are contracted. In this case, the integration chain $C$, \eqref{26}, is a sphere and the argument is much simpler. We have given the argument in the general case assuming the integral  
\eqref{integrand} is conveniently renormalized.

\begin{thm}[Cutkosky]\label{ct} Assume the quotient graph $G''$ has a non-degenerate physical singularity (definition \ref{physical}) at an external momentum point $p'' \in (\bigoplus_{V''}\R^D)^0$, i.e. the intersection $\bigcap_{e\in E''}Q_e$ of the propagator quadrics associated to edges in $E''$ has such a singularity at a point lying over $p''$. Let $p \in (\bigoplus_{V}\R^D)^0$ be an external momentum point for $G$ lying over $p''$. Then the variation of the amplitude $I(G)$ around $p$ is given by Cutkosky's formula
\eq{}{\text{var}(I(G)) = (-2\pi i)^{\# E''}\int\frac{\prod_{e\in E''} \delta^+(\ell_e)}{\prod_{e\in E'} \ell_e}.
}
\end{thm}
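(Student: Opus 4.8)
The plan is to compute $\text{var}(I(G))$ by reducing, via Thom's first isotopy lemma, to Pham's local Picard--Lefschetz formula (Theorem \ref{pl_thm}) at the pinch point, and then to identify the local answer with the right-hand side using the reality statement of Proposition \ref{real}. It is cleanest to treat first the case $H_1(G')=(0)$, where no loops are contracted: then $\bigcap_{e\in E''}Q_e$ has a genuine isolated ordinary double point over $p''$ and no renormalization is needed; the general case is obtained from it by integrating out the loop momenta of $G'$ first, which is where a convenient renormalization scheme enters.

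\textbf{Globalization and reduction to a local problem.} Fix the compactification $M$ and the proper map $f\colon M\to P=\P((\C^D)^{V,0}\oplus\C\mu)$ of the Whitney pre-stratification section, together with the divisor $Q=\bigcup_{e\in E}Q_e$ stratified as described there. By genericity of the threshold point we may assume the pinch point $s_0$ over $p''$ lies off $Q_e$ for $e\in E'$, so that near $s_0$ the only divisors present are the $Q_e$, $e\in E''$; the non-degeneracy conditions (i) and (ii) together with Proposition \ref{real} then put this configuration into Pham's normal form, with the local coordinates realizable over $\R$ because the relevant Hessian is (negative) definite. I would then choose a small disk $\Delta\hookrightarrow P$ through $p$ meeting the threshold divisor $\Sigma$ transversally and only at $p$, pull everything back over $\Delta$, and apply Thom's isotopy lemma exactly as in section \ref{isotopy} (granting the submersion-on-strata hypothesis and setting aside the behaviour at infinity, as flagged there). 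This shows that $\text{var}(I(G))$ is computed entirely by the local variation map $var_{loc}$ at $s_0$.

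\textbf{Local computation.} For general $G'$, use the exact sequence $0\to\bigoplus_iH_1(G'_i)\to H_1(G)\to H_1(G'')\to 0$ and Lemma \ref{cccuts} to split the fibre coordinates into the $H_1(G'')$-directions, which carry the isolated double point of $\bigcap_{e\in E''}Q_e$, and the $H_1(G')$-directions, along which that intersection is a product. Integrating the $G'$-loop momenta out first replaces $I(G)$ by $\int_{\P^{Dg''}(\R)}\Phi_R(G')\,d^{Dg''}x''/\prod_{e\in E''}\ell_e$, an integral over the $G''$-fibre whose cut locus now has an isolated pinch point, and Pham's formula applies verbatim: the variation of the class of $\P^{Dg''}(\R)$ is $(-1)^{(n+1)(n+2)/2}\langle\text{excision}(\P^{Dg''}(\R)),\text{vcell}\rangle\,\text{vcycle}$ with $n=\#E''$, and the intersection number is $\pm 1$ by the transversality computation in the Example. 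Converting the iterated-tube vanishing cycle back to the vanishing sphere by $\#E''$ residue (tube) operations produces $(2\pi i)^{\#E''}$, while Proposition \ref{real} identifies the vanishing sphere with the component of the real locus of $\bigcap_{e\in E''}Q_e$ cut out by the forward-energy conditions, i.e.\ with $\prod_{e\in E''}\delta^+(\ell_e)$. Re-expanding $\Phi_R(G')$ as the renormalized $G'$-loop integral of $\prod_{e\in E'}1/\ell_e$ yields Cutkosky's formula, with the overall sign left undetermined as in the Example. (When $H_1(G')=(0)$ the splitting and the sub-integral $\Phi_R(G')$ are trivial and this is just the classical argument with the integration chain a sphere.)

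\textbf{Main obstacle.} The delicate point is the reduction for general $G'$: Pham's theory presupposes an isolated pinch point, whereas as soon as $H_1(G')\neq 0$ the singular locus of $\bigcap_{e\in E''}Q_e$ is a whole copy of $H_1(G',\C^D)$. Integrating out the $G'$-loops first is meant to cure this, but one must check that the renormalized sub-integrand defines a family over the $G''$-fibre to which Thom's lemma still applies, and that the loop integration commutes with the residue/tube operations and with taking the variation (no $G'$-subthreshold being crossed near $\Sigma$); making this precise is exactly why the general statement is made subject to the existence of a convenient renormalization scheme. A secondary, genuinely open point is the structure of the strata of $Q$ at infinity, where properness of $f$ and the submersion hypothesis have not been verified and are simply taken for granted, as in the pre-stratification section.
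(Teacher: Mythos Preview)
Your overall strategy is sound and the $H_1(G')=0$ case matches the paper. For general $G'$, however, you and the paper take different routes to the same reduction.

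The paper does \emph{not} integrate out the $G'$-loop momenta first. It works directly with the integration chain $C=\pi_E^{-1}(\mathrm{vsphere}_{\varepsilon''})\cap\partial^{-1}(\varepsilon)$, which is fibred over the vanishing sphere with fibres that are torsors under $H'=H_1(G',\R^D)$, and applies the isotopy lemma to the projection $\pi_{U,E}\colon U\to U''$ (from a neighbourhood in $(\C^D)^E$ to one in $(\C^D)^{E''}$) to obtain a stratified trivialization $U\cong F\times U''$. The crucial technical step---absent from your sketch---is to arrange this trivialization to be \emph{compatible with the real structure}, by enlarging the stratification to include $U(\R)$. Once that is done, monodromy carries chains of the form $F(\R)\times c$ to $F(\R)\times(\text{monodromy of }c)$, so Pham's formula on $U''$ lifts verbatim to $U$ and integration over $C$ computes the variation. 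Fubini enters only afterwards, to split the integrand.

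Your route---Fubini first to produce $\Phi_R(G')$, then Pham on the resulting $G''$-integral---is a legitimate alternative, and the obstacle you correctly flag (commutation of variation with the sub-integration, absence of $G'$-thresholds nearby) is essentially the same hypothesis the paper imposes via its renormalization assumption on the compactified fibres of $\pi_{U,E}$. What your route costs is that Pham's local theory, as stated, applies to rational integrands with poles on the $Q_e$; once you insert the transcendental coefficient $\Phi_R(G')$ you must argue separately that a factor holomorphic near the pinch point leaves the Picard--Lefschetz computation unchanged. This is true (the variation is a statement about the chain, and pairing with any closed form respects it), but it is an extra step that the paper's more geometric argument avoids by never leaving the algebraic setting. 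In return, your approach makes the factorization underlying Eq.~\eqref{Upsilon} and Lemma~\ref{cccuts} more transparent.
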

\begin{proof} As we have seen in proposition \ref{real},  near $p''$ we can find a real vanishing sphere $\rm{vsphere_{\ve''}}$ which is the real locus underlying $\prod_{e\in E''} \delta^+(\ell_e)$. Consider the diagram 
\eq{25}{\begin{CD} @. 0 @. 0 @. 0 \\
@. @VVV @VVV @VVV \\
0 @>>> H' @>>> H @>>> H'' @>>> 0 \\
@. @VVV @VVV @VVV \\
0 @>>> (\R^D)^{E'}@>>> (\R^D)^{E}@>\pi_E >> (\R^D)^{E''} @>>> 0 \\
@. @VVV @VV\partial V @VV\partial'' V \\
0 @>>> (\R^D)^{V',0,0} @>>> (\R^D)^{V,0} @>\pi_V >> (\R^D)^{V'',0} @>>> 0 \\
@. @VVV @VVV @VVV \\
@. 0 @. 0 @. 0
\end{CD}
}
Here $H', H, H''$ are the first homology groups of $G', G, G''$ with $\R^D$-coefficients. The superscript $(V',0,0)$ on the lower left means the coefficients of the vertices sum to $0$ over each connected component of $G'$. We are given $\ve \in (\R^D)^{V,0}$, and $\ve'' = \pi_V(\ve)$ is near a threshold point for the quotient graph $G''$ corresponding to a physical singularity (definition \ref{physical}), so the vanishing sphere \eqref{vs} $\rm{vsphere}_{\ve''}\subset \partial''{}^{-1}(\ve'')\cap\bigcap_{e\in E''} Q_e(\R)$ is defined. We take as integration chain (although plausible, some justification is needed here. See below.) 
\eq{26}{C=\pi_E^{-1}(\rm{vsphere}_{\ve''} \cap \partial^{-1}(\ve)).
}
Note $C$ is fibred over $\rm{vsphere}_{\ve''}$ with fibres $H'$. We integrate over $C$ using Fubini. The integrand can be decomposed
\eq{integrand}{\Big(\frac{d^{Dg'}x'}{\prod_{E'}f_e}\Big)\wedge\pi_E^*\Big(\frac{d^{Dg''}x''}{\prod_{E''}f_e}\Big)
}
In this process, the propagators corresponding to edges in $E''$ are left with masses in $\R$ (as they must be in order to contain a real sphere), while propagators from edges in $E'$ have $i\ve$ added to the masses. As a consequence, the poles in the integrand associated to propagators in $E'$ meet the chain of integration only (possibly) at infinity. These singularities may need to be renormalized. 

Finally, to justify our choice of integration chain, we have to show that the Picard-Lefschetz-Pham formula for the variation lifts to the cone with fibres which are $H'$-torsors minus the propagator quadrics associated to edges of $G'$. Let $p'' \in (\R^D)^{V'',0} \subset (\C^D)^{V'',0}$ be a point on a threshold divisor corresponding to a non-degeneralte physical singularity. Let $b'' \in \bigcap_{e\in E''} Q_e(\R)$ be the corresponding singular point, so $\partial''(b'')=p''$. Now fix neighborhoods $b''\in B'' \subset (\C^D)^{E''}$ and $p'' \in P'' \subset (\C^D)^{V'',0}$ as in section \ref{isotopy}. I.e. we assume that the isotopy lemma applies in the complement of $B''$ so monodromy around a small circle around $p''$ in $P''$ can be calculated inside $B''$.  

We fix a linear splitting $\sigma: (\C^D)^{V'',0} \to (\C^D)^{V,0}$ for the projection $\pi_V$ in \eqref{25}. We assume $\sigma$ is defined over $\R$. Let $P=\sigma(P'') \subset (\C^D)^{V,0}$. We want to compute the variation for a loop in $P$ winding around $p:= \sigma(p'')$. Since the local singularity comes from the edges $E''$ which pinch in $U'':= B''\cap \delta''{}^{-1}(P'')$, we will work in
\eq{}{U:= \pi_E^{-1}(U'')\cap \partial^{-1}(P) \subset (\C^D)^{E}. 
}
We have the diagram ($Q_e$ are propagator quadrics)
\eq{}{\begin{CD}U\cap\bigcap_{e\in E'} Q_e  @>>>U@>\pi_{U,E}>> U'' \\
@. @VV\partial V @VV\partial'' V \\
@. P @>\cong >> P'' 
\end{CD}
}

Here again we need to apply our renormalization hypothesis to compactify the fibres of $\pi_{U,E}$ in such a way that the isotopy lemma applies. In particular, we need that the projection map $\pi_{U,E}$ is submersive on the strata of $U\cap\bigcap_{e\in E'} Q_e$. Granting this, we apply the isotopy lemma. Since $U''$ is contractible, we deduce a stratified isomorphism of spaces over $U''$
\eq{30}{U \cong F\times U'';\quad F:= \pi_{U,E}^{-1} (b''). 
}
We will need a certain compatibility with the $\R$-structure. Since the masses of propagators in $E'$ are not real, we see that  for $\beta''\in U''(\R)$, $\pi_{U,E}^{-1} (\beta'')(\R)$ does not meet any of the $Q_e,\ e\in E'$. Said another way, $U(\R) = (U-\bigcap_{e\in E'}Q_e)(\R)$. We want our stratified isomorphism to preserve the real subset. This is possible since we can enlarge the stratification including $U(\R)$. 

 Since $b''$ is an $\R$-point, $F$ is defined over $\R$, and $F(\R)$ is a torsor for $H_1(G', \R^D)$. 

The variation is calculated by replacing $U''$ with $W'':= U''-U''\cap \bigcap_{e\in E''} Q_e$ and deforming $F(\R)\times (W''(\R)\cap\partial''{}^{-1}(\ve''))$ as $e^{i\theta}\ve''$ loops around $p''$. (Here $\ve''\in P''(\R)$ and $e^{i\theta}\ve''$ is intended to suggest a circle around $p''$.) The variation, of course, does not preserve the $\R$-structure. We know by Pham, that winding around returns $W''(\R)\cap\partial''{}^{-1}(\ve'')$ to a chain homologous to  $W''(\R)\cap\partial''{}^{-1}(\ve'') + K\cdot\rm{vcycle}$ where $\rm{vcycle}$ is an iterated tube over $\rm{vsphere}$ and $K$ is constant.  Upstairs in $U$ we identify $U\cong F\times U''$ using \eqref{30} and map chains $c$ on $U''$ to $F(\R)\times c$ on $F\times U'' \cong U$. The problem is that is is hard to say what the corresponding chains on $U$ are. However, compatibility with the $\R$-structure implies that if $c$ is supported on $U''(\R)$ then the chain on $U$ can be taken to be $\pi_{U,E}^{-1}(c)(\R)$. In particular, if $c=\rm{vsphere} \subset \delta''{}^{-1}(\ve'')(\R)$, then the corresponding chain on $U$ is exactly the chain $C$ in \eqref{26}. Since monodromy and taking iterated tubes clearly commute with taking the product with $F(\R)$ (because for this we are working on $F\times U''$; this is the beauty of the isotopy lemma) we conclude that integration over $C$ calculates the variation.  

\end{proof}
\section{Variation and Hodge Weights}

One would like to better understand the distinction between the variation associated to a minimal (or Cutkosky) cut, i.e. a cut which is minimal separating the graph into $2$ pieces, and an anomalous cut which involves cutting more than the minimal set of edges. A simple remark which could be useful for such a study is that the Hodge weight of the vanishing cycles decrease as more edges are cut. 

We consider a collection of quadrics $Q_{e,p}:=\{\ell_{e,p}=0\} \subset \P^{Dg}$ for $e\in E(G)$, the edge set of our graph $G$. Here $p$ denotes external momenta. Write
\eq{}{X_p:=\bigcup_{e\in E} Q_{e,p}.
}
Let $\omega_p = \frac{dx^{Dg}}{\prod_e \ell_{e,p}}$. It may happen that $\omega_p$ has a pole on the hyperplane $H_\infty$ at infinity , in which case we replace $X_p$ by $X_p\cup H_\infty$. In this way, the amplitude becomes a period for the mixed Hodge structure $H^{Dg}(\P^{Dg}-X_p,\Q)$. We write $U_p:=\P^{Dg}-X_p$ and we vary $p$ in the complement of the threshold divisors in the space of external momenta. 

The localization sequence for Betti cohomology yields a long-exact sequence
\eq{}{H^{Dg}(\P^{Dg}) \to H^{Dg}(U_p) \to H_{Dg-1}(X_p)(-Dg) \to H^{Dg+1}(\P^{Dg})
}
We abreviate this 
\eq{}{H^{Dg}(U_p)^0(Dg) \cong H_{Dg-1}(X_p)^0
}
Now we cut edges $e_1,\dotsc,e_r$. We assume the external momentum is near a threshold divisor for this cut, so we may identify a vanishing sphere $\text{vsphere}(p)\in H_{Dg-r}(Q_{e_1,p}\cap \cdots \cap Q_{e_r,p})$. For $p$ sufficiently close to the threshold divisor, the vanishing cycle will be supported in the smooth locus of  $Q_{e_1,p}\cap \cdots \cap Q_{e_r,p}$. Writing $Z\to Q_{e_1,p}\cap \cdots \cap Q_{e_r,p}$ for some resolution of singularities, we may assume the vanishing sphere comes from a class in $H_{Dg-r}(Z)$. It follows that 
\eq{}{\text{vsphere}(p)\in W_{r-Dg}H_{Dg-r}(Q_{e_1,p}\cap \cdots \cap Q_{e_r,p})
}
where $W_{r-Dg}$ denotes the subspace of Hodge weight $r-Dg$. Note the weight filtration is increasing, and $r-Dg$ is the smallest non-trivial weight. 

The iterated tube $\text{vcycle}(p) \in H_{Dg}(U_p)(-r)$. To understand in Hodge-theoretic terms the passage from $\text{vsphere}(p)$ to $\text{vcycle}(p)$ we proceed as follows. We fix $p$. The sphere is contained in the smooth locus of $Q_{e_1,p}\cap \cdots \cap Q_{e_r,p}$. Blowing up on $\P^{Dg}$ away from  $\text{vsphere}(p)$ we can suppose that the $X_i:= Q_{e_i,p}$ are smooth divisors on a smooth variety $P$, and that the $X_i$ meet transversally. Suppose for example $r=2$, and write $Y=X_1\cap X_2$. Since $X_1-Y\inj P-X_2$ is a smooth divisor, we can consider the tube maps dual to residue in cohomology
\ml{}{H_{i-2}(Y) \xrightarrow{\text{tube}} H_{i+1}(X_1-Y)(1) \\\xrightarrow{\text{tube}} H_{i+2}((P-X_2)-X_1)(2) = H_{i+2}(P-(X_2\cup X_1))(2).
}
It follows that
\eq{}{\text{vcycle}(p) \in W_{r-Dg}\Big(H_{Dg}(U_p)(r)\Big) \cong  W_{-r-Dg}H_{Dg}(U_p).
}
We conclude
\begin{prop} The variation associated with cutting $r$ edges maps 
\eq{}{H_{Dg}(U_p) \xrightarrow{var} W_{-r-Dg}H_{Dg}(U_p).
}
Note that as we cut more edges, $r$ increases and $W_{-r-Dg}$ decreases. (The weight filtration is increasing, so $W_{-s} \subset W_{-r}$ for $s>r$.)
\end{prop}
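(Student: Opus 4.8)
The plan is to track the Hodge weight of the vanishing cycle through the three-step chain of constructions: (1) the vanishing sphere lives on the intersection $Y_p := Q_{e_1,p}\cap\cdots\cap Q_{e_r,p}$; (2) the iterated tube operation raises it to a class in $H_{Dg}(U_p)$; (3) each tube map shifts the weight by a controlled amount. Step (1) is essentially already done in the text: since the sphere is supported in the smooth locus of $Y_p$, it pulls back from a resolution $Z \to Y_p$, and because $\dim_\C Z = Dg-r$, the fundamental-type class $\mathrm{vsphere}(p)$ sits in the bottom weight piece $W_{r-Dg}H_{Dg-r}(Y_p)$. (Here I use that for a smooth projective variety of dimension $d$, $H_{2d}$ has weight $-2d$, hence any class of degree $Dg-r$ that is algebraic/spherical lands in the lowest weight $r-Dg$.)

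For step (2) I would set up the iterated tube (Leray coboundary) maps explicitly. After blowing up $\P^{Dg}$ away from the support of $\mathrm{vsphere}(p)$, I may assume the $X_i := Q_{e_i,p}$ are smooth divisors meeting transversally on a smooth projective $P$. Then for each $j$, $X_1\cap\cdots\cap X_j$ minus $X_{j+1}$ is a smooth divisor in the smooth variety $P - (X_{j+1}\cup\cdots)$, and the tube map
\eq{}{\mathrm{tube}_j : H_i\big(X_1\cap\cdots\cap X_j \setminus X_{j+1}\big) \to H_{i+1}\big(X_1\cap\cdots\cap X_{j+1}^c \text{ complement}\big)(1)
}
is a morphism of mixed Hodge structures (the dual of the residue/Gysin map, which is a MHS morphism). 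Composing $r$ such tube maps, starting from $H_{Dg-r}(Y_p)$ and ending in $H_{Dg}(P - \bigcup_e X_e) = H_{Dg}(U_p)$, each step raises homological degree by $1$ and twists by $(1)$. So the composite sends $W_{r-Dg}H_{Dg-r}(Y_p)$ into $W_{r-Dg}\big(H_{Dg}(U_p)(r)\big)$. Untwisting, $W_{r-Dg}(M(r)) = W_{r-Dg-2r}(M) = W_{-r-Dg}(M)$, which gives exactly $\mathrm{vcycle}(p) \in W_{-r-Dg}H_{Dg}(U_p)$. Since the variation map is, by Picard--Lefschetz--Pham (Theorem \ref{pl_thm}), a multiple of the iterated-tube-of-the-sphere construction composed with excision, and excision is a MHS morphism, the image of $var$ lands in $W_{-r-Dg}H_{Dg}(U_p)$. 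The final monotonicity remark is immediate: $W$ is increasing, so cutting more edges (larger $r$) forces the image into a smaller subspace.

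The main obstacle, in my view, is making step (2) genuinely rigorous rather than formal: one must check that after the auxiliary blowups the divisor configuration really is normal crossings in a neighborhood of the support of the cycle, that the tube maps are defined on the relevant (co)homology groups with the correct Hodge-theoretic functoriality, and — most delicately — that the iterated tube of the \emph{vanishing sphere} (a non-algebraic, merely topological cycle) still inherits the weight bound coming from the algebraic class on $Z$ that it represents. This last point is where the compatibility between Pham's topological tube construction and the analytic/algebraic tube (Leray coboundary) must be invoked; once one knows they agree, the weight estimate propagates. A secondary point to handle carefully is the possible pole of $\omega_p$ at infinity, which forces the replacement $X_p \leadsto X_p \cup H_\infty$; one must ensure that adding $H_\infty$ does not disturb the weight filtration on the relevant graded pieces, which follows because $H_\infty$ meets the configuration transversally away from the vanishing cycle after the blowups.
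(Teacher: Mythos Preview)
Your proposal is correct and follows essentially the same route as the paper: the proposition is the culmination of the preceding discussion, which (i) places $\mathrm{vsphere}(p)$ in the lowest weight $W_{r-Dg}H_{Dg-r}$ via a resolution of the intersection, (ii) blows up to reduce to a normal-crossings configuration and uses that the iterated tube maps (duals of residue) are morphisms of mixed Hodge structures carrying an overall twist by $(r)$, and (iii) untwists to land $\mathrm{vcycle}(p)$ in $W_{-r-Dg}H_{Dg}(U_p)$, with Picard--Lefschetz--Pham identifying the image of $var$ as the span of $\mathrm{vcycle}(p)$. Your write-up is, if anything, more explicit than the paper about the technical caveats (compatibility of Pham's topological tube with the algebraic Leray coboundary, the role of $H_\infty$), but the argument is the same.
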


\section{Monodromy vs Dispersion}

Now let $(\Gamma,T)$ be given, and with it a corresponding sequence of forests $F$, cut graphs $\Gamma^F$,
reduced graphs $\Gamma_F$, and sets of graphs $\Gamma_i\equiv \Gamma_i(F)\in G^F$. We let $F_2$ be the 2-forest in this sequence of forests.

The corresponding reduced graph $\Gamma_{F_2}$ is a $b_k$ multiple edge, $k={e_{\Gamma_{F_2}}}$ with $|\Gamma_{F_2}|=k-1$.
$\epsilon_2$ is a cut through $k$ edges. $\Gamma_{F_2}\cap T$ is a single edge, the first edge of $T$ in the ordering of edges of $T$.

If and only if those $\Gamma_i$ contain loops the functions $\Phi_R(\Gamma_i)$ are multivalued and have monodromy in the variation
of their external momenta. 

\begin{remark}
Monodromies from varying only the channel variable $s$ associated to $(\Gamma,T)$, $s\in Q_{\Gamma_{F_2}}\subseteq Q_\Gamma$
(see (\ref{qgamma}))
are determined from threshold divisors $s_{j-2}(\epsilon_j)$.
\end{remark}

\begin{remark}
All edges in $e$ of the reduced graph $\Gamma_{F_2}$ such that $e\cap T=\emptyset$ can be assigned a marking $x(1)\in\mathbb{Z}$.
This  can be used to store the sheet on which we evaluate multivalued functions arising from loop integrals.
Iterating this for all subgraphs $\Gamma_i$ which contain loops, we get a hierarchy of markings $x(i)$ for all edges not in $T$.
By  Thm.(\ref{homology}) we have $x(i)\in \mathbb{Z}$. If all cuts $\epsilon_k$ are two-particle cuts ($e_{\Gamma_{F_i}}-e_{\Gamma_{F_{i-1}}}=2=e_{\Gamma_{F_2}}$), every such edge has a different marking $x(i)$, $i=1,\ldots,|\Gamma|$. Else, the marking is degenerate. A comparison to the familiar marking by generators of the fundamental group of a graph (as in Outer Space) will be given in future work. The marking used here stores the homology and thus multi-valuedness assigned to functions $\Phi_R(\Gamma_F)$, $\Phi_R(\Gamma_i)$. This information is implicitly stored in $(M_i^\Gamma)_{rs}$. 
\end{remark}

Let us now indicate how the matrix looks in the case of a $4\times 4$ matrix.
We also indicate how Cutkosky's thorem and dispersion act between the matrix entries. 
The spanning tree $T$ has three edges, and spanning forests are given by indicating the edges to be removed.

Here is $M^\Gamma_i  =$
\[
\left(
\begin{array}{ccccccc}
1 & & 0 & & 0  & & 0\\
\uparrow\pi & & \uparrow\pi & & & & \\
\Upsilon^{T_2}_{\Gamma_2} & \rightleftharpoons_\mathrm{disp}^\mathrm{Var} & \Upsilon^{T_2-e_1}_{\Gamma_2} & & 0 & & 0\\
\uparrow\pi & & \uparrow\pi & & \uparrow\pi & & \\
\Upsilon^{T_3}_{\Gamma_3} & \rightleftharpoons_\mathrm{disp}^\mathrm{Var} & \Upsilon^{T_3-e_1}_{\Gamma_3}  & \rightleftharpoons_\mathrm{disp}^\mathrm{Var} & \Upsilon^{T_3-e_1-e_2}_{\Gamma_3} & & 0\\
\uparrow\pi & & \uparrow\pi & & \uparrow\pi & & \uparrow\pi \\
\Upsilon^{T_4=T}_{\Gamma_4=\Gamma} & \rightleftharpoons_\mathrm{disp}^\mathrm{Var} & \Upsilon^{T_4-e_1}_{\Gamma_4} & \rightleftharpoons_\mathrm{disp}^\mathrm{Var} & \Upsilon^{T_4-e_1-e_2}_{\Gamma_4} & \rightleftharpoons_\mathrm{disp}^\mathrm{Var} & \Upsilon^{T_4-e_1-e_2-e_3}_{\Gamma_4}
\end{array}\right).
\]
By construction, along the diagonal Cutkosky's thorem gives the variation associated to leading singularities.
This plus dispersion provided by a Hilbert transform wrt the channel variable defined by a suitable multiple edge $b_j, \Gamma_{F_i}/b_j=\Gamma_{F_{i-1}}$,
with $j=e_{\Gamma_{F_i}}-e_{\Gamma_{F_{i-1}}}$ 
determines the first subdiagonal.
Having thus determined the first subdiagonal, we can iterate to determine the whole matrix.

So in a row, we go to the right by computing the variation with the help of Cutkosky's theorem. We go to the left by the Hilbert transform. The entry from above is needed to fix the normalization in this dispersion integral.

Vertical arrows correspond to a
projection which gives the integrals over uncut edges as a fibration. 

At the end, starting from variations for leading thresholds, this systematic decomposition of graphs allows to reconstruct a graph from its variations through Hilbert transforms and the determination of anomalous thresholds.

We emphasize that a given graph $\Gamma$ gives many such matrices, one for each possibility to shrink its internal edges until all vertices unify in a single vertex. To understand the physical thresholds of a graph requires to analyse 
all these matrices.

\section{Dispersion}
Cutkosky's theorem allows us to find the variation associated to physical thresholds.
This variation can be associated to a pair $(\Gamma,T)$ with associated channel variable $s$. 

If there is a lowest finite real threshold $s_k>-\infty$, then for  $s<s_k$, we have real analyticity in this channel, with all other kinematical variables fixed:
\be
\Phi_R(\Gamma)(s)=\Phi_R^\star(\Gamma)(s^\star),\, s<s_k.
\ee
We conclude that for $s>s_k$ near the real axis
\be 
\Re(\Phi_R(\Gamma)(s))=\Re(\Phi_R(\Gamma)(s^\star))\ee
 and
\be 
\Im(\Phi_R(\Gamma)(s)))=-\Im(\Phi_R(\Gamma)(s^\star)).
\ee
It follows that the dispersion integral, when it exists, is an integral of the variation from $s_k$ to $+\infty$ along the real axis.

Depending on the superficial degree of divergence of a graph the form to be integrated for renormalized Feynman amplitudes has the structure
\be
\frac{\Phi_\Gamma^k\ln\frac{\Phi_\Gamma}{\Phi^0_\Gamma}}{\psi^j_\Gamma}\Omega_\Gamma, k\geq 0,
\ee
or
\be
\frac{\Phi_\Gamma^k}{\psi^j_\Gamma}\Omega_\Gamma, k<0.
\ee
We focus on the case $\frac{\ln\frac{\Phi_\Gamma}{\Phi^0_\Gamma}}{\psi^2_\Gamma}\Omega_\Gamma$, the other cases being similar.

The crucial input is the ratio $\Phi_\Gamma/\Phi^0_\Gamma$ of second Symanzik polynomials as  functions on $Q_\Gamma$ and the chosen renormalization point $Q_0\in Q_\Gamma$ which we take Euclidean or at least such that  $\Phi_\Gamma^0<0$ is negative definite.

Positivity of $\psi$ leaves the logarithm $\ln \Phi_\Gamma$ as the source of imaginary parts. We hence focus on the domain where its argument is smaller than zero.

Using the Heavyside $\Theta$-function ($\Theta(x)=1,x>0,\Theta(x)=0,x<0$), we get an imaginary part from a contribution
\[
\int_{\mathbb{P}^\Gamma} \frac{\Theta(\Phi_\Gamma)}{\psi_\gamma^2}\Omega_\Gamma.
\] 
\begin{remark}
This argument can be generalized in a straightforward manner to cover the case of a finite forest sum over graphs, or other degrees of superficial divergence than logarithmic. The crucial fact that the the imaginary part
of a graph amplitude is supported on the zero locus of the second Symanzik polynomial remains valid.  
\end{remark}

We use $\psi_\Gamma=\psi_{\Gamma/e}+t_\gamma R^\Gamma_\gamma$, for $\gamma$ some multiple edge $b_k$\footnote{If $\gamma$ is a 1-edge $b_1$, $t_\gamma=A_e, R^\Gamma_e=\psi(\Gamma-e)$}.
Here and in the following $\mathbb{P}^\Gamma$ indicates a positive real  integration chain given as the interior of the simplex $\sigma_\Gamma$ (see (\ref{simplex})).
 
A partial integration 
leaves us to consider $Var(\Gamma)=U_\Gamma^\emptyset+U_{\Gamma/\gamma}^\gamma$, with
\[
U_\Gamma^\emptyset:=\int_{\mathbb{P}^{e_{\Gamma}-2}}\int_0^\infty 
\frac{\Theta^\prime(\Phi_\Gamma)}{\psi_\gamma R^\Gamma_e}
dt_\gamma\wedge \Omega_{\Gamma/\gamma}\wedge\Omega_\gamma.
\] 
The boundary term is
\[
U_{\Gamma/\gamma}^\gamma=\int_{\mathbb{P}^{e_{\Gamma}-2}}\frac{\Theta(\Phi_{\Gamma/\gamma})}{\psi_{\Gamma/\gamma}R^\Gamma_\gamma}\Omega_{\Gamma/\gamma}\wedge \Omega_\gamma.
\] 
\begin{remark}
These boundary terms iterate along the partial integration of the first Symanzik polynomial $\psi$. Linear reducibility for this polynomial 
\cite{Brown,Panzer} comes in crucially. Note that minors of graphs $\Gamma$ are precisely what populates the boundary cells in the cubical chain complex.
\end{remark}

We have $\Phi_\Gamma=-Z(t_\gamma-t_1)(t_\gamma-t_2)$, with $t_{1/2}=Y\pm\sqrt{D}$
and $D=Y^2+XZ$ as before.

We find
\be
U^\emptyset_\Gamma=
\int_{\mathbb{P}^{e_{\Gamma}-2}}\int_0^\infty
\sum_{i=1}^2  \frac{\Phi^\prime_\Gamma\delta(t_\gamma-t_i)}{|\Phi^\prime_\Gamma(t_\gamma=t_i)|\psi_\Gamma R^\Gamma_\gamma}
dt_\gamma\wedge \Omega_{\Gamma/\gamma}\wedge\Omega_\gamma 
\ee
and therefore
\beas
U^\emptyset_\Gamma & = &  \int_{\mathbb{P}^{e_{\Gamma}-2}}\int_0^\infty
 \Theta(t_1)\overbrace{ \frac{\Phi^\prime_\Gamma(t_\gamma=t_1)}{|\Phi^\prime_\Gamma(t_\gamma=t_1)|(\psi_{\Gamma/e}+t_1 R^\Gamma_e) R^\Gamma_e}}^{=:\omega(t_1)}dt_\gamma\wedge \Omega_{\Gamma/\gamma}\wedge\Omega_\gamma 
\\
 & &  + \int_{\mathbb{P}^{e_{\Gamma}-2}}\int_0^\infty \Theta(t_2) \overbrace{\frac{\Phi^\prime_\Gamma(t_\gamma=t_2)}{|\Phi^\prime_\Gamma(t_\gamma=t_2)|(\psi_{\Gamma/e}+t_2 R^\Gamma_e) R^\Gamma_e}}^{=:\omega(t_2)}
 dt_\gamma\wedge \Omega_{\Gamma/\gamma}\wedge\Omega_\gamma 
. 
\eeas

Now $\Phi_\Gamma(t_\gamma=t_1)=+\sqrt{D}$ and
$\Phi_\Gamma(t_\gamma=t_2)=-\sqrt{D}$.

In the region were both $t_i>0$ we hence find
\[
U^\emptyset_\Gamma=
2 \int_{\mathbb{P}^{e_{\Gamma}-2}} \frac{\sqrt{D}}{(\psi_{\Gamma/e}+YR^\Gamma_e)^2-D(R^\Gamma_e)^2}
\Omega_{\Gamma/\gamma}\wedge\Omega_\gamma.
\] 
If only $t_1>0$
we find
\[
U^\emptyset_\Gamma  =    \int_{\mathbb{P}^{e_{\Gamma}-2}} 
\frac{\sqrt{D}}{|\Phi^\prime_\Gamma(A_e=t_1)|(\psi_{\Gamma/e}+t_1 R^\Gamma_e) R^\Gamma_e}
\Omega_{\Gamma/\gamma}\wedge\Omega_\gamma.
\]

Next, define the two domains $\omega^D_s,\omega^N_s\subseteq \mathbb{P}^{\Gamma/e}$ by
\beas
\omega^D_s & = & \{p\in \mathbb{P}^{\Gamma/e}| D(p,s)>0 \},\\
\omega^N_s & = & \{p\in \mathbb{P}^{\Gamma/e}| 4XZ(p,s)>0 \}.\\
\eeas
Here, $D$ and $4XZ$ are regarded as functions of the edge variables $A_e$ and the channel $s$, for all other kinematical variables and masses fixed and given.
Note that for $4XZ>0,D>0$ we have $Y<\sqrt{D}$ and for $4XZ<0,D>0$ we have $Y>\sqrt{D}$.

Then, $U^\emptyset_\Gamma=U_1^\emptyset(s)+U_2^\emptyset(s)$ with
\be
U_1^\emptyset(s):=\int_{\omega^D_s\cap \omega^N_s} \omega(t_1), 
\ee
and
\be
U_2^\emptyset(s):=\int_{\omega^D_s-(\omega^D_s\cap \omega^N_s)} (\omega(t_1)+\omega(t_2)). 
\ee
The dispersion integral is then obtained by bringing in all boundary terms iteratively
\beas
\Re(\Phi_R(\Gamma)) & = & \int_{s_k}^{s_{k-1}} \frac{U_2^\emptyset(s)}{(x-s)}ds
+\int_{s_0}^{\infty} \frac{U_1^\emptyset(s)}{(x-s)}ds\\
 & & +
\int_{s_{k-1}}^{s_{k-2}} \frac{U_2^\gamma(s)}{(x-s)}ds
+\int_{s_{k-1}}^{\infty} \frac{U_1^\gamma(s)}{(x-s)}ds\\
 & & +\cdots\\
  & & +\int_{s_{1}}^{\infty} \frac{U_1^{\gamma,\cdots}(s)}{(x-s)}ds.\\
\eeas

\section{Examples}
\subsection{One-loop bubble}
For the bubble, a double edge $\gamma=b_2(a,b)$ on two edges $e_1,e_2$ between two vertices $a,b$ with momenta $p_a=-p_b$
and channel $s=p_a^2$ with $\epsilon_2$ the cut separating $a$ and $b$ the only cut, 
there are two single lower triangle two-by-two matrix $M^\gamma_i$, $i\in 1,2$, depending on which of the two edges we take as the spanning tree $t$. 

For example
$$
M^\gamma_1=\bubblematrix
$$
and the three entries populate a cubical complex which is simply an interval $[0,1]$,
with the endpoint at zero associated to $(M^\gamma_1)_{11}$, the endpoint at $1$ to 
$(M^\gamma_1)_{22}$ and the interval $]0,1[$ to $(M^\gamma_1)_{21}$.

The second Symanzik polynomial is
\be
\Phi_\gamma(s,m_1^2,m_2^2)=s^2A_1A_2-(m_1^2A_1+m_2^2A_2)(A_1+A_2). 
\ee 
We also have $\psi_\gamma=A_1+A_2$, $\mathbb{P}_\gamma=\mathbb{P}^1$, $\Omega_\gamma=
A_1dA_2-A_2dA_1$ and the renormalized integral is
\be\label{bubbleint}
\Phi_R(\gamma)(s,s_0,m_1^2,m_2^2)=\int_{\mathbb{P}_\gamma} 
\frac{\ln \frac{\Phi(s,m_1^2,m_2^2)}{\Phi(s_0,m_1^2,m_2^2)}}{(A_1+A_2)^2}
\Omega_\gamma .
\ee
We choose $s_0$ such that $\Phi(s_0,m_1^2,m_2^2)<0$ for all $A_1,A_2>0$.
This means $s_0<(m_1+m_2)^2$. In particular, we distinguish the two matrices $M^\gamma_i$ my choosing either $m_1^2$ or $m_2^2$ for the renormalization point $s_0$:
\be
\Phi_R^1=\Phi_R(s,m_1^2,m_1^2,m_2^2),\,\Phi_R^2=\Phi_R(s,m_2^2,m_1^2,m_2^2).
\ee

The integral in Eq.(\ref{bubbleint})  can be easily integrated. Instead of doing it projectively, we can set say $A_1=1$ and integrate $A_2$ over the positive reals. 

Note that the second Symanzik polynomial is then quadratic in $A_2$, and its discriminant $\lambda$ determines the vanishing cell to be the real strip defined by the discriminant, \[-\sqrt{\lambda}\to +\sqrt{\lambda}.\]
We find
\be\label{bubble}
\Phi_R(s,s_0,m_1^2,m_2^2)= \frac{1}{4\pi^2}\frac{\sqrt{\lambda}}{2s}
\ln\frac{x+\sqrt{\lambda}}{x-\sqrt{\lambda}}-\frac{m_1^2-m_2^2}{2s}\ln\frac{m_1^2}{m_2^2}- (s\to s_0),
\ee
where $\lambda=\lambda(s,m_1^2,m_2^2)$ and $\lambda(a,b,c)\equiv a^2+b^2+c^2-2(ab+bc+ca)$ is the K\r{a}llen function, and $x=m_1^2+m_2^2-s$.

$\Phi_R(s,s_0,m_1^2,m_2^2)\equiv\Upsilon^{(a,b)}_\gamma$ (see (\ref{Upsilon}) and note that vertices $a,b$ form a 2-forest for the spanning  tree $T$ given by one of the two edges of the graph $\gamma$) has a variation
\bea\label{bubbleimag}
\Upsilon^{(a,b)}_\gamma\equiv Var(\gamma)\equiv\Delta_2(s) & = &\nonumber\\
 & &  2 i \frac{\sqrt{\lambda}}{4\pi s}\Theta(s-(m_1+m_2)^2)\\
 & = & \int d^4 k \delta^+(k^2-m_1^2)\delta^+((k+q)^2-m_2^2)\nonumber. 
\eea
It hence allows for a dispersion 
\be
 \Phi_R(\gamma)(s,s_0,m_1^2,m_2^2)=\frac{s-s_0}{\pi}\int_{(m_1+m_2)^2}^\infty
\frac{\Delta_2(x)}{(x-s)(x-s_0)}dx.\ee
\begin{remark}
Whilst the real part in (\ref{bubble}) exists for $s=0$,   (\ref{bubbleimag}) immediately shows that there is a singularity at $s=0$, which is a secondary
singularity, whose presence indicates a (mass-independent)  pinching at infinity. Note that $s=\sqrt{\lambda(s,0,0)}$.
\end{remark}
We find our first two matrices $M^1_\gamma,M^2_\gamma$:
\be\label{matrixbubble}
M^1_\gamma=\left(
\begin{array}{cc}
1 &  0\\
\Phi_R^1 & \Delta_2(s) 
\end{array}\right),\,M^2_\gamma=\left(
\begin{array}{cc}
1 &  0\\
\Phi_R^2 & \Delta_2(s) 
\end{array}\right).
\ee
Here we use $\Upsilon^{(a,b)}_\One=1$.
\subsection{The triangle}
We consider the triangle graph $\Delta$. 
In fact, we augment it with one of its three possible spanning trees, say on edges $e_2,e_3$, so $E_T=\{e_2,e_3\}$. The corresponding cell in the cubical chain complex is
\be \label{trianglecell}
\trianglecubical
\ee

For the  Cutkosky cut we choose two of the three edges, say $\epsilon_2=\{e_1,e_2\}$. This defines the channel $s=p_a^2$ and the matrix $M^\Delta_1$.

The  other cut in that matrix is the full cut  separating all three vertices.

We give $M^\Delta_1$ in the following figure:
$$
M^\Delta_1=\trianglematrix
$$
We now calculate:
\beas
\Phi_\Delta & = & \overbrace{p_a^2A_1A_2-(m_1^2A_1+m_2^2A_1)(A_1+A_2)}^{=\Phi_{\Gamma/e_3}}\\
 & & +A_3((p_b^2-m_3^2-m_1^2)A_1+(p_c^2-m_1^2-m_3^2)A_2)-A_3^2m_3^2,
\eeas
so 
\beas 
\Phi_\Delta & = & \Phi_{\Delta/e_3}
  +A_3\Phi^{m_3^2}_{\Delta-e_3}-A_3^2 m_3^2\overbrace{\psi_{\Delta-e_1}}^{=1},
\eeas
as announced ($A_3=t_\gamma$):
\[
X=\Phi_{\Delta/e_3},\,Y=\overbrace{(p_b^2-m_3^2-m_1^2)}^{=:l_1}A_1+\overbrace{(p_c^2-m_1^2-m_3^2)}^{=:l_2}A_2,\,Z=m_3^2.
\]
We have $Y_0=m_2 l_1+m_1 l_2$, and need $Y_0>0$ for a Landau singularity.

Solving $\Phi(\Delta/e_3)=0$ for a Landau singularity determines
the familiar physical threshold in the $s=p_a^2$ channel, leading for the reduced graph to
\be
p_Q: s_0=(m_2+m_3)^2,\, p_A: A_1m_1=A_2m_2. 
\ee

We let $D=Y^2+4XZ$ be the discriminant. For a Landau singularity we need 
\[
D=0.
\]
We have
\be
\Phi_\Delta=-m_3^2\left(A_3-\frac{Y+\sqrt{D}}{2m_3^2}\right)\left(A_3-\frac{Y-\sqrt{D}}{2m_3^2}\right),
\ee
where $Y,D$ are functions of $A_1,A_2$ and $m_1^2,m_2^2,m_3^2,s,p_b^2,p_c^2$.

We can write
\[
0=D=Y^2+4Z(sA_1A_2-N),
\] 
with $N=(A_1m_1^2+A_2m_2^2)(A_1+A_2)$ $s$-independent.

This gives
\be 
s(A_1,A_2)=\frac{4ZN-(A_1l_1+A_2l_2)^2}{4ZA_1A_2}=:\frac{A_1}{A_2}\rho_1+\rho_0+\frac{A_2}{A_1}\rho_2.
\ee

Define two Kallen functions $\lambda_1=\lambda(p_b^2,m_1^2,m_3^2)$ and
$\lambda_2=\lambda(p_c^2,m_2^2,m_3^2)$. Both are real and non-zero off their threshold or pseudo-threshold.

Then, for 
\[
r:=\lambda_1/\lambda_2>0,
\]
we find the threshold $s_1$ at
\be 
s_1=\frac{4m_3^2(\sqrt{\lambda_1}m_1^2+\sqrt{\lambda_2}m_2^2)(\sqrt{\lambda_1}+\sqrt{\lambda_2})-(\sqrt{\lambda_1}l_2+\sqrt{\lambda_2}l_1)^2}{4m_3^2\sqrt{\lambda_1}\sqrt{\lambda_2}}.
\ee
This can also be written as 
\be 
s_1=(m_1+m_2)^2+\frac{4m_3^2(\sqrt{\lambda_2}m_1-\sqrt{\lambda_1}m_2)^2-(\sqrt{\lambda_1}l_2+\sqrt{\lambda_2}l_1)^2}{4m_3^2\sqrt{\lambda_1}\sqrt{\lambda_2}}.
\ee

On the other hand for $r<0$ and therefore the coefficients of $\rho_1$, $\rho_2$ above of different sign we find a minimum
\be s_1=-\infty,
\ee
along either $A_1=0$ or $A_2=0$.

The domains $\omega^D_s,\omega^N_s$ can be easily determined from the above and determine the full complexity of the triangle function. 

\begin{remark}
Let us now discuss the triangle in more detail.
It allows three spanning trees on two edges each, so we get six matrices $M^\Delta_i$, $i=1,\ldots,6$ altogether,
by having two possibilities to order the two edges for each spanning tree.

The six matrices $M^\Delta_i$ come in groups of two for each spanning tree.

For each of the three spanning trees we get a cell as in (\ref{trianglecell}).

The boundary operator for such a cell in the cubical cell complex of \cite{VogtmannHatcher} is the obvious one stemming from co-dimension one hypersurfaces at $0$ or $1$
with suitable signs. So the square populated by the triangle $\Delta$ in (\ref{trianglecell}) has four boundary components, the edges populated by the four graphs as indicated. Those four edges are the obvious boundary of the square. 

If we now consider all graphs in (\ref{trianglecell}) as evaluated by the Feynman rules, we can consider for a given cell a boundary operator which replaces evaluation at 
the $x_e=0$-hypersurface by shrinking edge $e$, and evaluation at the $x_e=1$-hypersurface by setting edge $e$ on the mass-shell.

Then, to check that this is a boundary operator for the amplitudes defined by the graphs in (\ref{trianglecell}) we need to check that the amplitudes for the  four graphs at the four corners are uniquely defined from the amplitudes of the graphs at the adjacent edges: for example, the imaginary part of the amplitude of the  graph on the left vertical edge is related to the amplitude of the graph at the upper left corner:
This imaginary part must be also obtained from  shinking  edge $e_3$ in the graph on the upper horizontal edge by setting $A_3$ to zero in the integrand and integrating over the hypersurface $A_3=0$ of $\sigma_\Delta$. This is indeed the case, and similar checks work for all other corners.

A proof that we have a cubical chain complex on the level of amplitudes and a detailed study will be given in future work.  
\end{remark}


\begin{thebibliography}{99}
\bibitem{Pham1} 
Pham, F., {\em Introduction \`a l'\'Etude Toipologique des Singularit\'es de Landau}, Gautheir-Villars (1967).
\bibitem{VogtmannHatcher} 
A.\ Hatcher, K.\ Vogtmann, {\em Rational Homology of $Aut(F_n)$}, Math.\ Research Lett.\ {\bf 5} (1998) 759-780.
\bibitem{SMatrix} 
{\em The Analytic S-Matrix},
R.J.\ Eden, P.V.\ Landshoff, D.I.\ Olive, J.C.\ Polkinghorne, Cambridge Univ.\ Press (2002).
\bibitem{BrownKreimer} 
F.\ Brown, D.\ Kreimer, 
{\em Angles, Scales and Parametric Renormalization},
Lett.Math.Phys.\ {\bf 103} (2013) 933-1007,
e-Print: arXiv:1112.1180 [hep-th].
\bibitem{Mather} 
Mather, J., {\em Notes on topological stability}, 
Harvard (1970) (available at {\tiny http://web.math.princeton.edu/facultypapers/mather/notes\_on\_topological\_stability.pdf)}
\bibitem{Pham2} 
Pham, F., {\em Singularities of integrals}, Universitext Springer (2005).
\bibitem{Pham3} Fotiadi, D., Froissart, M., Lascoux, J., Pham, F., {\em Applications of an Isotopy Theorem}, Topology, Vol. 4, Pergamon Press (1965), pp. 159-191.
\bibitem{Todorov}
Todorov, I.T., {\em Analytic Properties of Feynman Diagrams in Quantum Field Theory}, International Series of Monographs in Natural Philosophy 38, Pergamon Press (1971).
\bibitem{IZ} 
{\em Quantum Field Theory}, 
C.\ Itzykson, J.B.\ Zuber,
Dover (2006). 
\bibitem{Morse} 
 J. Milnor, {\em Morse Theory}, Annals of Math. Studies no. 51, Princeton University Press, (1963).
\bibitem{Brown} 
F.\ Brown,
{\em The Massless higher-loop two point function}, Comm.\ in Math.\ Physics  {\bf 287}, Number 3, (2009), 925-958.\\
{\em On the periods of some Feynman integrals}, arXiv:0910.0114v1, (2009), 1-69.
\bibitem{Panzer} 
E.\ Panzer,
{\em Feynman integrals and hyperlogarithms}, PhD thesis (Humboldt U.\ Berlin),
arXiv:1506.07243.
\end{thebibliography}
\end{document}